    \newcolumntype{L}{>{\raggedright\arraybackslash}X}
\newtheorem{lemma}{Lemma}
\newtheorem*{theorem*}{Theorem}
\newtheorem{example}{Example}
\theoremstyle{definition}
\newtheorem{definition}{Definition}
\newtheorem*{principle*}{Principle}
\newtheorem{observation}{Observation}
\definecolor{js}{rgb}{0.578125,0.23828125,0.9609375}
\definecolor{purple}{rgb}{0.578125,0.23828125,0.9609375}
\definecolor{green}{rgb}{0.4470588235,0.968627451,0.6117647059}
\definecolor{Next}{rgb}{0,0.56,1}
\definecolor{ATTENZIONE}{rgb}{1,0.1,0.1}
\definecolor{dacompletare}{rgb}{0.945,0.749,0.274}
\newcommand{\id}{\mathbb{I}}
\newcommand{\tr}{\mathrm{Tr}} 
\newcommand{\lland}{\wedge}
\newcommand{\llor}{\vee}
\newcommand{\TTT}{\textbf{T}}
\newcommand{\FFF}{\textbf{F}}
\newcommand{\AAA}{\textbf{A}}
\newcommand{\NNN}{\textbf{N}}
\newcommand{\uncert}{\mathcal{X}}
\newcommand{\indisti}{\uncert}
\newcommand{\indisc}{\uncert}
\newcommand{\inacc}{\uncert}
\newcommand{\SMUB}{\tilde{S}}
\newcommand{\mus}[1]{\vec{p}_{\frac{1}{#1}}}
\newcommand{\musq}[1]{\vec{s}_{\frac{1}{#1}}}
\newcommand{\musr}[1]{\vec{r}_{\frac{1}{#1}}}
\newcommand{\AR}{amount of relation}
\newcommand{\REL}{\tilde{\mathcal{R}}}
\newcommand{\lerifloor}[1]{\left\lfloor{#1}\right\rfloor}
\newtcolorbox{BoxWT_green}[1]{sharp corners, colback=green!5!white,colframe=green!75!black,fonttitle=\bfseries,title=#1,size=small}
\newtcolorbox{BoxWT_purple}[1]{sharp corners,colback=purple!5!white,colframe=purple!75!black,fonttitle=\bfseries,title=#1,size=small}
\def\@fnsymbol#1{\ensuremath{\ifcase#1\or \dagger\or *\or \ddagger\or
   \mathsection\or \mathparagraph\or \|\or **\or \dagger\dagger
   \or \ddagger\ddagger \else\@ctrerr\fi}}
\newcommand*{\mySpecialfootnotes}[1]{%
  \patchcmd{\@footnotetext}{\floatingpenalty\@MM}{\floatingpenalty#1\relax}%
           {}{\errmessage{Couldn't patch \string\@footnotetext}}%
}
\begin{document}
\title{A Theory of Inaccessible Information${}^1$}
	\author{\href{mailto:jacopo.surace@gmail.com}{Jacopo Surace}} 
	\affiliation{Perimeter Institute for Theoretical Physics, 31 Caroline Street North, Waterloo, ON N2L 2Y5, Canada}
	\affiliation{ICFO - Institut de Ciencies Fotoniques, The Barcelona Institute of Science and Technology, Castelldefels (Barcelona), 08860, Spain}	
\begin{abstract}

What would be the consequences if there were fundamental limits to our ability to experimentally explore the world? In this work we seriously consider this question. 
We start by assuming the existence of statements whose truth value is not experimentally accessible. That is, there is no way, not even in theory, to directly test if these statements are true or false. We further develop a theory in which experimentally accessible statements are a union of a fixed minimum number of inaccessible statements. For example, the value of truth of the statements \textbf{a} and \textbf{b} is not accessible, but the value of truth of the statement “\textbf{a} or \textbf{b}" is accessible. We do not directly assume probability theory, we exclusively define experimentally accessible and inaccessible statements and build on these notions using the rules of classical logic. We find that an interesting structure emerges. Developing this theory, we relax the logical structure, naturally obtaining a derivation of a constrained quasi-probabilistic theory rich in structure that we name \textit{theory of inaccessible information}. Surprisingly, the simplest model of theory of inaccessible information is the qubit in quantum mechanics. Along the path for the construction of this theory, we characterise and study a family of multiplicative information measures that we call inaccessibility measures.
	\end{abstract}
\maketitle

\section{Introduction}
	\footnotetext{The consequences of seriously accepting that some statements cannot be directly proven or disproven experimentally.}


Starting with the discovery of the uncertainty relations and moving to the development of the concept of intrinsic quantum randomness \cite{dhara2014,acin2016,herrero-collantes2017,senno2022}, quantum mechanics taught us that there is a fundamental limit to the information we can acquire experimentally.
Little do we know about whether quantum mechanics will remain the most fundamental theory we can formulate. However, the lesson it teaches us is something that seems more fundamental than the postulates of quantum mechanics itself.

What if, forgetting (but not for long) about quantum mechanics, we simply assume as a basic postulate the existence of a fundamental limit to the information we can acquire experimentally, and explore the structure of the theories emerging from this assumption?
In this work, we explore this question, studying the consequences of what we call the Inaccessibility Hypothesis
\\
\\
{\bf Inaccessibility Hypothesis $\mathcal{I}\mathcal{H}$}: \\
There are some statements about the physical world that are absolutely experimentally inaccessible, i.e. there is even in principle no experimental procedure that could confirm or disconfirm such statements. 
\\

 Following a path partially different from previous works in this area \cite{spekkens2007b,spekkens2016}. Without assuming probability theory, we start from the basic language of logic \cite{piron1976,knuth2006}, to build up what we will call a \textit{theory of inaccessible information}.

We  develop this idea starting from the separation of all the well formed statements into two kinds: experimentally accessible or inaccessible.
 We define experimentally inaccessible statements as those for which there does not exist, not even in theory, an experimental method for assessing their truth value. We will not deal with metaphysical questions such as what does it mean for something to be true or false while being experimentally inaccessible.

In our theory all experimentally accessible statements are formed by combining a minimum of $d$ experimentally inaccessible statements, with $d$ representing what we call the \textit{accessibility-depth} of our experimental knowledge.  For example, if $d=3$ and the three atomic statements ${a, b, c}$ are inaccessible, the statement $r=a\vee b$ (meaning ${\mbox{"$a$ or $b$"}}$) cannot be experimentally accessed, but the statement $s=a\vee b\vee c$, composed by $d=3$ inaccessible statements, can be.  We can now see why $d$ is interpreted as a \textit{accessibility-depth}: even if an experiment shows that $s$ is true, this knowledge will leave us completely ignorant of the truth values of the three experimentally inaccessible statements $a,b,c$ that make up $s$.
This is how we characterise the fundamental limitations of our experimental knowledge. 

By developing and relaxing the initial structure of the theory, a notion of quasi-probability associated to experimentally inaccessible statements and a notion of probability associated to experimentally accessible statements will naturally emerge. In this relaxed scenario, the value of the  accessibility-depth parameter $d$ will impose some information-theoretic constraints on the probability vectors we will associate to the experimentally accessible statements. Surprisingly, we will find that, among the many possible models admitted, the structure of the qubit originates as the simplest model in the theory inaccessible information.
 

%
%

We caution the reader that our theory of inaccessible information is not intended as a derivation of quantum mechanics, but rather as a study of the consequences of accepting that some statements cannot be directly proven or disproven experimentally. 
Along the path of our construction we will additionally uncover how to derive quasi-probabilities from the structure of the algebra of logical statements and introduce a set of multiplicative information measures that we call inaccessibility measures.

\section{Experimentally accessible and inaccessible statements}
\label{sec:section-II}

The aim of this section is to introduce the framework for our theory.
We begin by defining a formal language, consisting of a set of symbols and a set of rules for their composition. The elements of the set of symbols we will use are referred to as \textit{atomic statements}. An atomic statement is intended to represent a proposition that declares a fundamental physical property of the system.
Importantly, we want atomic statements to encode mutually exclusive properties of a system: it is absurd for any system to be characterised by two atomic statements simultaneously. This idea will be formalised shortly, but for the time being, we can provide an example.
Consider, for example, the roll of a die. The value of the top face of a die is a fundamental property of the die. Additionally, a die has only one top face, and it is absurd to talk about two top faces (it would not be a die). Thus statements like $s_1$="The die has a 6 on the top face" or $s_2$="The die has a 1 on the top face" are good examples of atomic statements.\footnote{Other examples of atomic statements can be "the spin has direction pointing as $\vec{r}$" for a complete set of directions, or "pressing this button I get a $i$ on the screen" for all possible outcomes $i$. An example of statement that will be considered inaccessible is the statement "The value of $X$ and $Z$ for the qubit is $1$".}  Formally, a model of \textit{dimension} $D$ will be characterised by a set of $D$ atomic statements $\mathcal{A}(\mathcal{S})=\{s_1,s_2,\dots,s_D\}$.

We combine atomic statements to form new statements using a set of logical constants from classical propositional logic. These include the unary operator $\neg$, denoting the negation of a statement, and the binary operators "or" ($\vee$) and "and" ($\wedge$).  

Additionally, we introduce two special elements,  $\top$ and $\perp$, which denote the intuitive ideas of the tautology statement and the absurd statement, respectively.

To the collection $(\mathcal{A}(\mathcal{S}), \vee, \wedge, \neg, \top, \perp)$ of all the ingredients we just defined, we add a set of rules specifying the equivalence between different statements. The first equivalence we specify uses $\perp$ to formalise the property of atomic statements being mutually exclusive. Specifically, atomic statements are defined such that the composition of any two atomic statements $s_i$ and $s_j$ with the "and" operator forms a statement equivalent to the absurd statement, $s_i \wedge s_j = \perp$ for every $i \neq j$. 
Further equivalence relations are specified by the rules in appendix \ref{appendix:algebra-of-statements} with equations \eqref{eq:axioms-lattice},\eqref{eq:axioms-complete},\eqref{eq:axioms-distributivity}. These rules encode, for example, the intuition that the statement $s_1$ and the statement $s_1 \vee s_1$ should be considered equivalent, as defined by the idempotency rule \eqref{eq:axioms-lattice}.\footnote{As another example, perhaps more complex, one can show that for $3$ atomic statements $s_1,s_2,s_3$ the equivalence $s_1=(s_1\vee s_3)\wedge(s_1\vee s_2)$ holds. In fact $(s_1\vee s_3)\wedge(s_1\vee s_2)=s_1\vee(s_3\wedge s_2)$ from rule \eqref{eq:distributivity-or}, this is equivalent to $s_1\vee\perp$ because of the definition of mutually exclusive atomic statements, in turn this is equivalent to $s_1\vee(s_1\wedge \neg s_1)$ using rule \eqref{eq:and-complementation}. Finally one obtains the desired results using the absorption rule \eqref{eq:absorbtion}.}
Combining atomic statements in all possible ways, one obtains the quotient space $\mathcal{S}$, which is the set of equivalence classes of statements formed by combinations of atomic statements, or, equivalently, the set of representatives of each equivalence class of statements. The set $\mathcal{S}$ represents the set of all possible statements we can formulate. 
It is possible to see that every statement in $\mathcal{S}$ is equivalent to a composition of atomic statements using only the binary operator $\vee$, and the $\top$ element is equivalent to the composition of all atomic statement using the binary operator $\vee$.
We obtain the Boolean algebra of statements $(\mathcal{S}, \vee, \wedge, \neg, \top, \perp)$, or equivalently, a complete Boolean lattice of statements built from the set of atomic statements (see appendix \ref{appendix:algebra-of-statements} and \cite{piron1976,knuth2004,knuth2005a}).

Now to each element of the lattice we attach a label that we call \textit{truth value} of the statement. This label can assume two values $\TTT$ or $\FFF$, read as "true" and "false" respectively.
An assignment of truth value to each element of the lattice is considered admissible if it respects the following rules. First, it must be consistent with the rule specified by the truth tables  \ref{table:truth-table}.  For example, if we assign to $s_1$ the value $\TTT$ and to $s_2$ the value $\FFF$, then the only possible assignment to $s_{1\wedge2}=s_1 \wedge s_2$ is $\FFF$.
Furthermore we ask for the assignment of the truth value to be the same on equivalent statements. Finally, we want to assign to the absurd statement always the value $\FFF$ since we want to encode the idea that atomic statements are mutually exclusive and thus, for any two atomic statements $s_i,s_j$ with $i\neq j$ one expect expects that $\perp=s_i \wedge s_j$ has assigned truth value $\FFF$

Specifying the truth value of each atomic statement is enough for computing the truth value of every other statement in the lattice by simply requiring the assignment to be admissible. For this reason one can consider the truth value as propagating through the lattice starting from the atomic statements. It can be further verified that admissible assignments of the truth value are the one and only the one where there is one single true atomic statement and all the others are false.

 \begin{table}
\begin{tabular}{||c | c || c||} 
 \hline
 a & b & a $\wedge$ b\\ [0.5ex] 
 \hline
\textbf{T} &\textbf{T}&\textbf{T}\\
\textbf{T}&\textbf{F}&\textbf{F}\\
\textbf{F}&\textbf{T}&\textbf{F}\\
\textbf{F}&\textbf{F}&\textbf{F}\\
 \hline
\end{tabular}$\qquad$ \begin{tabular}{||c | c || c||} 
 \hline
 a & b & a $\vee$ b\\ [0.5ex] 
 \hline
\textbf{T}&\textbf{T}&\textbf{T}\\
\textbf{T}&\textbf{F}&\textbf{T}\\
\textbf{F}&\textbf{T}&\textbf{T}\\
\textbf{F}&\textbf{F}&\textbf{F}\\
 \hline
\end{tabular}$\qquad$ \begin{tabular}{||c || c||} 
 \hline
 a  & $\neg$ a\\ [0.5ex] 
 \hline
\textbf{T}&\textbf{F}\\
\textbf{F}&\textbf{T}\\
 \hline
\end{tabular}
\caption{Truth tables.}
\label{table:truth-table}
\end{table}


We attach to each statement an \textit{additional} label that we call \textit{accessibility value} of the statement. This label can assume two values $\AAA$ and $\NNN$, read as "experimentally accessible" and "experimentally inaccessible" respectively.
An assignment of accessibility value to each element of the lattice is considered admissible if it respects the following rules. First, it must be consistent with the rule specified by the accessibility tables  \ref{table:accessibility-table}. Furthermore we ask for the assignment of the accessibility value to be the same on equivalent statements.  
 
 \begin{table}
\begin{tabular}{||c | c || c||} 
 \hline
 a & b & a $\wedge$ b\\ [0.5ex] 
 \hline\hline
\textbf{A}&\textbf{A}&\textbf{A}\\
\textbf{A}&\textbf{N}&\textbf{A/N}\\
\textbf{N}&\textbf{A}&\textbf{A/N}\\
\textbf{N}&\textbf{N}&\textbf{A/N}\\
 \hline
\end{tabular}$\qquad$ \begin{tabular}{||c | c || c||} 
 \hline
 a & b & a $\vee$ b\\ [0.5ex] 
 \hline\hline
\textbf{A}&\textbf{A}&\textbf{A}\\
\textbf{A}&\textbf{N}&\textbf{A/N}\\
\textbf{N}&\textbf{A}&\textbf{A/N}\\
\textbf{N}&\textbf{N}&\textbf{A/N}\\
 \hline
\end{tabular}$\qquad$ \begin{tabular}{||c || c||} 
 \hline
 a  & $\neg$ a\\ [0.5ex] 
 \hline
\textbf{A}&\textbf{A}\\
\textbf{N}&\textbf{N}\\
 \hline
\end{tabular}
\caption{Accessibility tables. When written $A/N$ it means that the accessibility value is not defined and can be chosen. Notice that composition and negations of experimentally accessible statements always returns an experimentally accessible statement.}
\label{table:accessibility-table}
\end{table}

We expect that by combining or negating experimentally accessible statements we should obtain an experimentally accessible statement. Furthermore, we should also expect that negating a statement should not change its accessibility value otherwise one would easily gain access to an inaccessible statement just by considering its negation. 
In general we are going to consider lattices for which $\top$ is always experimentally accessible. This is because the lattice where $\top$ is experimentally inaccessible is the trivial lattice where all the statements are experimentally inaccessible (see lemma \ref{lem:top-is-accessible} and its proof in appendix \ref{appendix:proofs}).
We note that the accessibility value of the elements of a lattice is not completely specified by the accessibility value of the atomic elements.

\section{Logical inaccessible information models}

\begin{figure*}[t]
	\centering
	\includegraphics[width=0.7\textwidth]{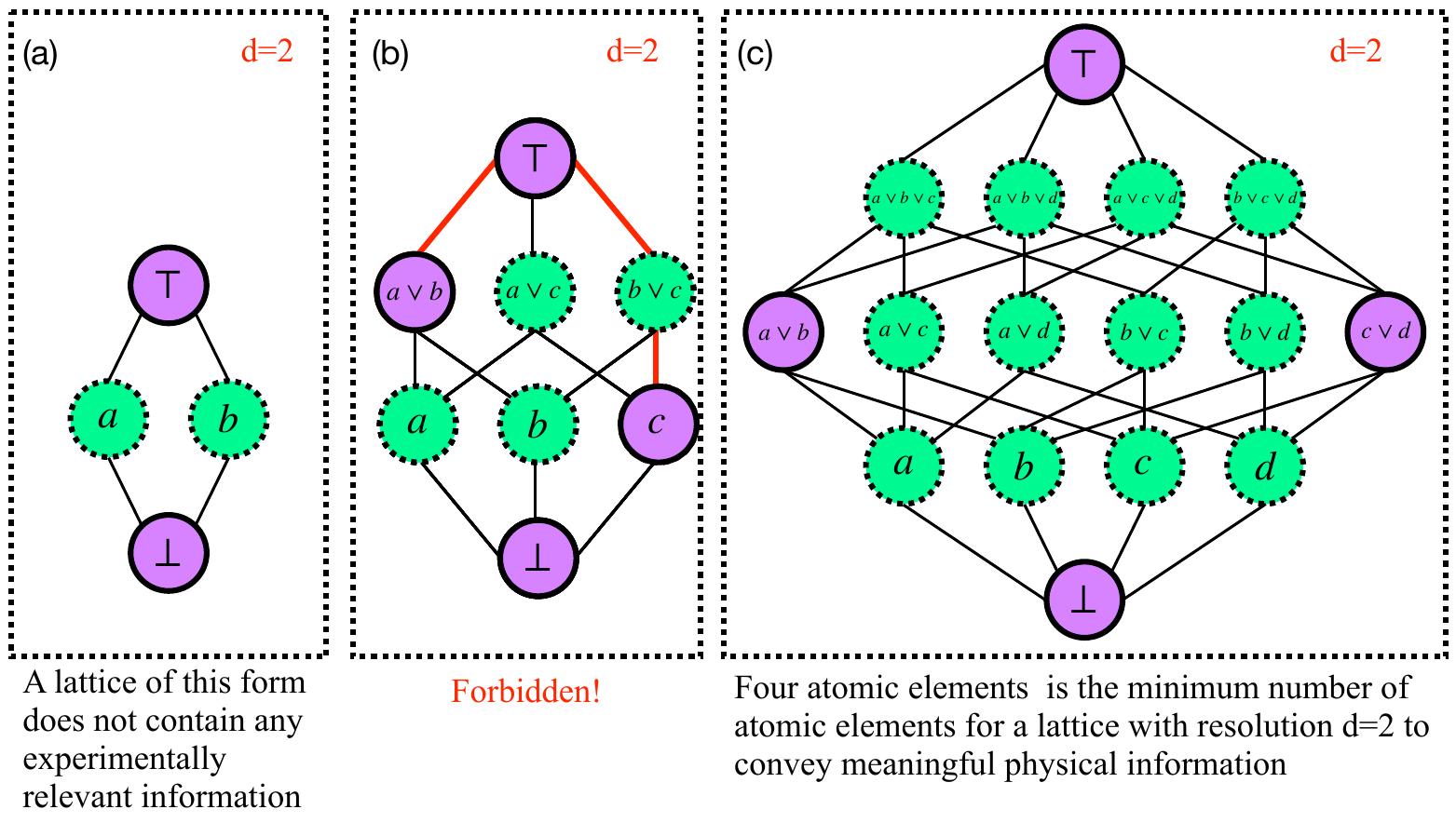}
	\caption{Pictorial representations of the three examples. The statements are represented by coloured balls. The colour of the ball represents its accessibility value. A purple ball corresponds to an experimentally accessible statement, a green ball corresponds to an experimentally inaccessible statement. We represents models of accessibility-depth $d=2$ for different dimensions $D$. In panel \textbf{(a)} is represented the case of Example \ref{ex:1}. For $D=2$ the only accessible statements are $\perp$ and $\top$, thus this is a useless model as it does not convey any experimentally relevant information. In panel \textbf{(b)} is represented the case of Example \ref{ex:2}. For $D=3$, the only accessible statements are $\perp$ and $\top$, thus this is a useless model. In the panel a step of the proof is highlighted: choosing the statement $a\vee b$ to be accessible would imply its negation $\neg(a\vee b)=c$   to be accessible, but $c$ cannot be accessible by definition, thus we have a contradiction. In panel \textbf{(c)} is represented the case of Example \ref{ex:3}. This is the first non-trivial model. From the figure it is easy to notice how the accessible (purple) statements form a sub-lattice with $D=2$ of a classical model.}
	\label{fig:Fig1}
\end{figure*}

\label{sec:assumption1}

In the previous section, we introduced the framework for discussing accessible and inaccessible statements and the $\mathcal{I}\mathcal{H}$. In this section we add some structure to the definitions of the last section. We informally state here the main assumption of the theory of inaccessible information.


\begin{BoxWT_green}{Assumption 1 for the logical lattices}
Only disjunctions of at least $d$ atomic statements are accessible, for some $d$. That is, our knowledge is limited to be always uncertain about a fixed number of statements $d$. We call $d$ the \textit{accessibility-depth} of the model.
\end{BoxWT_green}


The idea behind assumption $1$ is that our theory should describe systems where the experimentally accessible information is a coarse graining of some experimentally inaccessible information.
This is the intuition guiding us, but pragmatically we are just exploring what are the consequences of imagining a theory based on this assumption, that is a theory where some information is inaccessible.

We have that the accessibility-depth $\mathbb{N}\ni d \geq 1$ is the minimum number of atomic statements necessary to form an experimentally accessible statements through the connective $\vee$. Alternatively, numbering the level of the lattice from the bottom (with $\perp$ being at level $0$), $d$ is the level of the lattice at which experimentally accessible statements are allowed (see appendix \ref{appendix:graphical-notation}). 

\begin{definition}
A model $(D,d)$ is a list $(L_D,A_d,V)$, where $L_D$ is the Boolean algebra of statements, or Boolean lattice, introduced beforehand (with  $D$ atomic elements),  $A_d$ is the set of all “accessibility-assignments” compatible with the requirement that the “accessibility-depth” is $d$, and  $V$ is the set of admissible “truth-assignments” constrained in the aforementioned fashion. 
\end{definition} 

\begin{definition}
A \textit{configuration} of a model is $(L_D,a_d,v)$ with $a_d\in A_d$ and $v \in V$.
\end{definition} 

For each model the allowed configurations are the ones compatible with the rule of the accessibility tables and with the accessibility-depth $d$ of the model.  

 For example for the model $(D,d)=(4,2)$ we have that the choice of the accessibility-depth $d=2$ imposes that all the atomic statements are experimentally inaccessible while allowing statements $s_i\vee s_j$, with $i,j \in \{1,2,3,4\}$ and $i\neq j$ to be experimentally accessible. Furthermore the request of being compatible with the rules of the accessibility tables imposes further constraints as we will see soon with some examples.

For a given model, we will prefer configurations that maximise the number of experimentally accessible statements. We call these configurations \textit{ideal configurations} and we define them as follows

\begin{definition}
An \textit{ideal configuration} of a model $(D,d)$ is an allowed configuration with the assignments of accessibility values that maximises the number of accessible statements at level $d$. 
 \end{definition}

What distinguishes a theory of inaccessible information from a classical theory, is the distinction between accessible and inaccessible statements. Dropping this distinction should allow us to recover a classical theory. This corresponds to setting the accessibility-depth of the model to be $d=1$ and considering the unique ideal configuration that assigns accessible value to each and every statement of the theory.
 
 \begin{definition}
 	A \textit{classical model} of dimension $m$ is a model of inaccessible information with $(D=m,d=1)$ in an ideal configuration.
  \end{definition}
  
  We reserve the use of the symbol $m$ to denote the dimension of classical models.
 
 In fact, for $d=1$, the assignment of accessibility values that maximises the number of accessible statements at level $d$ corresponds to simply choosing all the atomic statements to be accessible and thus recovering a theory without an effective accessibility label. 
 
 Configurations for which the only experimentally accessible statements are $\top$ and its negation $\perp$ do not convey any relevant physical information. These are the only allowed configurations for models with $D=d$. We call these models \textit{useless models}.

We can see through the following examples that Assumption $1$ is already strong enough to reveal some interesting properties of the theory.

\begin{example}
\label{ex:1}
Consider the models with $D=2$ atoms.
It is easy to check that the only allowed models are the classical model and the useless model.
Referring to panel $(a)$ of Figure \ref{fig:Fig1} we can see that $``a"$ is the negation of $``b"$ thus $``a"$ and $``b"$ must have the same accessibility value. We have that either both $``a"$ and $``b"$ must be accessible (classical model) or both $``a"$ and $``b"$ must be inaccessible (useless model).
\end{example}

\begin{example}
\label{ex:2}
Consider a lattice with $D=3$ in an ideal configuration.
For $d=1$ it is a classical model, for $d=3$ it is a useless model. We study the case $d=2$.
Because of the structure of the lattice we have that $s_2\llor s_3=\neg s_1$. Since $d=2$ we have that $s_1$ is $\NNN$, this implies that $s_2\llor s_3$ must also be $\NNN$. We can proceed with the same reasoning for $\neg p_2$ and $\neg p_3$ to find out that, if $d=2$ then no configurations with accessible statements at level $3$ are allowed. 
Thus a lattice with three atomic elements can only effectively be a classical or a useless model.
See Figure \ref{fig:Fig1} for a pictorial representation.

\end{example}

\begin{example}
\label{ex:3}
Consider a lattice with $D=4$ in an ideal configuration.
This is the first case of a non-trivial lattice. In fact we can consider a model with accessibility-depth $d=2$. Without loss of generality we can assign to $s_1\llor s_2$ the accessibility value $\AAA$. This would imply that $s_3\llor s_4=\neg(s_1\llor s_2)$ is also $\AAA$.
At the same time $s_1 \llor s_3$ cannot be $\AAA$. In fact, if $s_1 \llor s_3$ were $\AAA$ we would have that $s_1=(s_1 \llor s_3)\lland (s_1 \llor s_2)$ is also $\AAA$. This would contradict $d=2$. See Figure \ref{fig:Fig1} for a pictorial representation.
\end{example}

%
%

In appendix \ref{appendix:proofs} we prove the following lemmas:


 \begin{restatable}{lemma}{topisaccessible}
 	If $\top$ is $\NNN$ then all the statements of the lattice are $\NNN$.
 	\label{lem:top-is-accessible}
 \end{restatable}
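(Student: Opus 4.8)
The plan is to prove the contrapositive statement directly: assuming $\top$ is $\NNN$, I want to show every statement in the lattice is $\NNN$. The key structural fact I would exploit is the one already noted in the excerpt, namely that every statement in $\mathcal{S}$ is equivalent to a disjunction of atomic statements, and $\top$ itself is equivalent to $s_1 \vee s_2 \vee \cdots \vee s_D$, the disjunction of all atomic statements. So the argument naturally splits into two steps: first show that if $\top$ is $\NNN$ then every atomic statement is $\NNN$, and then show that this forces every statement to be $\NNN$.

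For the first step, I would use the negation rule from the accessibility table (Table \ref{table:accessibility-table}): negation preserves the accessibility value. Suppose for contradiction that some atomic statement $s_i$ were $\AAA$. Then $\neg s_i$ is also $\AAA$. But $\neg s_i = \bigvee_{j \neq i} s_j$, and we could write $\top = s_i \vee \neg s_i = s_i \vee (\bigvee_{j\neq i} s_j)$. Here both $s_i$ and $\neg s_i$ would be $\AAA$, and the accessibility table entry for $\AAA \vee \AAA$ is forced to be $\AAA$; hence $\top$ would be $\AAA$, contradicting the hypothesis. Therefore every atomic statement is $\NNN$.

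For the second step, take an arbitrary statement $t \in \mathcal{S}$; it is equivalent to a disjunction $s_{i_1} \vee \cdots \vee s_{i_k}$ of some atomic statements. If this disjunction is over all $D$ atoms, then $t = \top$ which is $\NNN$ by hypothesis. Otherwise, at least one atom $s_j$ is missing, and I can write $\top = t \vee (\bigvee_{\ell: s_\ell \notin t} s_\ell)$. Suppose for contradiction $t$ is $\AAA$. The remaining disjunction $\bigvee_{\ell: s_\ell \notin t} s_\ell$ is a statement equivalent to $\neg t$ composed only of atoms; its accessibility is equal to that of $\neg t$, which equals that of $t$, so it is also $\AAA$. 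Then $\top = t \vee \neg t$ is $\AAA \vee \AAA = \AAA$, again contradicting the hypothesis. So $t$ is $\NNN$.

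The main obstacle — more a point requiring care than a genuine difficulty — is handling the $\AAA/\NNN$ ("can be chosen") entries in the accessibility table correctly: the argument must only ever invoke the \emph{forced} entries, i.e. $\AAA \vee \AAA = \AAA$ and the negation rule $\neg\AAA = \AAA$, never the undetermined ones, and it must respect that accessibility values agree on equivalent statements (so that rewriting $t$ as a disjunction of atoms, or $\neg t$ as the complementary disjunction, does not change accessibility). One should also note the degenerate edge case $D=1$, where $\top = \perp = s_1$ and the claim holds trivially. With those caveats the proof is a short two-step contradiction argument.
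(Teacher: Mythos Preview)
Your proof is correct and uses the same core idea as the paper: if any statement $t$ were $\AAA$, then $\neg t$ would be $\AAA$ by the negation rule, hence $\top = t \vee \neg t$ would be $\AAA$, a contradiction. The paper's proof is exactly your second step applied directly to an arbitrary statement, so your first step (treating atoms separately) is redundant --- the argument never needs the decomposition into atomic disjunctions.
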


 \begin{restatable}{lemma}{notintersect}
 \label{lem:not-intersect}
 For each model $(D,d)$,the sets of atomic statements out of which two accessible statements at level $d$ of an admissible configuration are composed must not have any element in common. \end{restatable}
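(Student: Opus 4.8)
\emph{Proof proposal.} The plan is a proof by contradiction that generalises the reasoning of Example \ref{ex:3}. Let $r$ and $t$ be two distinct accessible statements at level $d$ of an admissible configuration of a model $(D,d)$. Since every statement of the lattice is equivalent to a disjunction of atomic statements and ``level $d$'' means a disjunction of exactly $d$ distinct atoms, I write $r=\bigvee_{i\in I}s_i$ and $t=\bigvee_{j\in J}s_j$ with $I,J\subseteq\{1,\dots,D\}$ and $|I|=|J|=d$. Set $K:=I\cap J$ and suppose, towards a contradiction, that $K\neq\emptyset$, with $k:=|K|$, so $1\le k\le d$.

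The first step is to compute $r\wedge t$ explicitly. Distributing $\wedge$ over $\vee$ (rule \eqref{eq:axioms-distributivity}) gives $r\wedge t=\bigvee_{i\in I}\bigvee_{j\in J}(s_i\wedge s_j)$. Because atomic statements are mutually exclusive, $s_i\wedge s_j=\perp$ whenever $i\neq j$, while $s_i\wedge s_i=s_i$ by idempotency, and $\perp$ is the neutral element for $\vee$. Hence the double disjunction collapses to $r\wedge t=\bigvee_{a\in K}s_a$, a statement sitting at level $k$ of the lattice.

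Next I use accessibility. Since $r$ and $t$ are both accessible, the entry $\AAA\wedge\AAA=\AAA$ of the accessibility table \ref{table:accessibility-table} forces $r\wedge t$ to be accessible as well. Now split on $k$. If $k<d$, then $r\wedge t=\bigvee_{a\in K}s_a$ is an accessible statement which is a disjunction of $k$ atoms with $1\le k<d$ — in particular it is neither $\perp$ nor $\top$ (the latter because $k<d\le D$) — and this contradicts Assumption $1$, according to which only disjunctions of at least $d$ atomic statements are accessible. The remaining possibility is $k=d$; but then $K=I=J$ because $|I|=|J|=d$, so $r=t$, contradicting the distinctness of $r$ and $t$. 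In both cases we have a contradiction, so $K=\emptyset$, which is the claim.

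The computation in the second step is a routine application of the lattice axioms of appendix \ref{appendix:algebra-of-statements}, and the only point I would be careful with is the case analysis on $k$: one must check that a level-$k$ disjunction with $1\le k<d$ is genuinely excluded by Assumption $1$ (it cannot secretly be $\top$ since $k<D$, nor $\perp$ since $k\ge1$), and that the boundary value $k=d$ is correctly absorbed by the distinctness hypothesis. I do not expect a genuine obstacle here — the substance of the lemma is exactly the distributive identity $r\wedge t=\bigvee_{a\in I\cap J}s_a$ together with the accessibility-depth constraint.
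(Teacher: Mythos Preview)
Your proof is correct and follows essentially the same route as the paper's: both compute the conjunction $r\wedge t$ as the disjunction over $I\cap J$, invoke the $\AAA\wedge\AAA=\AAA$ entry of the accessibility table, and derive a contradiction with the accessibility-depth constraint. Your version is in fact more carefully written than the paper's, making the distributive computation explicit and handling the boundary cases $k=d$ and the exclusion of $\perp,\top$ that the paper leaves implicit.
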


  \begin{restatable}{lemma}{integermultiples}
\label{lem:integer-multiples}
	For each model $(D,d)$, the accessible statements at level $d$  in an ideal configurations are $\frac{D}{d}$ if $\bmod(D,d)=0$, and $\lerifloor{\frac{D}{d}}-1$ if $\bmod(D,d)\neq0$. 
\end{restatable}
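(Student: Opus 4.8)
The plan is to identify each statement with its set of atoms: every statement is equivalent to a $\vee$-composition of atomic statements, so a statement at \emph{level} $\ell$ is recorded by an $\ell$-element subset of $\mathcal{A}(\mathcal{S})$, with $\neg$ becoming set complement, $\vee$ becoming union and $\wedge$ becoming intersection. The statement then splits into a matching upper bound, valid for every admissible configuration, and an explicit configuration attaining it.

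\emph{Upper bound.} Let $s_1,\dots,s_k$ be the accessible statements of level $d$ and $B_1,\dots,B_k$ the corresponding $d$-element sets of atoms, pairwise disjoint by Lemma~\ref{lem:not-intersect}. Put $t:=s_1\vee\cdots\vee s_k$; iterating the entry $\AAA\vee\AAA=\AAA$ of the accessibility table shows $t$ is accessible, and $t$ corresponds to $B_1\cup\cdots\cup B_k$, of size $kd$. Since negation preserves accessibility, $\neg t$ is accessible too and is the disjunction of the remaining $D-kd$ atoms, i.e.\ a statement at level $D-kd$. By Assumption~1 no statement whose level lies strictly between $0$ and $d$ is accessible, so $D-kd$ cannot satisfy $0<D-kd<d$: either $D-kd=0$, whence $k=D/d$, or $D-kd\ge d$, whence $k\le D/d-1$. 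Thus $k\le D/d$ when $\bmod(D,d)=0$ and $k\le\lerifloor{D/d}-1$ when $\bmod(D,d)\neq0$.

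\emph{Achievability.} Next I would build an admissible configuration attaining this bound; since an ideal configuration maximises the number of accessible level-$d$ statements, this pins down the count in every ideal configuration. If $\bmod(D,d)=0$, fix a partition of $\mathcal{A}(\mathcal{S})$ into $D/d$ blocks of size $d$ and declare a statement accessible exactly when its atom-set is a union of blocks. If $\bmod(D,d)\neq0$, write $D=kd+r$ with $k=\lerifloor{D/d}$ and $1\le r\le d-1$, fix $k-1$ disjoint blocks of size $d$, let $R$ be the remaining set of $d+r$ atoms, and declare a statement accessible exactly when its atom-set is a union of blocks or equals $R$ together with a union of blocks. In each case the family of accessible atom-sets contains $\emptyset$ and all of $\mathcal{A}(\mathcal{S})$ and is closed under union, intersection and complement, so the assignment is constant on equivalence classes, respects the only forced table entries $\AAA\vee\AAA=\AAA$, $\AAA\wedge\AAA=\AAA$, $\neg\AAA=\AAA$, $\neg\NNN=\NNN$ (all other entries being the free value $\AAA/\NNN$), makes $\top$ and $\perp$ accessible, and has no accessible statement of positive level below $d$; hence it is admissible with accessibility-depth $d$. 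Its accessible statements of level exactly $d$ are precisely the single blocks ($R$ and every accessible set containing it lives at level $d+r>d$), giving $D/d$ and $\lerifloor{D/d}-1$ respectively, matching the upper bound.

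\emph{Main obstacle.} The only genuinely delicate point is this admissibility check: verifying that the ``union-of-blocks'' family (and the ``$R$ plus union-of-blocks'' family) is closed under the Boolean operations, and that the inequality $r\ge1$, i.e.\ $|R|=d+r>d$, really prevents any unwanted level-$d$ statement from becoming accessible in the second construction. Everything else is bookkeeping with levels of disjunctions and the accessibility table.
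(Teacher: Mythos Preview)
Your proposal is correct and follows essentially the same route as the paper: both use Lemma~\ref{lem:not-intersect} to get disjoint $d$-blocks, form their union, pass to its complement via $\neg\AAA=\AAA$, and argue that the complement's level cannot land in $\{1,\dots,d-1\}$. Your achievability argument is in fact more complete than the paper's, which merely asserts that with $\lfloor D/d\rfloor-1$ blocks ``the argument above does not apply'' without explicitly verifying that the resulting assignment is admissible; your observation that the accessible sets form a Boolean subalgebra generated by the partition $\{B_1,\dots,B_{k-1},R\}$ cleanly handles the closure and depth checks that the paper leaves implicit.
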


  \begin{restatable}{lemma}{uniquenessidealconfigurations}
\label{lem:uniqueness_ideal_configurations}
	For each model $(D,d)$, the ideal configurations are all equivalent up to relabelling of the atomic statements.
\end{restatable}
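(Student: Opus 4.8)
The plan is to show that any ideal configuration of a model $(D,d)$ is determined, up to a permutation of the atomic statements, by the combinatorial data of which atoms are grouped together into the accessible level-$d$ statements. First I would invoke Lemma \ref{lem:not-intersect}: the accessible statements at level $d$ of an admissible configuration are built from pairwise disjoint sets of atoms, each of size exactly $d$ (size exactly $d$ because a level-$d$ statement is a disjunction of $d$ atoms). Hence an ideal configuration induces a partial partition of $\{s_1,\dots,s_D\}$ into blocks of size $d$, and by Lemma \ref{lem:integer-multiples} the number of such blocks is the fixed number $k := D/d$ if $d \mid D$, and $k := \lfloor D/d\rfloor - 1$ otherwise. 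So the level-$d$ accessible statements of any ideal configuration correspond to a collection of $k$ disjoint $d$-subsets of a $D$-element set.

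Next I would argue that this combinatorial datum determines the whole accessibility assignment. Working upward: any accessible statement at a level $\ell > d$ is forced to be a disjunction of some of the level-$d$ accessible blocks together with possibly other atoms, but in fact the cleanest route is to note that the accessible statements form a sub-lattice (as observed in Example \ref{ex:3} and the caption of Figure \ref{fig:Fig1}), and a finite Boolean-type sub-lattice generated in this way is completely fixed once its atoms — here, the $k$ level-$d$ blocks, plus possibly the "leftover" atoms bundled appropriately — are fixed. Working downward there is nothing below level $d$ that is accessible except $\perp$, by Assumption~1. So the entire accessibility value function $a_d$ is a function of the partial partition alone. (One should check the boundary case $\bmod(D,d)\neq 0$: the $D - kd$ leftover atoms are all inaccessible and cannot be assembled into any further accessible level-$d$ statement without violating Lemma \ref{lem:not-intersect} or the maximality in the definition of ideal configuration; this is exactly the content already extracted in Lemma \ref{lem:integer-multiples}, so I would just cite it.)

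Finally I would finish with the symmetry argument. Given two ideal configurations with induced partial partitions $\mathcal{P} = \{B_1,\dots,B_k\}$ and $\mathcal{P}' = \{B_1',\dots,B_k'\}$ (disjoint $d$-subsets), each block has the same size $d$ and there are the same number $k$ of blocks, and the same number $D - kd$ of uncovered atoms in each. Therefore there is a bijection $\sigma$ of $\{s_1,\dots,s_D\}$ mapping $B_i$ to $B_i'$ for each $i$ and mapping leftover atoms to leftover atoms. Such a $\sigma$ extends to an automorphism of the Boolean lattice $L_D$ (permutations of atoms act as lattice automorphisms since all defining relations \eqref{eq:axioms-lattice}, \eqref{eq:axioms-complete}, \eqref{eq:axioms-distributivity} and $s_i\wedge s_j=\perp$ are permutation-symmetric), and by the previous paragraph this automorphism carries the accessibility assignment of the first configuration onto that of the second. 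The truth-assignment set $V$ depends only on $L_D$ and not on the configuration, so it is carried along automatically. Hence the two ideal configurations are equal up to relabelling of atomic statements.

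The main obstacle I anticipate is the middle step: rigorously justifying that the accessibility function on \emph{all} levels (not just level $d$) is determined by the level-$d$ blocks. One has to rule out "extra" accessible statements above level $d$ that are not forced by closure of the level-$d$ accessible set under $\vee$, $\wedge$, $\neg$ — i.e. one must use maximality (ideality) correctly, since a priori an admissible configuration could declare some high-level statement accessible gratuitously. The clean way around this is to show that in an ideal configuration the accessible set is \emph{exactly} the sub-lattice generated by the level-$d$ accessible statements (nothing more is accessible, because the accessibility tables never \emph{force} accessibility upward except through $\vee$/$\wedge$/$\neg$ of already-accessible statements, and nothing more is needed for maximality at level $d$); once that is established, the configuration is a function of $\mathcal{P}$ and the rest is bookkeeping.
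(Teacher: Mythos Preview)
Your approach is essentially the paper's: invoke Lemmas~\ref{lem:not-intersect} and~\ref{lem:integer-multiples} to identify the level-$d$ accessible statements with a (partial) partition of the atom set into $k$ blocks of size $d$, then observe that any two such partitions are related by a permutation of atoms. The paper's proof stops there and does not address the middle step you flag --- it implicitly treats an ideal configuration as determined by its level-$d$ data --- so your outline is already more careful than the original.

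For the obstacle you anticipate, the clean argument is this: if $s$ is any accessible statement and $B_i$ a level-$d$ accessible block, then $s\wedge B_i$ is accessible by the $\wedge$-row of Table~\ref{table:accessibility-table}, but $s\wedge B_i\leq B_i$ lies at level $\leq d$, hence is either $\perp$ or $B_i$ itself. Thus $s$ is a union of full blocks (plus, in the case $\bmod(D,d)\neq 0$, possibly the leftover set $R=\neg\bigvee_i B_i$, which is itself accessible of size $d+\bmod(D,d)$; the same intersection trick applied to $s$ and $\neg s$ against $R$ shows $s\cap R\in\{\perp,R\}$). Conversely, the accessibility tables force every union of accessible blocks to be accessible. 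So the accessible set is \emph{exactly} the Boolean sub-lattice on the blocks (and $R$), and the configuration is a function of the partition as you wanted.
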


In the following whenever we will specify a model we will consider it in its ideal configuration,  since this configuration maximises the experimentally relevant information of the model. 


 As seen in example \ref{ex:3}, it is possible to embed a classical system inside an ideal configuration of a model of inaccessible information. In this case we can say that the model of inaccessible information is an \textit{inflation of the classical model}.

 \begin{definition}
An \textit{inflation of a classical model} of dimension $m$ is any model $(D,d)$  in ideal configuration with exactly $m$ accessible statements at level $d$.
 \end{definition}

 \begin{restatable}{lemma}{classicalsublattices}
 \label{lem:classical-sub-lattices}
 Every classical model with $m$ atomic elements can be inflated only to models $(D,d)$ where, for each accessibility-depth $d$, the set of allowed $D$ is
\begin{equation}
D\in\mathcal{D}_{m,d}\coloneqq\{md\}\cup \bigcup_{i=1}^{m-1}\{(m+1)d+i\}
\end{equation}
 \end{restatable}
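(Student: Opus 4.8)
The plan is to derive this essentially as a corollary of Lemma~\ref{lem:integer-multiples}. By definition, an inflation of the classical model of dimension $m$ is a model $(D,d)$ in an ideal configuration whose number of accessible statements at level $d$ is exactly $m$. Lemma~\ref{lem:integer-multiples} already gives that number in closed form, so for a fixed depth $d$ the admissible $D$ are precisely the solutions of $\#\{\text{accessible statements at level }d\}=m$, and the whole proof reduces to solving this equation.

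First I would split along the dichotomy of Lemma~\ref{lem:integer-multiples}. If $d\mid D$ the equation is $D/d=m$, whose only solution is $D=md$; this accounts for the term $\{md\}$ and corresponds to the situation of Example~\ref{ex:3} in general, where the $m$ accessible level-$d$ statements partition all $D$ atomic statements into disjoint blocks of size $d$ (disjointness being forced by Lemma~\ref{lem:not-intersect}). If $d\nmid D$ the equation is $\lfloor D/d\rfloor-1=m$, i.e. $\lfloor D/d\rfloor=m+1$, i.e. $(m+1)d\le D<(m+2)d$; intersecting this with the condition $d\nmid D$ leaves $D$ of the form $(m+1)d+i$ with $i$ ranging over a block of consecutive positive integers, which is the union term of $\mathcal{D}_{m,d}$. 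Reading off the exact range of $i$ from these inequalities together with the divisibility constraint is what yields the bounds stated in $\mathcal{D}_{m,d}$.

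Second, for the converse direction (that every $D\in\mathcal{D}_{m,d}$ is actually attained) I would exhibit the configuration explicitly: take $A_j=s_{(j-1)d+1}\vee\cdots\vee s_{jd}$ for $j=1,\dots,m$ and, when $D>md$, let the remaining $D-md$ atomic statements form one additional accessible block sitting above level $d$. By Lemma~\ref{lem:uniqueness_ideal_configurations} it suffices to check this single configuration, so one only has to verify admissibility against the accessibility tables — negations, meets and joins of these statements are again disjunctions of whole blocks and hence never land strictly between level $0$ and level $d$ — and that the configuration is indeed ideal, i.e. that no configuration achieves more accessible level-$d$ statements, which is already contained in Lemma~\ref{lem:integer-multiples}.

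Since the counting is done by Lemma~\ref{lem:integer-multiples}, the only real subtlety is the bookkeeping in the $d\nmid D$ case: one must make sure that the accessible ``leftover'' block, although it is accessible (being the negation of a disjunction of accessible statements), does not in turn force some statement at a level strictly below $d$ to be accessible — the very mechanism that collapses Examples~\ref{ex:1} and~\ref{ex:2} to useless models — so that the count really is $m$ and not smaller. Confirming this non-collapse for exactly the residues appearing in $\mathcal{D}_{m,d}$ is the step I would treat most carefully.
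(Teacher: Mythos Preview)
Your approach matches the paper's exactly --- the paper's entire proof of this lemma is the single sentence ``It is a direct application of Lemma~\ref{lem:integer-multiples}'' --- and your first two paragraphs simply unpack that application, while the converse construction and non-collapse check in your last two paragraphs are already implicit in the proof of Lemma~\ref{lem:integer-multiples}. One caveat: carrying out your case $d\nmid D$ carefully gives $(m+1)d<D<(m+2)d$, i.e.\ $i\in\{1,\dots,d-1\}$, whereas the displayed $\mathcal{D}_{m,d}$ has $i\in\{1,\dots,m-1\}$; your derivation is the correct one (e.g.\ for $m=3$, $d=2$ only $D\in\{6,9\}$ yield three accessible level-$d$ statements, not $D=10$), so this is a typo in the paper's formula rather than a flaw in your argument.
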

 
 From lemma \ref{lem:classical-sub-lattices} we have that classical models cannot be inflated to models with arbitrary number of atomic statements $D$. For example, the classical model with $m=2$ cannot be inflated to a model of inaccessible information with $D=5$.

We aim to interpret models of inaccessible informations as refinements of classical models. Specifically, we're intrigued by the implications of considering that each classical model might be a part within a larger model of inaccessible information.  For this reason, from now on we are going to focus on models that are inflations of classical models. In the next section we are going to characterise them.


\section{Composition of models}
\label{sec:composition-of-models}

\begin{figure}[h!]
	\centering
	\includegraphics[width=1\linewidth]{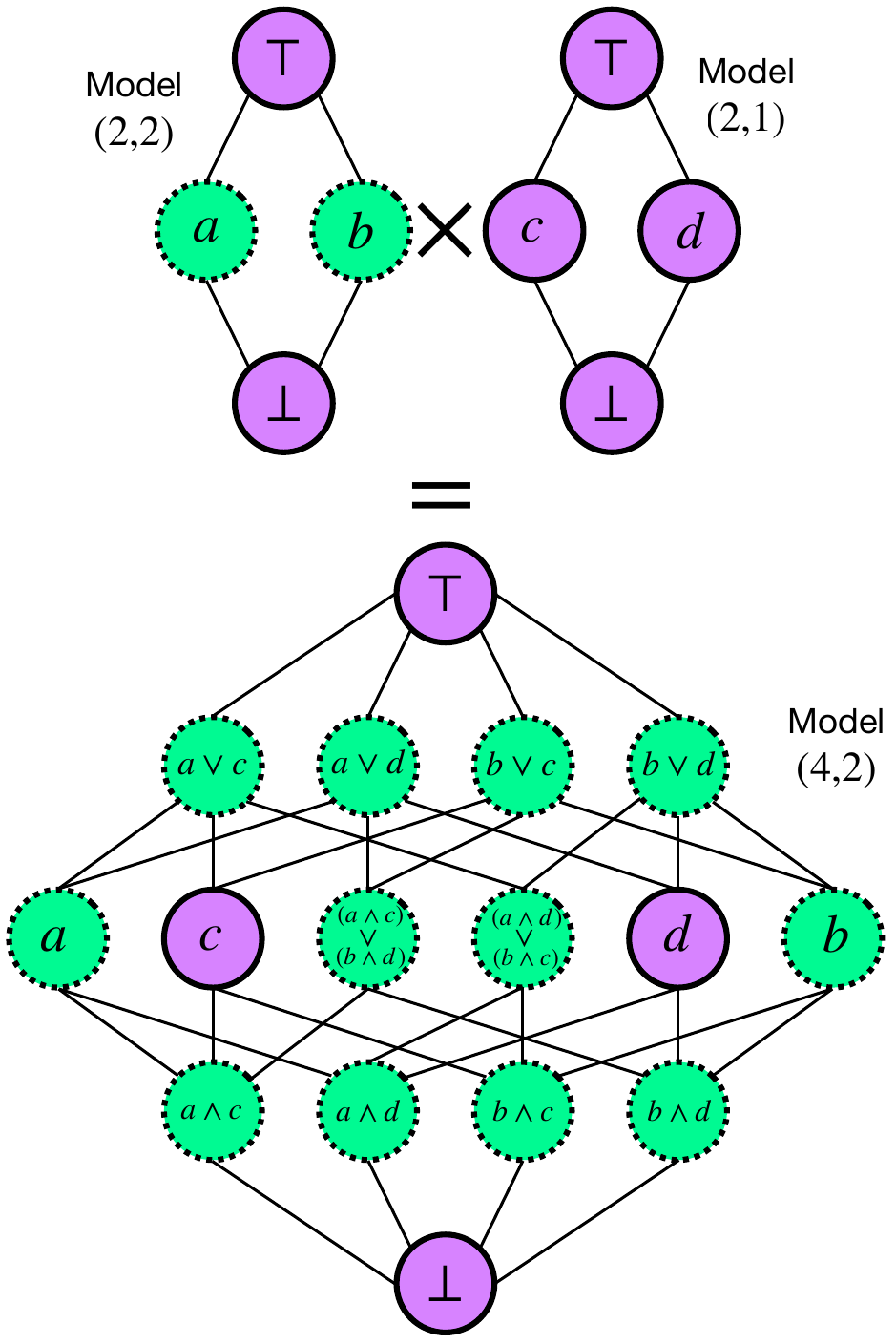}
	\caption{Composition of model $(2,2)$ with model $(2,1)$.}
	\label{fig:composition-vertical}
\end{figure}	

It is not always possible to have a full picture of the system we want to describe straight from the beginning. One usually describes the many different parts of a system one after the other, hoping to reach a complete description through a process of reconstruction. For this reason, guided by this principle of reductionism, we might be interested in having a rule for joining different models in a single model. In this section we are going to see how to glue together two inaccessible information models.

The composition of two models $(D_1,d_1)$ and $(D_2,d_2)$ with atomic statements $\{r_1,r_2,\dots,r_{D_1}\}$ and $\{s_1,s_2,\dots,s_{D_2}\}$ is is a new lattice of statements with $D_1 D_2$ atomic elements $\{p_{\alpha}\}_{\alpha=1,\dots,D_1\cdot D_2}$ and accessibility-depth $d_1 d_2$. We write $(D_1,d_1)\times (D_2,d_2)=(D_1 D_2, d_1 d_2)$. Each atomic element of the composed lattice is the product of the atomic elements of the two original lattices as
\begin{equation}
	\{p_{\alpha}\}_{\alpha=1,\dots,D_1\cdot D_2}\coloneqq\{r_i \cdot s_j\}_{i=1,\dots,D_1 \atop j=1,\dots,D_2},
\end{equation}
where $\alpha=(i-1)D_2+j$.

The composition of two models can be thought as the model where the atomic elements are all the combinations of one element of the first model and one element of the second model connected by the binary operator $\wedge$. This  is the reason why the number of atomic elements multiply.
Concurrently, we expect for the experimentally accessible statements of the two models to still be experimentally accessible in the composed model. For example consider the statement $r_1 \vee \dots  \vee r_{d_1}$ at level $d_1$ of the first lattice to be experimentally accessible. When composing the two lattices we expect for this statement to still be experimentally accessible. Statements that were respectively at level $d_1$ and $d_2$ of the two models, are at level $d_1 d_2$ of the composed model. 

\begin{example}
\label{ex:composition}
In figure \ref{fig:composition-vertical} we represent the composition of a $(2,2)$ model with a $(2,1)$. The composite model is a $(2\cdot 2=4,2\cdot 1=2)$ model with accessibility values inherited from the two initial models.
\end{example}

Since we focus on model of inaccessible informations that are inflations of classical models, we want to find a general procedure that associates to each classical model of dimension $m$ an inaccessible information model such that this procedure is compatible with the composition of models. We call this procedure an \textit{inflation procedure}. Asking for the inflation procedure to be compatible with the composition of models means that we want the composition of inflations of classical models to be equivalent to the inflation of the composition of classical model.
This procedure is given in the following lemma proved in the appendix \ref{appendix:proofs}.

\begin{restatable}{lemma}{scalingdimension}
\label{lem:scaling-dimension}
There exists one unique family of inflation procedures, parametrised by $c\in\mathbb{N}_{>0}$. Each of these procedures associates to each classical model of dimension $m$ the inaccessible information model $(m^{c+1},m^c)$.
\end{restatable}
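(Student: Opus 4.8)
The plan is to characterize which dimension-scaling maps $m \mapsto D(m)$ can arise from a valid inflation procedure, and then pin them down uniquely by the compatibility-with-composition constraint. First I would set up notation: an inflation procedure is a rule $\Phi$ that assigns to each classical model of dimension $m$ an inaccessible-information model $(D(m), d(m))$ which is an inflation of that classical model, i.e. (by the definition of inflation) a model in ideal configuration with exactly $m$ accessible statements at level $d(m)$. By Lemma~\ref{lem:integer-multiples}, having exactly $m$ accessible statements at level $d$ forces either $D = m\,d$ (when $d \mid D$), or $D$ lies in the ``remainder'' branch with $\lfloor D/d \rfloor - 1 = m$. Combined with Lemma~\ref{lem:classical-sub-lattices}, which says $D \in \mathcal{D}_{m,d} = \{md\} \cup \bigcup_{i=1}^{m-1}\{(m+1)d + i\}$, we already have a finite menu of admissible $(D(m), d(m))$ for each $m$ once $d(m)$ is fixed.

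Next I would impose the composition constraint. The requirement is that $\Phi(\text{classical } m_1) \times \Phi(\text{classical } m_2)$ equals $\Phi(\text{classical } m_1 m_2)$, where $\times$ is the composition of models defined in Section~\ref{sec:composition-of-models}: $(D_1,d_1)\times(D_2,d_2) = (D_1 D_2, d_1 d_2)$. This gives the functional equations $D(m_1 m_2) = D(m_1) D(m_2)$ and $d(m_1 m_2) = d(m_1) d(m_2)$, i.e. both $D$ and $d$ must be completely multiplicative functions of $m$. A completely multiplicative function on $\mathbb{N}_{>0}$ is determined by its values on primes, but here I also have the pointwise structural constraint from the previous paragraph tying $D(m)$ and $d(m)$ together for every $m$. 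The key observation is that multiplicativity is extremely rigid: applying it with $m_1 = m_2 = m$ gives $D(m^2) = D(m)^2$ and $d(m^2) = d(m)^2$, and iterating, $D(m^k) = D(m)^k$, $d(m^k) = d(m)^k$ — so the growth of $D$ and $d$ along powers of a fixed $m$ is purely exponential, which is incompatible with the additive ``$+i$'' corrections in the remainder branch of $\mathcal{D}_{m,d}$ for large $k$. This should rule out the remainder branch entirely, forcing $D(m) = m\, d(m)$ for all $m$.

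Once $D(m) = m\,d(m)$, multiplicativity of $D$ and of $d$ is consistent (their ratio $D/d = m$ is automatically multiplicative), so the only remaining freedom is the choice of the completely multiplicative function $d$. I would then argue that $d$ must have the form $d(m) = m^c$ for a single constant $c \in \mathbb{N}_{>0}$. The cleanest route: for this to be a genuine family of inflation procedures defined uniformly for all $m$ — in particular behaving coherently on composites built from different primes — one needs $d$ to be a fixed power map rather than an arbitrary assignment of prime values; I would make this precise by noting that the inflation procedure is required to be a single rule applying to all classical models simultaneously and compatibly, and that composing the prime-$p$ and prime-$q$ models in both orders forces the exponent $c$ to be the same for every prime. (If the paper intends ``inflation procedure'' to allow prime-dependent exponents, I would instead flag that the intended reading is a uniform power, and that the statement as written — yielding exactly $(m^{c+1}, m^c)$ — is what the uniformity buys us; the exponent $c=0$ is excluded because it gives $d=1$, the trivial/classical case, explaining the restriction $c \in \mathbb{N}_{>0}$.) Plugging $d(m) = m^c$ into $D(m) = m\,d(m)$ yields $D(m) = m^{c+1}$, giving the model $(m^{c+1}, m^c)$, and conversely each such $c$ visibly produces a valid composition-compatible inflation procedure, establishing existence and exhausting the family.

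The main obstacle I anticipate is the uniqueness/rigidity step: ruling out the remainder branch of $\mathcal{D}_{m,d}$ and, more delicately, ruling out ``mixed'' multiplicative functions $d$ that take unrelated values on distinct primes. The first part should follow from the exponential-versus-additive growth mismatch under iterated composition as sketched above; the second part hinges on exactly how strictly ``a family of inflation procedures parametrised by $c$'' is meant to constrain $d$ to be a genuine power function, and that is the point where I would need to lean most carefully on the precise definition of an inflation procedure as a single uniform rule compatible with all compositions, rather than a model-by-model choice.
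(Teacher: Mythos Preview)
Your approach is correct and parallels the paper's, though you reverse the order of the two reductions. The paper first solves the multiplicativity constraint on the depth by citing Acz\'el's classification of solutions to $F(m_1 m_2)=F(m_1)F(m_2)$ as power functions $F(m)=m^c$, and only afterward argues (also rather informally, by asserting the procedure must be ``independent of $m$'') that $D(m)$ must lie in the divisible branch $D=md$ of $\mathcal{D}_{m,d}$. You instead first eliminate the remainder branch via the exponential-versus-additive growth mismatch under iterated squaring, and then address the form of $d$. Your concern about prime-dependent exponents is well placed: the paper sidesteps it by invoking Acz\'el's result for the multiplicative Cauchy equation, which under mild regularity forces a single exponent, whereas a purely arithmetic completely multiplicative $d:\mathbb{N}\to\mathbb{N}$ could in principle assign unrelated powers to distinct primes. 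Both arguments ultimately rest on reading ``inflation procedure'' as a single uniform rule, which is the implicit regularity hypothesis doing the work at that step.
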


As stated at the end of last section, we are interested in inaccessible information models that are inflations of classical models. 
Thus, we would like that for any chosen accessibility-depth $d$, there exists a model $(D,d)$ and, consequently, a set of $D$ atomic statements, such that the ideal configuration of the model $(D,d)$ is an inflation of a classical model. By asking for an inflation procedure with this additional property we obtain that there exists a unique inflation procedure, the one obtained setting $c=1$, since choosing $d=2$ the equation $m^c=2$ with $m,c\in \mathbb{N}$ has the unique solution $m=2,c=1$. We define this as the standard inflation procedure.
\begin{definition}
The general inflation procedure is defined as the map
\begin{equation}
m \to (m^2,m),
\end{equation}
mapping a classical model of dimension $m$ to the model of inaccessible information $(m^2,m)$.
\end{definition}

Thus we have that for inflations of classical models, the number of atomic elements have to scale as the square of the accessibility-depth of the model. The model of panel $(c)$ of Figure \ref{fig:Fig1} can be interpreted as an inflation of a classical model with dimension $m=2$, or as an inflation for accessibility-depth $d=2$.

\begin{observation}
	Since our focus is on models that are inflations of a classical model, from now on we are going to identify a model uniquely by its accessibility-depth $d$, so that a model of accessibility-depth $d$ is a $(d^2,d)$ model.
\end{observation}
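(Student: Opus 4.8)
The plan is to verify that, once we have committed to the standard inflation procedure, the assignment $d \mapsto (d^2,d)$ is a bona fide, single-valued relabelling, so that the accessibility-depth carries exactly as much information as the full pair $(D,d)$. First I would recall the content of the preceding definition: the general (standard) inflation procedure is the map $m \to (m^2,m)$, singled out among the family of Lemma~\ref{lem:scaling-dimension} as the unique member compatible with composition that additionally realises every accessibility-depth, which forces the parameter $c=1$. Our standing restriction is to models lying in the image of this procedure, so every model under discussion has the form $(m^2,m)$ for some classical dimension $m$.

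The substance of the observation is then to read off the accessibility-depth of such a model and to check that the assignment is injective. In $(m^2,m)$ the second coordinate is, by construction, the accessibility-depth, so $d=m$ and hence $D=m^2=d^2$. The map $m \mapsto (m^2,m)$ is injective because its second coordinate already recovers $m$; equivalently, distinct classical dimensions $m \neq m'$ inflate to models of distinct accessibility-depths. Consequently, within the class of standard inflations of classical models, fixing $d$ fixes $m=d$ and therefore $D=d^2$ with no residual freedom, which is exactly the asserted identification $d \leftrightarrow (d^2,d)$.

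The point requiring care---and the only place the argument could go wrong---is that this uniqueness comes from the inflation procedure being a single-valued function, not merely from a model ``being an inflation'' in the sense of the earlier definition. Indeed, by Lemma~\ref{lem:classical-sub-lattices} the classical model of dimension $m$ admits several inflations at depth $d=m$, namely all $(D,m)$ with $D \in \mathcal{D}_{m,m}$, so in the absence of a fixed procedure the label $d$ alone would be ambiguous; the standard procedure removes this ambiguity by selecting the representative $D=md=m^2$. I would also flag that the identification is tied to the choice $c=1$: for the non-standard procedures $m \to (m^{c+1},m^c)$ the depth is $d=m^c$, so that $D=m^{c+1}=d^{\,1+1/c}$, and the clean square relation $D=d^2$ holds only when $c=1$.
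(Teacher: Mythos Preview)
Your justification is correct and in full agreement with the paper's reasoning, but note that the paper treats this as an \emph{observation}—a declaration of convention following immediately from the preceding definition of the general inflation procedure $m\to(m^2,m)$—and offers no separate proof. What you have written is a careful unpacking of why that convention is well-defined (injectivity of $m\mapsto(m^2,m)$, the role of fixing $c=1$, and the contrast with the non-unique inflations of Lemma~\ref{lem:classical-sub-lattices}), all of which the paper leaves implicit; so your proposal is essentially an expanded commentary on the same line of thought rather than an alternative argument.
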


In the following we are going to study the properties of systems with $d$ prime. Systems with $d$ not prime can be thought as unique compositions of systems with $d$ prime.

At this point, this theory seems to not be of any practical use. In fact, there is no real communication between the classical sub-lattice and the whole lattice. One can argue that the formalism of the theory of inaccessible information is just an overparameterised method for  describing a classical model. 

Consider for example the model $(4,2)$. To describe a specific configuration of this model, one has to specify the truth value of $4$ inaccessible elements, the model can then be used to express the accessible information of a classical lattice of dimension just $2$.  
Considering that one is interested in the description of experimentally accessible information, one is describing a two parameters space with four parameters. This freedom is reflected in the fact that more than one configuration of the inaccessible truth value corresponds to the same configuration of accessible truth values. One has, for example, that the configuration where $a\vee b$ is true and $c \vee d$ is false corresponds to the two atomic elements  $(a,b,c,d)$ configurations $(\FFF,\TTT,\FFF,\FFF)$,$(\TTT,\FFF,\FFF,\FFF)$. Thus, for the moment,  the model with inaccessible information is just an over-parametrisation.

Moreover, the structure of the theory is so rigid that, for an ideal model, there is no way to change the accessibility of any statement from $\NNN$ to $\AAA$ without loosing the consistency with the rules of the accessibility table.

In the next section we are going to relax the compositional rules and a richer structure will be possible.

 \section{From the logical lattice to quasi-probability}
 \label{sec:relaxing}
We discussed the concept of inaccessible information in the context of binary logic, where a binary truth value is assigned to each statement. However the modern language of science is not restricted to binary assignments of truth values, but it rather uses the language of probability. The passage to this modern language historically happened through a lengthy development process.
However, using the ideas firstly advanced by R. T. Cox  \cite{cox1946b,cox1961}  one can attribute to this paradigm shift a rationale and formally see probability theory as a generalisation of logic. In fact, one can derive the rules of probability theory from the logic of statements when a concept of "plausibility" is introduced and requested to behave consistently with the rule of logic. Important to this point is that probability theory is not assumed, but it is rather derived from logical axioms and what Jaynes calls "common sense" \cite{jaynes2003b,jaynes1988}. In this work, inspired by these seminal works, we proceed following a similar path with the aim of extending the domain of applicability of this construction to the theory of inaccessible information. Our procedure differs from the canonical one  of R. T. Cox and E. T. Jaynes, as we choose to reduce the assumptions of  "common sense" to the bare minimum. What we obtain is a generalisation of binary logic that behaves as a theory of probability with the difference that the "probability" assignments can be negative and greater than one. For this reason we say that, generalising binary logic while preserving consistency with the structure of the algebra of statements we obtain a theory of quasi-probability. We refer to this procedure as the relaxation of the logical lattice models to quasi-probability models.
A full derivation is given in the appendix \ref{appendix:quasi-probabilities-from-lattice} and the main differences between our derivation and the canonical derivation of R.T. Cox and E. T. Jaynes are highlighted in appendix \ref{appendix:additional-assumptions}.
A key result of the relaxation to quasi-probability models is the construction of a function $\tilde{Q}(y \mid x)$ that takes as argument two statements $x$ and $y$ and return a real number:
\begin{align}
	\tilde{Q}(y \mid x)=\left\{\begin{array}{l}
1 \text { if } (x\wedge y = x)\wedge(x \vee y=y) \\
0 \text { if } x \wedge y=\perp \\
p \text { otherwise, where } p\in \mathbb{R},p \neq 0,p \neq 1.
\end{array}\right.
\end{align}
The function $\tilde{Q}(y \mid x)$ returns 1 if $x$ implies $y$ in the logical sense, and return $0$ if $x$ and $y$ are "uncorrelated" statements.
Asking for function $\tilde{Q}$ to be consistent with the rules of the composition of statements, we find that $\tilde{Q}$ has the following properties
\begin{enumerate}
	\item the sum rule: \begin{equation} 
		\tilde{Q}(x\llor y| z)= \tilde{Q}(x|z) +\tilde{Q}(y|z)-\tilde{Q}(x\lland y | z),
		\end{equation}
	\item the product rule: \begin{equation}
		\tilde{Q}(x \lland y | z)= \tilde{Q}(x| z)\tilde{Q}(y|x \wedge z),
		\end{equation}
	\item the Bayes' theorem: \begin{equation}
		\tilde{Q}(y|x\lland z)=\frac{\tilde{Q}(y|z)\tilde{Q}(x|y\lland z)}{\tilde{Q}(x|z)}.
	\end{equation}
\end{enumerate}
This shows that, to be consistent with the rules of composition of statements, the function $\tilde{Q}$ has to satisfy the properties of a conditional probability function. The difference between $\tilde{Q}$ and a conditional probability function is that $\tilde{Q}$ is not constrained to return values in the interval $[0,1]$.

Using the function $\tilde{Q}$ we can proceed with the relaxation to a quasi-probability model by replacing the binary truth labels $\TTT$ and $\FFF$ associated to each statement $s_i$ with a real number $Q(s_i):=\tilde{Q}(s_i\mid \top)\in\mathbb{R}$ such that the following properties hold:
\begin{itemize}
\item $Q(\top)=1$,
\item $Q(\perp)=0$,
\item $Q(s_i)\in \mathbb{R}$,
\item $\sum_{s\in \mathcal{A}(\mathcal{S})}Q(s)=1$,
\item $Q(s_i \vee s_j)=Q(s_i)+Q(s_j)-Q(s_i\wedge s_j)$,
\item $Q(s_i \wedge s_j)= Q(s_j)Q(s_i\mid s_j)$,
\end{itemize}
where $\mathcal{A}(\mathcal{S})$ is the set of atomic elements of a set of statements $\mathcal{S}$ (see appendix \ref{appendix:algebra-of-statements}).
The  value of $Q(s_i)$ is the quasi-probability assigned to the statement $s_i$.
Because of the property of $Q$ the quasi-probability assignment of every composite statement is completely specified by the assignment of $Q$ on the atomic statements $\mathcal{S}=\{s_1,s_2,\dots,s_{d^2}\}$.
Thus a configuration of the quasi-probability model of the theory of inaccessible information is completely characterised by the vector
\begin{align}
	\vec{q}&\coloneqq \{Q(s_1),Q(s_2),\dots,Q(s_{d^2})\}.
\end{align}
The explicit value of $Q(s)$ on a generic statement $s=s_i\vee s_j \vee \dots \vee s_k$, where ${s_i,s_j,\dots,s_k}$ are atomic statements, can be easily computed from the knowledge of $\vec{q}$ as  
\begin{align}
\label{eq:propagation-of-q}
	Q(s_i\vee s_j \vee \dots \vee s_k)&=Q(s_i)+ Q(s_j) + \dots + Q(s_k)\\
	&=q_i+q_j+\dots +q_k.
\end{align}
Using this, one can verify that $\vec{q}$ is a proper quasi-probability vector, in fact:
\begin{equation}
1=Q(\top)= \{Q(s_1),Q(s_2),\dots,Q(s_{d^2})\}=\sum_{i=1}^{d^2}q_i.
\end{equation}
The fact that we obtain an assignment of quasi-probabilities instead of probabilities is in complete agreement with the key ideas of the theory of inaccessible information. Experimentally inaccessible statements are not physically defined object and thus, since they cannot be measured, it makes no sense to associate to them a probability distribution. 
Nevertheless, as stated in the previous section, we are focusing on models of inaccessible information that are inflations of classical models. Thus every model will have some statements that are accessible. We expect that to each accessible statement we can associate a well defined probability. 
Thus, to ensure consistency, for any model with accessibility-depth $d$ we ask for the function $\tilde{Q}$ restricted to the classical sub-lattice to behave as a function $\tilde{P}$ defined as
\begin{align}
	\tilde{P}(y \mid x)=\left\{\begin{array}{l}
1 \text { if } x \leq y \\
0 \text { if } x \wedge y=1 \\
p \text { otherwise, where } p\in (0,1),
\end{array}\right.
\end{align}
 allowing us to recover standard probability theory on the experimentally accessible statements, i.e. ${P(s)\coloneqq \tilde{P}(s\mid \top)}$ is positive for accessible elements.
 This restriction imposes some constraints on the admissible $\vec{q}$.
For example, if $s_1,s_2$ are two atomic statements, asking for $s_1\vee s_2$ to be experimentally accessible, corresponds to asking for $P(s_1 \vee s_2)=Q(s_1 \vee s_2)=q_1+q_2\in[0,1]$.

We notice that this restriction on the admissible quasi-probability assignments is effectively connecting the two notions of accessibility and truth value that in the logical lattice were decoupled. Moving from the logical lattice to a quasi-probabilistic lattice we have relaxed both the truth value assignments and the accessibility value assignments. We relaxed the binary truth value to a continuous value specified by a function of the element of the lattice.
At the same time, requesting for the consistency of the probability assignment with the classical sub-lattice, the accessibility label, distinguishing between $\AAA$ and $\NNN$, relaxes to the distinction between statements being characterised by probabilistic or quasi-probabilistic values.

In the logical lattice truth values and accessibility values were not communicating. In the quasi-probability lattice the relaxed version of truth values and accessibility values become intertwined creating a rich structure of allowed distributions. In the quasi-probability models, the constraints on allowed configurations are not dictated by consistency with truth and accessibility tables, but by a series of inequalities and equalities determining the probability or quasi-probability character of the assignments of $Q$ to every statement. These assignments propagate in the lattice through the simple rule of equation \eqref{eq:propagation-of-q}.


We notice that by restricting to models that are the inflation of classical models, we always have a proper probability vector at the level of the accessibility-depth $d$. In fact, at level $d$, all the accessible statements of the logical lattice are non-intersecting and their composition with the binary operator $\vee$ is $\top$, this means that the sum of their probability assignment is $1$. 
 
Before moving on to the next section, we restate here that the quasi-probability assignments of a model are completely characterised by the values of the vector $\vec{q}$. We call $\vec{q}$ the \textit{state} of the model.

It will be the purpose of the next sections to explicitly characterise the constraints that characterise the allowed states $\vec{q}$.
\section{The quasi-probability models}


\subsection{Maximal expressive standard model}
Consider the model with accessibility-depth $d=2$ in panel $(c)$ of Figure \ref{fig:Fig1}. Relaxing to the quasi-probability lattice we associate to the model a quasi-probability vector $\vec{q}=(q_1,q_2,q_3,q_4)=(Q(a),Q(b),Q(c),Q(d))$ describing the state of the model. As prescribed by our assumptions, on the classical sub-lattice we restrict to a probability distribution. This implies that the allowed vectors $\vec{q}$ are such that $Q(a\vee b)=P(a \vee b)\in [0,1]$ and $Q(c\vee d)=P(c \vee d)\in [0,1]$.
This imposes the constraints $q_1+q_2\in [0,1]$ and $q_3+q_4\in [0,1]$.

At the end of section \ref{sec:composition-of-models} we observed how the logical lattice structure was not helping us in modelling new kind of structures, but it was, instead, just an overparameterisation of some classical lattice. 

For the relaxed lattice this is not the case anymore. We can indeed restrict other elements of the lattice to be probability distributions without inconsistencies. 

\begin{example}
\label{ex:three-probabilities}
	Consider the model with accessibility-depth $d=2$ with associated $\vec{q}=(Q(a),Q(b),Q(c),Q(d))$.
	The vector  $\vec{p}^{[1]}=(Q(a \vee b), Q(c \vee d))$ on the classical sub-lattice with atomic elements $a \vee b$, $c \vee d$ is restricted to be a probability vector.
	
	In the logical lattice we could have defined the classical sub-lattice on $a \vee c$ and $b \vee d$, \textbf{instead} that on $a\vee b$ and $c \vee d$. 
	
	In the quasi-probability model we can restrict \textbf{also} $\vec{p}^{[2]}=(Q(a \vee c), Q(b \vee d))$ to be a probability vector, as well as $\vec{p}^{[3]}=(Q(a \vee d), Q(b \vee c))$.
	The choice of these three different probability vectors imposes the constraints
	\begin{align}
		q_i+q_j\in[0,1]
	\end{align}
	for each $i,j \in \{1,2,3,4\}$ and $i\neq j$.
	We can see that these constraints are not equivalent to the constraint $q_i\in[0,1]$ for each $i\in\{1,2,3,4\}$, thus $\vec{q}$ is still a quasi-probability vector with possibly negative entries.  
	Differently from the logical lattice, in the quasi-probability formulation we can have three accessible probability vectors at level $d$ while leaving the vector $\vec{q}$ free to take negative values and thus without constraining the model to be classical. 
	A graphical representation of this example is found in Figure \ref{fig:MESd2}.
\end{example}

Following the notation introduced in example \ref{ex:three-probabilities}, we denote the different probability vectors obtained at level $d$ with the upper-right index in square bracket. We call these probability vectors the \textit{ associated accessible probability vectors} of the model. The accessible probability vector associated with the accessible statements of the original logical lattice will always be denoted as $\vec{p}^{[1]}$.

In example \ref{ex:three-probabilities} we note that statements within the same probability vector combines to be a probability again. For example $\vec{p}^{[1]}_1+\vec{p}^{[1]}_2\in[0,1]$. Elements from different probability vectors are not assured to combine to a probability.  For example $\vec{p}^{[1]}_1+\vec{p}^{[2]}_1=q_1+q_2+q3$ can lie outside the interval $[0,1]$. 
\footnote{Mapping these properties to the case of the logical lattice, the different accessible probability vectors of the quasi-probability model correspond to different families of accessibility. In the case of example \ref{ex:three-probabilities} we would have that each statement has three accessibility labels. We call these labels accessibility-$1$, accessibility-$2$, accessibility-$3$. The statements of $\vec{p}^{[1]}$ are accessible for accessibility-$1$, while inaccessible for accessibility-$i\neq 1$. The composition of different accessibilities is not assured to be accessible. The rule of compositions do not mix accessibilities, in the sense that, while the composition of two accessible-$1$ statements is accessible-$1$, the composition of a statement accessible-$1$ with a statement accessible-$2$ is not guaranteed to be accessible as they are accessible with respect to two different accessibility labels.}

We see that there is not a hard constraint on the number of statements that we can constrain to be part of a probability vector. For a generic model with accessibility-depth $d$ we ask how many probability vectors $\vec{p}^{[i]}$ at level $d$  it makes sense to consider. We note that requesting for more than one accessible probability distribution in the lattice creates a tension between two properties of our models:
from one side, imposing more elements of the lattice to be restricted to probabilities increases the \textit{expressivity} of the model, in the sense that the more probability distributions are possible to be encoded in the model the more experimentally relevant information the model contains.
At the same time, requiring for more elements of the lattice  to be restricted to probabilities, imposes more restrictions on the admissible $\vec{q}$, these restrictions will then propagate to the possible probability distribution of the model and thus on the possible experimentally relevant information the model contains.
We want our theory of inaccessible information to be as expressive as possible imposing as few restriction on $\vec{q}$ as possible.

In particular we would like to have the least constrained theory for which each $\vec{q}$ is in one-to-one correspondence with a unique set of accessible probability vectors $\{\vec{p}^{[i]}\}_{i}$.

With this objective in mind, we define the maximal expressive standard models.
\begin{definition}[Maximal Expressive Standard models (MES models)]
\label{def:MESmodel}
	A model with accessibility-depth $d$, with $d$ prime, is a Maximal Expressive Standard (MES) model of accessibility-depth $d$ if it maximises the number of probability vectors $p^{[i]}$ at level $d$ with the constraint that every two statements from different accessible probability vectors at level $d$ have in common just one atomic statement.
\end{definition}

\begin{example}
	Consider the MES model with accessibility-depth $d=2$ and its three probability vectors $\vec{p}^{[1]},\vec{p}^{[2]},\vec{p}^{p[3]}$ (see Figure \ref{fig:MESd2}). Taking a statement of $\vec{p}^{[1]}$ and any other statement in $\vec{p}^{[2]}$ or $\vec{p}^{[3]}$, we have an overlap of at most one statement. For example we take $a \vee b$ from $\vec{p}^{[1]}$ and $b\vee d$ from $\vec{p}^{[2]}$, the overlap is just the statement $b$.
\end{example}

\begin{restatable}{lemma}{MESdplusone}
\label{lem:MESdplusone}
A MES model of accessibility-depth $d$ has $d+1$ probability vectors at level $d$.
\end{restatable}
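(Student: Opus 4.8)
The plan is to recognize that the combinatorial structure underlying an MES model of accessibility-depth $d$ is exactly that of a resolvable design / affine plane on the $d^2$ atomic statements. Each accessible probability vector $\vec{p}^{[i]}$ at level $d$ partitions the $d^2$ atoms $\{s_1,\dots,s_{d^2}\}$ into $d$ blocks of size $d$ (the $d$ non-intersecting accessible statements summing to $\top$, by Lemma \ref{lem:not-intersect} and Lemma \ref{lem:integer-multiples} with $\bmod(d^2,d)=0$). The defining constraint of Definition \ref{def:MESmodel} says that a block from $\vec{p}^{[i]}$ and a block from $\vec{p}^{[j]}$, $i\neq j$, share exactly one atom; equivalently, the partitions are pairwise "orthogonal" in the sense of mutually orthogonal Latin squares. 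So the lemma reduces to: the maximum number of pairwise-orthogonal partitions of a $d^2$-set into $d$ blocks of size $d$ is $d+1$.

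First I would set up the counting. Fix one atom, say $s_1$. In each partition $\vec{p}^{[i]}$, $s_1$ lies in a unique block of size $d$, i.e. together with $d-1$ other atoms. The pairwise one-overlap condition forces the $(d-1)$-element "companion sets" of $s_1$ coming from distinct partitions to be disjoint: if two partitions $\vec{p}^{[i]},\vec{p}^{[j]}$ had blocks through $s_1$ sharing a second atom $s_k$, those two blocks would overlap in $\{s_1,s_k\}$, contradicting the single-overlap requirement. Since there are only $d^2-1$ atoms other than $s_1$, and each partition consumes $d-1$ of them in a way disjoint across partitions, the number of partitions is at most $(d^2-1)/(d-1) = d+1$. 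This gives the upper bound $d+1$.

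Next I would argue the bound is achieved, so that "maximises" in Definition \ref{def:MESmodel} indeed yields $d+1$ and not something smaller. For $d$ prime this is the standard construction: identify the $d^2$ atoms with the points of the affine plane $\mathbb{F}_d^2$; the $d+1$ parallel classes of lines (one for each slope in $\mathbb{F}_d\cup\{\infty\}$) are $d+1$ partitions into $d$ blocks of size $d$, and two lines of different slopes meet in exactly one point. One should also check this configuration is realizable as a genuine configuration of the quasi-probability model — i.e. that the constraints $q_i+\dots+q_k\in[0,1]$ for the blocks of all $d+1$ parallel classes still leave $\vec{q}$ free to have negative entries (otherwise the maximum might be forced lower by an emptiness-of-state-space issue rather than by combinatorics). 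A dimension count suffices: there are $d^2$ free parameters $q_i$ subject to $\sum_i q_i = 1$ and $(d+1)d$ inequality constraints that are far from forcing each $q_i\ge 0$, so the admissible region is a full-dimensional polytope strictly larger than the probability simplex, exactly as in Example \ref{ex:three-probabilities} for $d=2$.

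The main obstacle I anticipate is the existence half for general prime $d$ — specifically, being careful that the affine-plane construction really satisfies the \emph{exact} "share just one atom" clause (not "at most one") for every pair of blocks from different vectors, and that it is compatible with how compositions propagate via equation \eqref{eq:propagation-of-q}; the upper bound counting argument is routine. (The hypothesis that $d$ is prime is what guarantees $\mathbb{F}_d$ exists; for non-prime $d$ the paper has already restricted attention to prime $d$, treating composite $d$ via composition of models, so no issue arises there.)
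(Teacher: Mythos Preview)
Your proposal is correct and, at the level of ideas, coincides with the paper's proof: both recognise the structure as the affine plane over $\mathbb{F}_d$ (the paper writes it out explicitly with the modular formula $x=\bmod(\lfloor j/d\rfloor\,y+\bmod(j,d),d)$, which is exactly the line of slope $y$ and intercept $x$ through the point $(\lfloor j/d\rfloor,\bmod(j,d))$), and both obtain the $d+1$ parallel classes as the accessible probability vectors.

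The one genuine difference is in the upper-bound argument. You count companions of a fixed atom $s_1$ across the partitions, obtaining disjoint $(d-1)$-sets inside a $(d^2-1)$-set and hence at most $(d^2-1)/(d-1)=d+1$. The paper instead shows that every \emph{pair} $\{s_m,s_n\}$ lies in a unique block of the construction, so any additional block would meet some existing block in at least two atoms. Your argument is a priori (it bounds any admissible family, independently of the construction), while the paper's is a posteriori (it proves maximality of the specific construction); both are standard and equally valid here. Your extra remark about realizability---that the $d(d+1)$ interval constraints do not collapse the state space to the probability simplex---is a nice sanity check that the paper leaves implicit.
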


\begin{definition}[Set of states of a MES model]
	\label{def:MESd}
For a MES model with accessibility-depth $d$, $\text{MES}_d$ is the the set of states $\vec{q}$ compatible with the model. 
\begin{equation}
\text{MES}_d=\left\{ \vec{q}\in\mathbb{R}^{d^2} \mid \sum_{i=1}^{d^2}q_i=1; \; p_j^{[k]}\in [0,1],\; \forall\; \substack{k=1,\dots,d+1\\ j=1,\dots,d}\right\}
\end{equation}
where $\{\vec{p}^{[k]}\}_{k=1}^{d+1}$ are the associated probability vectors.
\end{definition}

The definition of MES models is justified by the following lemma.

\begin{restatable}{lemma}{MESonetoone}
\label{lem:MES-1-to-1}
	MES models are such that there is a one-to-one correspondence between $\vec{q}$ and the accessible probability distributions at level $d$, with the minimum number of constraints on the admissible states $\vec{q}$.
\end{restatable}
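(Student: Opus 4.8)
The plan is to unpack the statement of Lemma~\ref{lem:MES-1-to-1} into its two claims — injectivity of the map $\vec{q}\mapsto\{\vec{p}^{[k]}\}_{k=1}^{d+1}$ and minimality of the constraint set — and establish each separately, leaning on Lemma~\ref{lem:MESdplusone} (which tells us a MES model has exactly $d+1$ probability vectors at level $d$) and on the combinatorial structure forced by Definition~\ref{def:MESmodel} together with Lemma~\ref{lem:not-intersect}.

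For \textbf{injectivity}, the key observation is a counting/linear-algebra argument. Each probability vector $\vec{p}^{[k]}$ has $d$ components, each of which is by equation~\eqref{eq:propagation-of-q} a sum of $d$ of the $q_i$; the $d$ statements within a single $\vec{p}^{[k]}$ partition the $d^2$ atomic statements (Lemma~\ref{lem:not-intersect}), so $\vec{p}^{[k]}$ records $d$ linear functionals of $\vec{q}$ whose sum is the single global constraint $\sum_i q_i = 1$. Thus one $\vec{p}^{[k]}$ alone gives $d-1$ independent conditions on $\vec{q}$. The MES overlap condition — any two statements drawn from different $\vec{p}^{[k]}$'s share exactly one atomic statement — is precisely the incidence pattern of $d+1$ mutually orthogonal resolutions (a net / affine-plane structure of order $d$, which exists because $d$ is prime), and this is what makes the functionals coming from distinct vectors ``generic'' relative to one another. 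I would show that the $d^2$ functionals $\{p^{[k]}_j\}$ together span a subspace of the dual of $\mathbb{R}^{d^2}$ of dimension $d^2$ — equivalently, that knowing all the $p^{[k]}_j$ pins down $\vec{q}$ uniquely — by exhibiting, for each atomic statement $s_\ell$, a linear combination of the $p^{[k]}_j$ that isolates $q_\ell$. Concretely, summing over all $d+1$ of the blocks that contain a fixed $s_\ell$ counts $s_\ell$ with multiplicity $d+1$ and every other atom with multiplicity exactly $1$ (again the single-overlap property), so $\sum_{k}p^{[k]}_{j(k,\ell)} = q_\ell \cdot (d+1) + (1 - q_\ell) = d\,q_\ell + 1$, whence $q_\ell$ is recovered. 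That identity does the whole job.

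For \textbf{minimality}, I would argue that no proper subset of the constraints ``$\vec p^{[k]}$ is a probability vector, $k=1,\dots,d+1$'' still yields a one-to-one correspondence. Dropping any single component constraint $p^{[k]}_j\in[0,1]$ would, by Definition~\ref{def:MESmodel}, lower the number of fully-constrained probability vectors below the maximum, contradicting maximality — so the content here is really that \emph{any} MES model achieves the one-to-one property, and Definition~\ref{def:MESmodel} has already selected the models with the fewest probability vectors (hence fewest constraints) for which the recovery identity above still goes through: $d+1$ is simultaneously the number delivered by Lemma~\ref{lem:MESdplusone} and the smallest number of $d$-term resolutions whose functionals span the full $d^2$-dimensional dual, since $d$ resolutions give only $d(d-1)+1 = d^2-d+1 < d^2$ independent conditions. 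So I would state minimality as: with only $d$ probability vectors the map $\vec q\mapsto\{\vec p^{[k]}\}$ has a nontrivial kernel, so $d+1$ is forced, and that is exactly the MES count.

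The \textbf{main obstacle} I anticipate is making the ``exactly one atomic statement in common'' hypothesis do its combinatorial work rigorously — i.e., verifying that the incidence matrix of the $d+1$ blocks of size $d$ over $d^2$ atoms really has the claimed rank $d^2$ and that the multiplicity bookkeeping ($d+1$ for the distinguished atom, $1$ for all others) is forced rather than merely consistent. This is where the primality of $d$ enters (existence of $d+1$ mutually orthogonal Latin squares / an affine plane of order $d$), and I would either invoke that classical fact or, more self-containedly, construct the resolutions explicitly by indexing the $d^2$ atoms by $\mathbb{F}_d\times\mathbb{F}_d$ and taking the $d+1$ pencils of parallel lines (the $d$ slopes plus the vertical pencil) as the blocks — then the single-overlap property and the rank computation are immediate. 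The rest (the sum rule, $Q(\top)=1$, propagation via~\eqref{eq:propagation-of-q}) is routine bookkeeping already set up in the preceding section.
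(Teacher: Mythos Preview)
Your proposal is correct and follows essentially the same route as the paper. The heart of both arguments is the recovery identity $\sum_{k=1}^{d+1} p^{[k]}_{j(k,\ell)} = d\,q_\ell + 1$, obtained from the single-overlap property of the MES incidence structure; the paper derives it by summing over the blocks containing a fixed atom in the explicit modular construction of Lemma~\ref{lem:MESdplusone}, which is exactly your $\mathbb{F}_d\times\mathbb{F}_d$ pencil-of-lines picture rewritten. For minimality, the paper simply asserts that dropping one probability vector destroys recoverability, whereas your dimension count $d(d-1)+1<d^2$ makes this explicit --- so if anything your treatment of that half is slightly tighter than the paper's.
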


%
\subsection{Assumption 1 in the quasi-probability lattice and the inaccessibility measure}
In this section we deal with a doubt one encounters relaxing from the logical lattice to the quasi-probability models. We decided to assign to the atomic elements of the lattice a vector of quasi-probabilities such that, through the propagation rules, this transforms to probability assignments for the associated probability vectors at level $d$.
The doubt arises from the fact that the set of probability vectors is a sub-set of the set of quasi-probability vectors. Thus, if we cannot  distinguish when we are assigning a vector of quasi-probabilities or a probability vector to the atomic elements of the lattice, how can the distinction between accessible and inaccessible information be reasonably expressed in the quasi-probability models?

 To solve this question consistently, we have to go back and remember the intuition behind assumption 1 and $\mathcal{I}\mathcal{H}$. We want to describe systems where each experimentally accessible statement is composed by $d$ experimentally inaccessible statements. In the logical lattice this fact was clearly captured by the labelling $\AAA$/$\NNN$ and how how the accessibility assignements propagates through the lattice following the rules specified by the accessibility tables. Take for example the model $(4,2)$ of figure \ref{fig:Fig1}. Knowing the statement $a \vee b$ is true, does not give us any clue in inferring which one of the two atomic statements $a$ or $b$ was true. We have complete ignorance about $d$ atomic elements. 
 Asking for all the statements below level $d$ to be experimentally inaccessible, can be read as asking for a complete uncertainty on at least $d$ elements. 
 
We can frame this same request in the language of probability asking for the total information contained in the probability distributions at level $d$ to be no greater than the information we would have if we had complete uncertainty about at least $d$ atomic elements out of $d^2$.
How do we measure this information? To address this question, we will examine some examples in order to identify the key properties that an appropriate information measure for this task should possess.

 \begin{example}
 \label{ex:q-total-cert}
	 	Consider the MES model with $d=2$ together with the specific assignment $\vec{q}=(1,0,0,0)$. The vector $\vec{q}$ is a proper quasi-probability vector, and $\vec{p}^{[1]}=(1,0)$, $\vec{p}^{[2]}=(1,0)$, $\vec{p}^{[3]}=(1,0)$ are proper probability vectors. A $\vec{q}$ so defined implies a certainty for the statement $``a"$, thus a value of truth for $``a"$ in the logical lattice. This is forbidden.
\end{example}
 \begin{example}
 \label{ex:q-2-uncert}
	 	Consider the MES model with $d=2$ together with the specific assignment $\vec{q}=(1/2,1/2,0,0)$. The vector $\vec{q}$ is a proper quasi-probability vector, and $\vec{p}^{[1]}=(1,0)$, $\vec{p}^{[2]}=(1/2,1/2)$, $\vec{p}^{[3]}=(1/2,1/2)$ are proper probability vectors. A $\vec{q}$ so defined has complete uncertainty on two elements so it is consistent with the spirit of assumption 1.
\end{example}

 \begin{example}
	 	Consider the MES model with $d=2$ together with the specific assignment $\vec{q}^{(2)}=(1/2,1/2,0,0)$ and the MES model with $d=3$ together with the assignement $\vec{q}^{(3)}=(1/3,1/3,1/3,0,0,0,0,0,0)$. In the first model we have complete uncertainty over $2$ elements, in the second model we have complete uncertainty over $3$ elements. Now consider the composition of the two models. The state of the composition is described by the composition of $\vec{q}^{(2)}$ and $\vec{q}^{(3)}$ into the vector $\vec{q}^{(6)}=(1/6,1/6,1/6,1/6,1/6,1/6,0,0,\dots,0)$. In the composite model we have complete uncertainty over $6$ elements. We can see that the uncertainty is multiplicative.
\end{example}

In the previous examples we highlighted some desirable properties we would like for our information measure. In particular we noticed that the information measure we are looking for is multiplicative under composition of systems. 
In appendix \ref{appendix:distinguishability} we introduce the one parameter family of \textit{inaccessibility measures} $\uncert_c:\Delta_D\to\mathbb{R}_{+}$, where $c\in \mathbb{R}_{+}$ and $c\neq 1$, mapping a $D$-dimensional probability vector of the probability simplex $\Delta_D$ to a positive real number.\\
The action of the inaccessibility measure $\inacc_c(\vec{q})$ on a probability vector $\vec{p}$ is defined as
\begin{equation}
\indisti_{c}(\vec{p})=\frac{1}{\sqrt[c-1]{\sum_{i=1}^{D}p_i^c}}.
\end{equation}
An inaccessibility measure is a function of a probability vector that quantifies on how many elements we have complete ignorance, or better said, how many elements are inaccessible to our knowledge. For a probability vector $\vec{p}\in \Delta_D$ one has that for every $c$,  $1\leq \uncert_{c}(\vec{p})\leq D$. The main difference between Shannon entropy and the inaccessibility measures is that the latter ones are multiplicative, whereas Shannon entropy is additive. Considering examples \ref{ex:q-total-cert} and \ref{ex:q-2-uncert}, we have that for every $c$, $\uncert_c{(1,0,0,0)}=1$ and $\uncert_c{(1/2,1/2,0,0)}=2$. The derivation of the family $\uncert_c$ compatible with our assumptions can be found in appendix \ref{appendix:distinguishability}, while the comparison of $\uncert_c$ with other information measures is done in appendix \ref{appendix:comparisons-info-measure}.

In a theory of inaccessibile information we are interested in measuring the inaccessibility of quasi-probability vectors. For this reason, in appendix \ref{appendix:extension-to-quasi-prob} we extend the domain of the inaccessibility measure $\inacc$ to the space of quasi-probability states of MES models. Asking for compatibility with the structure of MES models we find that in a theory of inaccessible information there is a unique inaccessibility measure that exists by its own, \textit{the} inaccessibility measure $\inacc$ acting on quasi-probability vectors $\vec{q}$ as:
\begin{equation}
\inacc(\vec{q})=\frac{1}{\sum_{i}q_i^2}.
\end{equation} 
All the inaccessibility measures defined for $c>2$ depends on the inaccessibility measure $\inacc$.


We are finally ready to state assumption $1$ for quasi-probability models.

\begin{BoxWT_purple}{Assumption 1 for quasi-probabilities}
\label{ass:assumption1-MES}
 The admissible states $\vec{q}$ of a MES model of accessibility-depth $d$ 
 have inaccessibility lower bounded by $d$, i.e. $\uncert(\vec{q})\geq d$.
\end{BoxWT_purple}

We have a formulation of assumption 1 that encodes the intuitive idea of accessible \textit{accessibility-depth} in the language of quasi-probabilities. 
%
%
%
Finally we can formulate a theory of inaccessible information.
(For convenience, a schematic comparison between logical lattice and the quasi-probability lattice can be found in table \ref{tab:riassunto} in appendix \ref{appendix:riassunto}.)

\section{A theory of inaccessible information}
A theory of inaccessible information is a MES model obeying assumption 1.
Assumption 1 for MES models (see definition \ref{ass:assumption1-MES}) restricts the admissible states $\vec{q}$ to the ones with inaccessibility $\indisti(\vec{q})$ bigger than the accessibility-depth $d$ of the model.
For a MES model with accessibility-depth $d$, states $\vec{q}$ such that $\indisti(\vec{q})\lneqq d$ are not admitted.
For each accessibility-depth $d$, we denote the set of admissible states as
\begin{equation}
	\mathcal{Q}_d=\Big\{ \vec{q}\in \text{MES}_d \mid \indisti(\vec{q})\geq d \Big\}.
\end{equation}
We define as pure states of the theory the states $\vec{q}\in \mathcal{Q}_d$ that maximise the knowledge on the system. These states are the one that minimise the inaccessibility.
\begin{definition}[Pure states]
	The pure states of a theory of inaccessible information of a MES model of accessibility-depth $d$ are all the admissible states $\vec{q}$ such that $\indisti(\vec{q})=d$. 
\end{definition} 
Because of lemma \ref{lem:quasi-concave} in appendix \ref{appendix:properties-inaccessibility} we know that $\indisti$ is a quasi-concave function over $\mathcal{Q}_d$, and thus, any convex combination of pure states is an admissible state with inaccessibility greater than or equal to $d$.

\section{The theory of inaccessible information with accessibility-depth d=2}

In this section we study the simplest model of our theory of inaccessible information. We study a MES model with accessibility-depth $d=2$.
The model is characterised by $D=d^2=4$ atomic statements $(a,b,c,d)$ to which we associate a state in the form of the quasi-probability vector $\vec{q}=(Q(a),Q(b),Q(c),Q(d))$.
Since we are considering a MES model with accessibility-depth $d=2$, at level $d=2$, from lemma \ref{lem:MESdplusone} we have $d+1=3$ associated accessible probability vectors. Following the construction appearing in the proof of lemma \ref{lem:MESdplusone} these associated accessible probability vectors are:
\begin{align}
	\vec{p}^{[1]}&=(P(a\vee b),P(c\vee d))=(q_1+q_2,q_3+q_4),\nonumber \\
	\vec{p}^{[2]}&=(P(a\vee c),P(b\vee d))=(q_1+q_3,q_2+q_4),\nonumber \\
	\vec{p}^{[3]}&=(P(a\vee d),P(b\vee c))=(q_1+q_4,q_2+q_3).
\end{align}
From an inspection of the lattice representation in figure \ref{fig:MESd2} it is possible to see how these probability distributions obey the \hyperref[def:MESmodel]{definition of MES model}, that is every two statements from different probability vectors at level $d$ have in common just one atomic statement.
It is also evident how the accessibility-depth $d=2$ is a special case, as $d=2$ is the only accessibility-depth for which all statements at level $d$ belong to an accessible probability distribution.

 \begin{figure}[h!]
	\centering
	\includegraphics[width=1\linewidth]{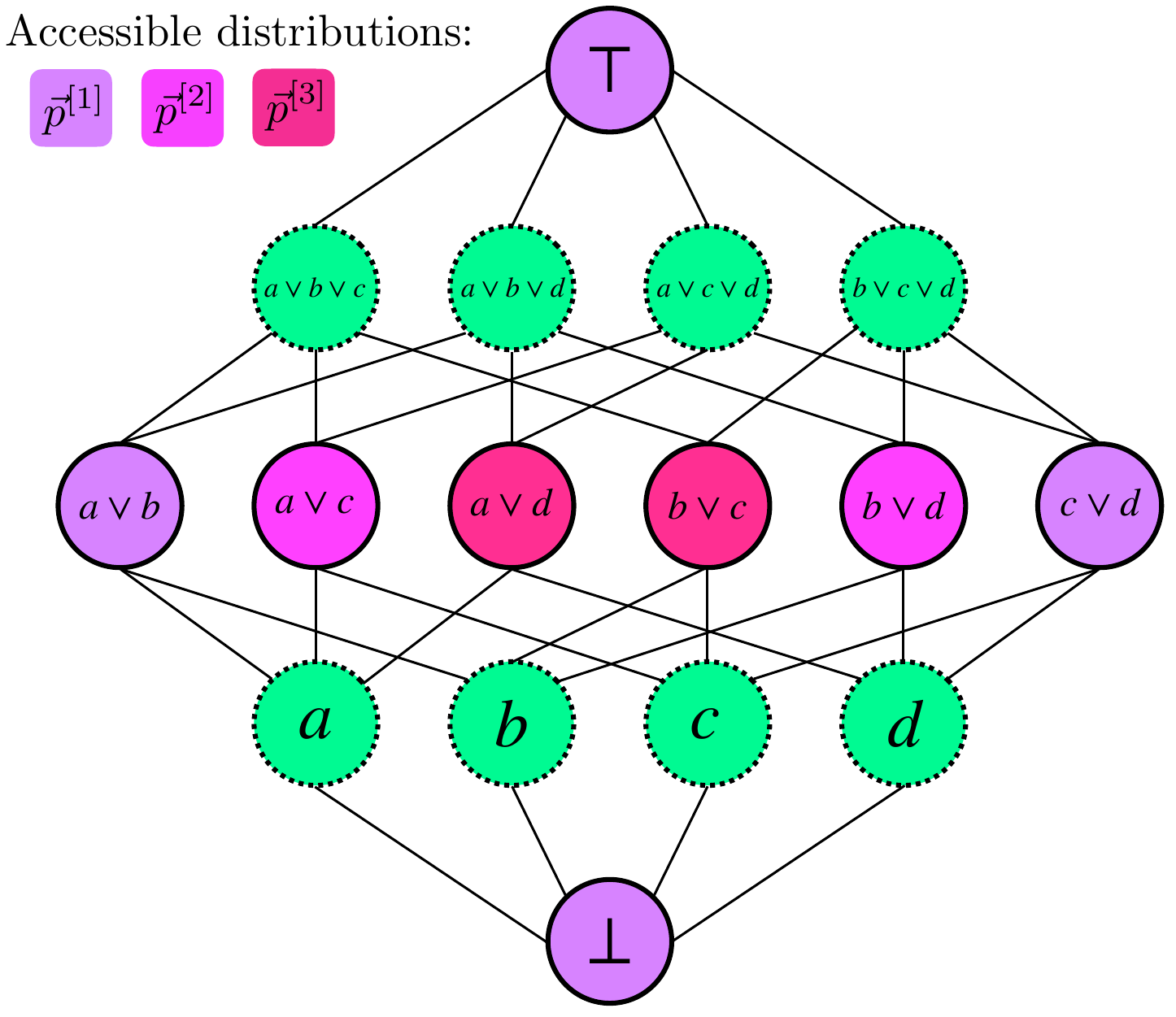}
	\caption{Lattice representation of the MES model with accessibility-depth $d=2$. The $d+1=3$ accessible distribution at level $d$ contains the whole set of statements at level $d$.}
	\label{fig:MESd2}
\end{figure}

We want now to characterise the set $\mathcal{Q}_2$ of admissible states $\vec{q}$.
The constraints on $\vec{q}$ are:
\begin{itemize}
	\item Quasi-probability constraint: $\vec{q}\in\mathbb{R}^4$ with $\sum_{i=1}^4q_i=1$. See section \ref{sec:relaxing}.
	\item Accessibility constraint: $q\in MES_2$ implies that all $\{\vec{p}^{[i]}\}_{i=1}^{d+1}$ must be probability vectors. For the case $d=2$ this implies $(q_i+q_j)\in[0,1]$ for all $i\neq j$. See definition \ref{def:MESmodel} and \ref{def:MESd}.
	\item inaccessibility constraint: $\vec{q}$ has to obey the assumption 1, that is $\indisti_2(\vec{q})\geq 2$. See definition \ref{ass:assumption1-MES}.
\end{itemize}
So $\mathcal{Q}_2$ is the set of all the $\vec{q} \in \mathbb{R}^4$ such that
\begin{numcases}{ } 
	\sum_{i=1}^4q_i=1, \label{eq:condition1}\\ 
	(q_i+q_j)\in[0,1]\;,\; \forall i\neq j, \label{eq:condition2}\\
	\indisc_2(\vec{q})\geq 2.\label{eq:condition3}
\end{numcases}
Equation \eqref{eq:condition1} combined with equation \eqref{eq:condition3} implies equation \eqref{eq:condition2}, thus $\mathcal{Q}_2$ is the set of all the $\vec{q} \in \mathbb{R}^4$ such that
\begin{align}
	\begin{cases} 
	\label{eq:constraints-MES2-X}
		\sum_{i=1}^4q_i=1\\ 
		\indisc_2(\vec{q})\geq 2,
	\end{cases}
\end{align}
or
\begin{align}
	\begin{cases} 
	\label{eq:constraints-MES2}
		\sum_{i=1}^4q_i=1\\ 
		\sum_{i=1}^4 q_i^2\leq \frac{1}{2},
	\end{cases}
\end{align}
where we rewrote the constraint $\indisc_2(\vec{q})\geq 2$ as $\sum_{i=1}^4 q_i^2\leq \frac{1}{2}$.
This is the intersection of an hyperplane and a solid hypersphere in $4$ dimensions. The intersection is a solid sphere in $3$ dimensions, thus the set $\mathcal{Q}_2$ is a convex set with extreme points on the surface of the sphere, that is all the vectors $\vec{q}$ such that conditions \eqref{eq:constraints-MES2} are satisfied and the vectors on the surface of the sphere  are the pure states of the theory as they all have inaccessibility equal to $2$.

\subsection{The qubit}
It should not come as a surprise if in this section we depart for a moment from our theory of inaccessible information for a digression to quantum systems. The analogies were building up and we need to address them explicitly.\\
We consider a qubit system. A state is described by its density matrix $\rho \in \mathcal{H}_2=\mathbb{C}^2$. Within the Bloch sphere formalism one can consider the observables associated to the measure of the spin along one of the three orthogonal axis $x,y,z$.
For each axis $x,y,z$ one defines two projectors $\{\Pi_{\uparrow}^{[\alpha]},\Pi_{\downarrow}^{[\alpha]}\}_{\alpha=x,y,z}$ such that $P^{[\alpha]}_{\uparrow}=\tr[\Pi_{\uparrow}^{[\alpha]}\rho]$ and $P^{[\alpha]}_{\downarrow}=\tr[\Pi_{\downarrow}^{[\alpha]}\rho]$  are the probabilities that the state $\rho$ is measured up or down respectively along the axis $\alpha$.\\
Let us define the four matrices
\begin{align}
\label{eq:Wootters-frame}
\hat{a} &= \frac{1}{4}\left(\id +\sigma_x+ \sigma_y + \sigma_z \right),\nonumber\\
\hat{b} &= \frac{1}{4}\left(\id +\sigma_x- \sigma_y - \sigma_z \right),\nonumber\\
\hat{c} &= \frac{1}{4}\left(\id -\sigma_x+ \sigma_y - \sigma_z \right),\nonumber\\
\hat{d} &= \frac{1}{4}\left(\id -\sigma_x- \sigma_y + \sigma_z \right),
\end{align}
where $\sigma_x$, $\sigma_y$, and $\sigma_z$ are the Pauli matrices.
 \begin{figure}[t]
	\centering
	\includegraphics[width=1\linewidth]{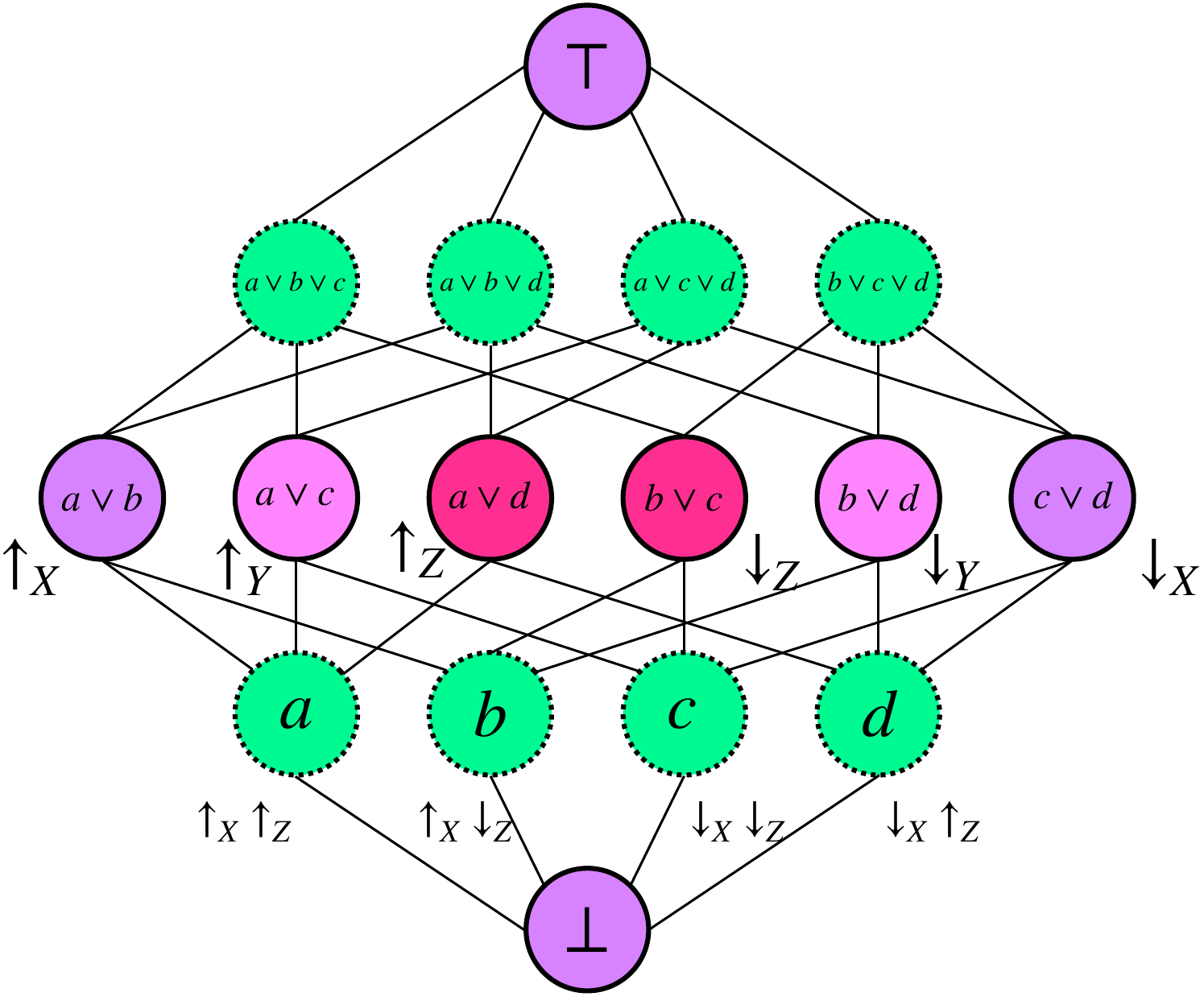}
	\caption{Analogy qubit and MES with accessibility-depth 2.}
	\label{fig:analogy-qubits}
\end{figure}
One can see that
\begin{align}
\Pi^{[x]}_{\uparrow}& =\hat{a}+\hat{b}, &\Pi^{[y]}_{\uparrow}& =\hat{a}+\hat{c}, &\Pi^{[z]}_{\uparrow}& =\hat{a}+\hat{d}, \\ 
\Pi^{[x]}_{\downarrow}& =\hat{c}+\hat{d}, &\Pi^{[y]}_{\downarrow}& =\hat{b}+\hat{d}, & \Pi^{[z]}_{\downarrow}& =\hat{b}+\hat{c}.
\end{align}
If we call $P^{[\alpha \wedge \beta]}_{i,j}$ with $\alpha,\beta \in \{x,y,z\}$ and $i,j\in\{\uparrow,\downarrow\}$ the "probabilities" of measuring the qubit simultaneously along the two axis $\alpha,\beta$ in the directions $i$ and $j$ respectively, we have the following interpretation
\begin{align} 
   P^{[x\wedge z]}_{\uparrow\uparrow}& = \tr[\rho \hat{a}], & P^{[x\wedge z]}_{\uparrow,\downarrow}& =\tr[\rho \hat{b}],\\
   P^{[x\wedge z]}_{\downarrow,\downarrow}& = \tr[\rho \hat{c}], & P^{[x\wedge z]}_{\downarrow,\uparrow}& =\tr[\rho \hat{d}].\\
  \end{align}
We notice that $\hat{a},\hat{b},\hat{c},\hat{d}$ is an orthogonal basis of $\mathcal{H}_2$ with respect to the Hillbert-Schmidt product.
Thus a state $\rho$ is completely characterised by the vector
\begin{equation}
\vec{q}(\rho)=( \tr[\rho \hat{a}], \tr[\rho \hat{b}], \tr[\rho \hat{c}], \tr[\rho \hat{d}]),
\end{equation}
and
\begin{align}
\sum_{i=1}^{4}q(\rho)_i=\tr[\rho]=1.
\end{align}
We called $P^{[\alpha \wedge \beta]}_{i,j}$ "probabilities" with quotation marks, because they are in fact quasi-probabilities. The simultaneous measurements of two different spins is not an observable and $P^{[\alpha \wedge \beta]}_{i,j}$ can be negative.
  
 Furthermore we notice that for the qubit example
\begin{align}
\uncert_2(\vec{q}(\rho))=\frac{2}{Tr[\rho^2]}.
\end{align}

Thus, for a general pure qubit state $\ket{\psi}$ the general quasi-probability vector associated
\begin{equation}
	\vec{q}_{\ket{\psi}}= \left( \bra{\psi}\hat{a}\ket{\psi},\bra{\psi}\hat{b}\ket{\psi},\bra{\psi}\hat{c}\ket{\psi},\bra{\psi}\hat{d}\ket{\psi} \right)
\end{equation}
is characterised by a fixed inaccessibility $\uncert_2(\vec{q}_{\ket{\psi}})=2$ since $\tr{(\ket{\psi}\bra{\psi})^2}=1$.
For a general qubit state $\rho$ the quasi-probability vector associated $\vec{q}(\rho)$ is characterised by a constrained inaccessibility $\uncert_2(\vec{q}(\rho))\geq2$ since $\tr{\rho^2}\leq1$.

We see that the constraints characterising the hermitian matrices describing the qubits, namely
\begin{align}
	\begin{cases} 
		\tr[\rho]= 1\\ 
		\tr[\rho^2]\leq 1.
	\end{cases}
\end{align}
are directly mapped to the constraints \eqref{eq:constraints-MES2-X} characterising the theory of inaccessible information with accessibility-depth $d=2$.

We find that qubits states can be formulated as the states of a MES model of accessibility-depth $d=2$ in a theory of inaccessible information. In figure \ref{fig:analogy-qubits} we see an analog of panel of figure \ref{fig:MESd2} written in terms of quantum states.
In particular, there is a one-to-one correspondence between $\vec{q}$ allowed by a theory of inaccessible information with accessibility-depth $d=2$ and quantum states describing a qubit.

This result can be interpreted within the framework of quantum frames \cite{ferrie2009}. Starting from the description of the qubit, one can choose any specific frame of $4$ operators and describe the qbit state as a vector of \mbox{(quasi-)probabilities $\vec{\mu}$}.  We can see that, choosing the specific frame of equations \eqref{eq:Wootters-frame}\footnote{We can call this the Feynman-Wootters frame as it appears in Feynman \cite{hiley1987a} in relation to a discussion on quasi-probabilities, and it is generalised to any prime dimension by Wootters in \cite{wootters1987}.}, the constraints on the frame representation vector $\vec{\mu}=\vec{q}$ have a natural interpretation as constraints on the inaccessibility of the vector, we have that $\vec{\mu}$ has to satisfy $\indisti(\mu)\geq2$.
We can compare these constraints with the ones obtained with choice of a different frame. For example, choosing as frame a SIC-POVM, as done in the QBist reconstruction of quantum mechanics, the situation is different. The vector $\vec{\mu}$ would be a probability vector $\vec{\mu}=\vec{p}$, and the bound on the inaccessibility would consequently change to a bound of another function of $\vec{\mu}$ (equation $(7)$ of \cite{appleby2017}). The form of the bound for the set of $\vec{\mu}$ describing the qubit states within the SIC-POVM frame representation seems to not have a natural interpretation. Quoting \cite{appleby2017} (second paragraph after equation $(9)$) "[...] \textit{what a strange-looking set the quantum states become when written in these terms! What could account for it except already knowing the full-blown quantum theory as usually formulated?}"\footnote{This statement refers to the set of constraints characterising quantum-states in all dimensions, not just for the case of the qubit. Specifically, when dealing with systems with dimensions greater than $2$, an additional constraint must be considered (eq. (8) in \cite{appleby2017}). In our work, we are only dealing with the qubit, and thus only considering half (the simpler half) of the constraints referred to in this statement.}.
Within the formalism of the theory of inaccessible information we cannot derive equation $(8)$ of \cite{appleby2017}, but equation $(7)$ of \cite{appleby2017} becomes the lower bound on the inaccessibility. Part of the constraints can be derived logically and look less weird when considering theories of inaccessible information\footnote{We remind once again that the theory of inaccessible information is not a tentative reconstruction of quantum mechanics.}.

We conclude this section with a last analogy, noticing that the set of accessible probability distributions at level $2$ corresponds to a set of mutually unbiased operators for the qubit.

\section{The theory of inaccessible information with accessibility-depth d=3}

The model is characterised by $D=d^2=9$ atomic statements $(a,b,c,d,e,f,g,h,i)$ to which we associate a state in the form of the quasi-probability vector $\vec{q}=(Q(a),Q(b),Q(c),Q(d),Q(e),Q(f),Q(g),Q(h),Q(i))$.
Since we are considering a MES model, at level $d=3$ we have $d+1=4$ accessible probability vectors. Within the freedom of relabelling, these are
	\begin{align}
		\vec{p}^{[1]}&= \left(P(a\vee b\vee c),P(d\vee e\vee f),P(g\vee h\vee i)\right),\\
		\vec{p}^{[2]}&= \left(P(a\vee d\vee g),P(b\vee e\vee h),P(c\vee f\vee i)\right),\\
		\vec{p}^{[3]}&= \left(P(a\vee e\vee i),P(b\vee f\vee g),P(c\vee d\vee h)\right),\\
		\vec{p}^{[4]}&= \left(P(a\vee f\vee h),P(b\vee d\vee i),P(c\vee e\vee g)\right).
	\end{align}

In the case of accessibility-depth $d=3$ not all statements at level $d$ belong to an accessible probability distribution. In fact at level $d=3$ one finds $\frac{d^2!}{d!(d^2-d)!}=84$ statements, while only $d(d+1)=12$ enter into the accessible probability distributions.
Analogously to the case $d=2$, here the set $\mathcal{Q}_3$ of admissible states $\vec{q}\in \mathbb{R}^9$ is characterised as
\begin{numcases}{ } 
	\sum_{i=1}^9 q_i=1, \label{eq:condition1-d3}\\ 
	(q_i+q_j+q_k)\in[0,1]\;,\; \forall i\neq j, i\neq k, j\neq k, \label{eq:condition2-d3}\\
	\sum_{i=1}^9 q_i^2\leq \frac{1}{3}.\label{eq:condition3-d3}
\end{numcases}

Differently from the case $d=2$, here all three conditions are necessary. In fact, the quasi-probability vector
\begin{equation}
\tilde{\vec{q}}=\left(\frac{1}{4}-\frac{1}{24} \left(3+\sqrt{33}\right),0,0,\frac{1}{4},\frac{1}{4},\frac{1}{4},0,0,\frac{1}{24} \left(3+\sqrt{33}\right)\right),
\end{equation}
satisfies condition \eqref{eq:condition1-d3} and condition \eqref{eq:condition3-d3}, but not condition \eqref{eq:condition2-d3}.

\section{Outlook and Conclusions}

Long time ago, at the time when quantum mechanics was being developed, in the field of logic there was a discussion regarding the distinction between true and provable statements \cite{nagel2005}. Can we find a proof for all true statements? Are all true statements provable? Among many others, K. Gödel helped giving an answer to this questions introducing the famous incompleteness theorems confirming once and forever that there are some statements that are true, but that cannot be proven\footnote{Given a fixed formal system.}.
In our work we have been inspired by these results, drawing a suggestive analogy between logic and physics. We imagined the process of an experiment as a tool for proving statements about reality. If a logical proof is an algorithm involving the tools of logic, an experiment is an algorithm involving the tools of the physical world. As there are statements that cannot be proved with a logical
proof, we imagined there can be statements that cannot be proved with an experimental inspection. This suggestions was clearly inspired by the lessons of quantum mechanics. In fact, since its inception, quantum mechanics has been suggesting that there must be some limits to our knowledge.
A remark here has to be done, the wording "proved with an experimental inspection" may be hard to read, as it seems to collide with the wording used in the discourses over falsifiability. What we are talking about are statements as "doing \textit{this} and \textit{that} I will obtain this numbers on this screen". By proving this statement with an experimental inspection we mean performing the aforementioned \textit{this} and \textit{that} and verify if we obtained the predicted numbers on the screen or not.\\
Even though this is the rationale behind this work, we decided to focus on the less philosophical part, and simply study the consequences of the $\mathcal{I}\mathcal{H}$, formalising the distinction between accessible and inaccessible statements. Accessible statements being the one for which the value of truth can be assessed experimentally. We decided to formalise the $\mathcal{I}\mathcal{H}$ in the language of classical propositional logic for two main reasons. The first reason is simplicity; we consider classical propositional logic to be one of the simplest set of rules for the manipulation of statements. The second reason is our familiarity with the extensive literature on the derivation of probability theory from logic, which presented an opportunity to derive probability theory within our framework rather than attaching it externally (for example, by allowing convex combinations of states justified by operational reasons).
We further assume an \textit{accessiblity-depth}  $d$ on the accessible knowledge. Each accessible statement is composed by $d$ inaccessible statements. This means that, even if we are able to assess the truth value of a statement we will nonetheless be uncertain over $d$ inaccessible statements. The accessibility-depth $d$ individuates a fundamental degree of "unknowledge". 
From these basic assumptions a rich structure emerges. Surprisingly, a relation between the accessibility-depth and the number of total statements to be considered in the model is found. For an accessibility-depth $d$ we need to consider at least $d^2$ statements.

Building up on results on the foundations of probability theory, we have been able to relax and extend the logical model of the theory of inaccessible information to a quasi-probabilistic model. We derive the quasi-probabilistic structure from first principle based solely on consistency criterion, showing how a quasi-probabilistic description is the natural language of choice when allowing for the existence of inaccessible statements. \footnote{At the end of section \ref{sec:composition-of-models} we observed that adding the accessibility label appeared to be merely an overparameterisation, simply involving the addition of a new label with an attached interpretation to each statement. However this very interpretation of the distinction between accessible and inaccessible statements within the logical lattice proved crucial for determining how to relax the logical binary assignments, as noted in Appendix \ref{appendix:additional-assumptions}.} Our construction reminds, in part, about hidden variable models, and suggests that, for modelling the knowledge associated with hidden variables, quasi-probabilities are a great tool. In fact our approach rises the following questions, why  should we associate probabilities to hidden variables? How could we justify this specific restriction? We cannot build frequencies or beliefs on something we cannot access, something that is hidden.
Within the quasi-probabilistic extension of the theory of inaccessible information, we found ourselves in the need of defining a measure of uncertainty. The necessity of quantifying the degree of inaccessibility to preserve the concept of knowledge's  \textit{accessibility-depth} leads to the definition of inaccessibility measure. Given a quasi-probability vector, the inaccessibility measure quantifies its knowledge's depth.

The theory of inaccessible information becomes simply defined as  a theory of quasi-probabilities with two constraints: quasi-probabilities are expected to behave as proper probabilities when restricted to accessible statements, and the admissible quasi-probability vectors have a bounded value of inaccessibility.
Nothing more than this (almost).

Given our premises it did not come as a surprise when we uncovered the connection between the theory of inaccessible information and the formalism of quantum mechanics. In fact, what we found, is that the structure of the qubit can be derived as the structure of the states of the simplest model of inaccessible information\footnote{We note here that the structure of the qubit can be (and has been historically) derived in many different ways, starting from many different sets of principles. For this reason the qubit can be considered a fundamental structure. For the same reason, at this point, being able to construct the qubit from first principles can be deemed as an exercise of style.}. Albeit we deem this result as secondary (the aim of this work is the study of the theory of inaccessible information on its own, not to derive the structure of the qubit), the same result can be recasted as an additional incentive for further development and study of the theory of inaccessible information.

All in all, the theory of inaccessible information, even though based on simple assumptions, revealed itself for a theory rich in structure. On the journey to the development of our theory we have come across interesting results. We derived quasi-probabilities from logical principles, establishing a basic rationale for their use as a tool. We derived from few principles the family of the inaccessibility measures,  a set of multiplicative information measures that has appeared scattered around in the literature with many different names and forms (see \cite{hill1973,pellens2016,popescu2006,dunlop2021,brukner1999}). We recovered the structure of the qubit as the simplest model of inaccessible information, that is as a model where each accessible statement is always composed by two inaccessible statements.
It has been a nice journey, but not a finished journey. Many are the possible directions that can be undertaken. From an abstract point of view, we notice that the theory of inaccessible information seems to not incorporate the concept of "object". By "object" we mean something to which the statements are referring to. The theory as it is just talks about statements related to each others by the algebra of statements and by the constraints of the theory. Ideally a definition of "object" would allow to define "transformations" of statements via compositions. It is easy to think about this concept by restricting to the case of quantum mechanics where, for the case of the qubit, these transformations would be the unitary transformations allowing to transform statements about the spin along one direction to statements about the spin along any direction.
Another direction is formulating properly the process of state update starting when new information is gathered.
Another further direction is studying carefully models with accessibility-depth higher then $d=2$.

\emph{Acknowledgements.} 
First of all I want to thanks Matteo Scandi, best "compagno di merende", with whom I had many lovely discussions, he helped me in staying excited about this work. Then I would like to thank Francesco Buscemi who kindly hosted me in his group while I was developing part of this work, inspired different part of this work and shared his passion on the foundational aspects of probability theory. Finally I would like to thank Markus P. Müller, who, after a presentation of an embryonic version of this work asked me the right questions that lead to the definition of the inaccessibility measures. We thank the referee for the many suggestions and corrections that greatly improved this work. The name "Inaccessibility Hypothesis" was suggested by the referee.
This project has received funding from the Government of Spain (FIS2020-TRANQI and Severo Ochoa CEX2019-000910-S), Fundacio Cellex, Fundació Mir-Puig, Generalitat de Catalunya (SGR 1381 and CERCA Programme).Research at the Perimeter Institute for Theoretical Physics is supported by the Government of Canada through the Department of Innovation, Science and Economic Development Canada and by the Province of Ontario through the Ministry of Research, Innovation and Science.



\bibliographystyle{quantum}
\bibliography{Inaccess.bib}
 
\onecolumn

 \begin{appendix}
 
 \section{Proofs}
 
   	In order to simplify the language, we identify a collection of statements $s_i \vee s_j \vee \dots \vee s_k$ connected by the binary operator $\vee$ with the set of their indices $(i,j,\dots,k)$ (see section \ref{appendix:algebra-of-statements}). Thus the rule of composition of statements maps to the rules of union and intersection of sets as in these examples
 	\begin{align}
 		s_i &\leftrightarrow (i),\\
 		s_i \vee s_j &\leftrightarrow (i,j)\\
 		(s_i\vee s_j) \wedge (s_j\vee s_k)=s_j &\leftrightarrow  (i,j)\cap(j,k)=(j).
 	\end{align}
 	
 \label{appendix:proofs}
 
 \topisaccessible*
 \begin{proof}
 	Let us set the accessibility of  $\top$ to $\NNN$. Choose a statement $s_i$ in the lattice. If $s_i$ is $\AAA$ than $\neg s_i$ is $\AAA$. This implies that $\top=s_i \vee \neg s_i$  is $\AAA$. This is a contradiction. 
 \end{proof}


%
%

 \notintersect*
 \begin{proof}
 Consider the model $(D,d)$.
Suppose $(i_1,i_2,\dots,i_d)$ and $(j_i,j_2,\dots,j_d)$ are two accessible statements at level $d$, and $(i_1,i_2,\dots,i_d)\cap(j_i,j_2,\dots,j_d)=(k_1,\dots,k_{r\lneqq d})$ (we impose ${r\neq d}$ because otherwise the two statements would be identical).  Because of the rule of accessibility propagation from Table \ref{table:accessibility-table}, the statement $(k_1,\dots,k_{r\lneqq d})$ at level $r$ must be accessible, but the accessibility-depth $d$ of the model forbids every statement below level $d$ to be accessible. Thus at level $d$ all accessible statements must not intersect.
 \end{proof}

\integermultiples*
 \begin{proof}
	Consider a generic $(D,d)$ model. The set of atomic elements is $\{s_1,\dots, s_D \}$.\\
 	Because of lemma \ref{lem:not-intersect}, at level $d$ we can consider accessible at most $\lerifloor{\frac{D}{d}}$ statements. 
 	In the case $D=md$, with $m\in\mathbb{N}$, the number of accessible statements can be maximised to be $m$. A model of this kind is an inflation of a classical model with $m=\frac{D}{d}$ atomic elements. \\ 
 	We have to prove the lemma for $\text{mod}(D,d)\neq 0$.
 	First we show that it is impossible to choose more than $\lerifloor{\frac{D}{d}}$ statements to be accessible at level $d$.
 	We prove that we cannot set $\lerifloor{\frac{D}{d}}$ statements at level $d$ to be accessible by showing that doing so would imply that some statement at level lower then $d$ must be accessible.
	We set $\lerifloor{\frac{D}{d}}$ not-intersecting statements at level $d$ to be accessible. The union $\tilde{s}$ of all the accessible statements at level $d$ is a statement composed by $d\lerifloor{\frac{D}{d}}=D-\text{mod}(D,d)$ atomic statements. Because of the rules of accessibility propagation we have that $\tilde{s}$  must be an accessible statement.  Applying the rules of accessibility propagation another time we have that $\neg \tilde{s}$, must be accessible too, but $\neg \tilde{s}$ is composed by $D-(D-\text{mod}(D,d))=mod(D,d)\lneqq d$ atomic statements thus cannot be accessible by definition. Thus we cannot take more then $\lerifloor{\frac{D}{d}}-1$ accessible statements at level $d$.

 	Now we set $\lerifloor{\frac{D}{d}}-1$ not-intersecting statements at level $d$ to be accessible.
 	The union $\tilde{s}$ of all the accessible statements at level $d$ is a statement composed by $d(\lerifloor{\frac{D}{d}}-1)=D-\text{mod}(D,d)-d$ atomic statements. In this case $\neg \tilde{s}$ is composed by $D-(D-\text{mod}(D,d)-1)=mod(D,d)+d\gneqq d$ atomic statements, so the argument above does not apply.
 \end{proof}

  \uniquenessidealconfigurations*
  \begin{proof}
  For a model $(D,d)$ with $\bmod(\frac{D}{d})=0$, from lemma \ref{lem:not-intersect} and lemma \ref{lem:integer-multiples} it follows that the choice of the set of accessible statements at level $d$ corresponds to a \textit{partitioning} of the complete set of atomic statements in $\frac{D}{d}$ non-overlapping sets each of cardinality $d$. Given a partitioning of this type, all the other possible partitioning are obtained by relabelling.  For a model $(D,d)$ with $\bmod(\frac{D}{d})\neq0$, from lemma \ref{lem:not-intersect} and lemma \ref{lem:integer-multiples}, it follows that the choice of the set of accessible statements at level $d$ corresponds to a \textit{partitioning} of the set of atomic statements in $\lerifloor{\frac{D}{d}}-1$ non-overlapping sets each of cardinality $d$ plus the set of remaining $d+\bmod(D,d)$ atomic statements. This is a complete partitioning of the set of atomic statements, thus all the other possible partitioning of this kind are obtained by relabelling. 
  \end{proof}

 \classicalsublattices*
 \begin{proof}
 	It is a direct application of lemma \ref{lem:integer-multiples}.
 \end{proof}

\scalingdimension*
\begin{proof}
We want to characterise the set of general procedures of inflation that are compatible with the composition of models.
That is, we want to find a function that given the dimension $m$ of a classical model it returns the number of atomic statements $D$ and the accessibility-depth $d$ of an ideal configuration of an inaccessible information model $(D,d)$. Furthermore we want for this function to be compatible with the composition of models: the inflation of the composition of models must correspond to the model obtained by the composition of inflations.

We know that for any classical model of dimension $m$ any couple $(D,d)$  that satisfies the conditions of  lemma \ref{lem:classical-sub-lattices} determines an inaccessible information model that, in its ideal configuration, is an inflation of the classical model $m$. 
In order to explicitly write the general procedure of inflation, we consider a function $F$ that takes as argument the classical dimension $m$ and returns the accessibility-depth $d=F(m)$ of the inflated model. From lemma \ref{lem:classical-sub-lattices} we have than the the allowed number of atomic statements in the inflated moments are the members of the set $\mathcal{D}_{m,F(m)}$.
It follows that the general procedure of inflation associates to each classical model an inaccessible information model 
\begin{equation}
m\to (\tilde{D},F(m))
\end{equation}
where $\tilde{D}\in\mathcal{D}_{m,F(m)}$.

Now let us consider two generic classical models with dimensions $m_1$ and $m_2$. Their inflations are 
$(\tilde{D}_1,F(m_1))$ and $(\tilde{D}_2,F(m_2))$, where $\tilde{D}_1\in\mathcal{D}_{m_1,F(m_1)}$ and $\tilde{D}_2\in\mathcal{D}_{m_2,F(m_2)}$.
The composition of the classical models is a classical model with dimension $m_1\cdot m_2$ with inflation 
\begin{equation}
\label{eq:inflation-of-composition}
(\tilde{D}_{12},F(m_1\cdot m_2))\qquad \text{ (inflation of composition)}
\end{equation}
 where $\tilde{D}_{12}\in\mathcal{D}_{m_1\cdot m_2,F(m_1\cdot m_2)}$. 
 
The composition of the inflations of each classical model is the model 
\begin{equation}
\label{eq:composition-of-inflations}
(\tilde{D}_1\cdot \tilde{D}_2,F(m_1)\cdot F(m_2)) \qquad \text{ (composition of inflations)}
\end{equation}
To ask for the the general procedure of inflation to be compatible with composition we want the model \eqref{eq:inflation-of-composition} to be the same as the model \eqref{eq:composition-of-inflations}.
We start by equating the accessibility-depths of the two composed models, obtaining that for the function $F$ it must hold that
\begin{equation}
F(m_1\cdot m_2)=F(m_1)F(m_2).
\end{equation}
From \cite{aczel1966} we know that the solutions of this functional equation are $F(x)=x^c$ for some parameter $c$ independent from $m$.
Since the accessibility-depth of a model is an integer number greater then $1$ we want $F$ to return an integer value greater than $1$, thus we have that $c\in\mathbb{N}_{>0}$.
We obtain then that the models  \eqref{eq:inflation-of-composition} and \eqref{eq:composition-of-inflations} can be written as
\begin{align}
(\tilde{D}_{12},(m_1\cdot m_2)^c),\qquad \text{ (inflation of composition),}\\
(\tilde{D}_1\cdot \tilde{D}_2,m_1^c\cdot m_2^c), \qquad \text{ (composition of inflations),}
\end{align} 
where $\tilde{D}_{12}\in\mathcal{D}_{m_1\cdot m_2,(m_1\cdot m_2)^c}$,  $\tilde{D}_1\in\mathcal{D}_{m_1,m_1^c}$ and $\tilde{D}_2\in\mathcal{D}_{m_2,m_2^c}$.
Now we have to find elements $\tilde{D}_{12}\in\mathcal{D}_{m_1\cdot m_2,(m_1\cdot m_2)^c}$,  $\tilde{D}_1\in\mathcal{D}_{m_1,m_1^c}$ and $\tilde{D}_2\in\mathcal{D}_{m_2,m_2^c}$ such that
\begin{equation}
\tilde{D}_{12}=\tilde{D}_1\cdot \tilde{D}_2,
\end{equation}
independently on $m$.
Since $c$ must be independent from $m$, otherwise the inflation procedure would not be general, we have that the only solution is for the general procedure of inflation to select $D=dm$ out of $\mathcal{D}_{m,d}$. Thus we have $\tilde{D}_{12}=m_1\cdot m_2 \cdot (m_1\cdot m_2)^c$, $\tilde{D}_{1}=m_1\cdot m_1^c$, and $\tilde{D}_{2}=m_2\cdot m_2^c$.
We have then that the models  \eqref{eq:inflation-of-composition} and \eqref{eq:composition-of-inflations} can be written as
\begin{align}
((m_1\cdot m_2)\cdot (m_1\cdot m_2)^c,(m_1\cdot m_2)^c),\qquad \text{ (inflation of composition),}\\
((m_1\cdot m_1^c)\cdot(m_2\cdot m_2^c),m_1^c\cdot m_2^c), \qquad \text{ (composition of inflations).}
\end{align} 
We obtained a family of general procedures of inflations compatible with the composition and parametrised by a positive integer $c$ such that for any fixed $c$, the inflation procedure associates to each classical model  of dimension $m$ an inaccessible information model as follows
\begin{equation}
m\to (m^{c+1},m^c).
\end{equation}
\end{proof}

\MESdplusone*
\begin{proof}
	The proof consists in showing that for a set $\mathcal{L}=\{s_0,\dots,s_{d^2-1}\}$ of $d^2$ elements with $d$ prime, it is possible to choose at most $d(d+1)$ sets $\{\ell_i\}_{i=0}^{d(d+1)-1}$ each of $d$ elements such that each set has an overlap of at most $1$ element with any other set, $|\ell_i \cap \ell_j|\leq 1$, $\forall i\neq j$.\\
	We prove the statement by construction.\\
	We construct the first $d$ sets $\{\ell_i\}_{i=0}^{d-1}$ by choosing a partition of $\mathcal{L}$ in $d$ non-overlapping sets of $d$ elements. We choose the sets $\ell_0=\{s_0,\dots,s_{d-1}\}$, $\ell_1=\{s_{d},\dots,s_{2d-1}\}$, $\dots$,$\ell_{d-1}=\{s_{d(d-1)},\dots,s_{d^2-1}\}$.
	The remaining $d^2$ sets $\{\ell_i\}_{i=d}^{d(d+1)-1}$ constructed with the following rule dictating to which set $\ell$ an element $s_i$ belong.
	Consider the following equation
	\begin{equation}
	\label{eq:modular-equation}
		x=\bmod\left(\lerifloor{\frac{j}{d}}y+\bmod(j,d),d\right).
	\end{equation}
	For each $j\in\{ 0,1,\dots,d^2-1\}$ and for each $y\in\{0,1,\dots,d-1\}$ there is one single $x\in\{0,1,\dots,d-1\}$ that satisfies the equation.  In other words, for each value of $j$ there exist $d$ pairs $(x,y)\in\{0,1,\dots,d-1\}\times\{0,1,\dots,d-1\}$ that satisfy equation \eqref{eq:modular-equation}.
	For each $j$ the statement $s_j$ belong to the $d$ sets $\ell_{d(x+1)+y}$ where $x$ and $y$ are the solutions of equation \eqref{eq:modular-equation}.\\
	We note that for each $x\in\{0,1,\dots,d-1\}$ the sets $\{\ell_{d(x+1)+y}\}_{y=0}^{d-1}$ form a non-overlapping partitioning of $\mathcal{L}$, thus are associated to an accessible probability distribution.
	
	To show that this construction is valid it is sufficient to notice that  any two elements set $\tilde{\ell}=\{s_m,s_n\}$ is a subset of the unique $\ell_g$ constructed above, where $g=\lerifloor{\frac{m}{d}}$ if $\lerifloor{\frac{m}{d}} = \lerifloor{\frac{n}{d}}$ and $g=d(x+1)+y$ where $(x,y)$ is the unique solution of the system of equations
	\begin{equation}
    	\begin{cases}
			x=\bmod\Big(\left(\lerifloor{\frac{m}{d}}y+\bmod(m,d)\right),d\Big)\\
			x=\bmod\Big(\left(\lerifloor{\frac{n}{d}}y+\bmod(n,d)\right),d\Big)
		\end{cases},
	\end{equation}
	if $\lerifloor{\frac{m}{d}} \neq \lerifloor{\frac{n}{d}}$.
	This proves that the set is maximal as the addition of any new set would have an overlap of at least $2$ elements, this also proves that the overlap of any two sets chosen from $\{\ell_i\}_{i=0}^{d(d+1)-1}$ is at most of $1$ element.
\end{proof}

 \MESonetoone*
 \begin{proof}
 	Let us start with a counting argument.
 	At the accessible level there are $d+1$ accessible probability distributions, thus $d(d+1)$ variables with $d+1$ constraints (the elements of each of the $d+1$ accessible probability distributions has to sum to $1$), for a total of $(d-1)(d+1)=d^2-1$ free variables.
 	At the atomic level there is one quasi-probability vector $\vec{q}$, so there are $d^2$ variables with the single constraint constraint $\sum_{i=1}^{d^2}q_i=1$, for a total of $d^2-1$ free variables.
 	Now let's look at the one-to-one mapping between $\vec{q}$ and $\{\vec{p}^{[i]}\}_{i=1}^{d^2}$.
 		Following the construction of lemma \ref{lem:MESdplusone} we consider the set of atomic statement $\mathcal{L}=\{s_0,\dots,s_{d^2-1}\}$ with $d$ prime and the sets $\{\ell_i\}_{i=0}^{d(d+1)-1}$. 
 		For every $i\in\{ 0,1,\dots,d^2-1\}$ we consider the $d$ sets $\{\ell_{dx+y}\}_{(x,y)\in \mathcal{I}(i)}$ that contains the statement $s_i$. 
 		The set $\mathcal{I}(i)$ is explicitly defined as follow
 		\begin{equation}
 			\mathcal{I}(i)=\{(x,y)\in\{0,1,\dots,d-1\}\times\{0,1,\dots,d-1\} | x=\bmod(\lerifloor{\frac{i}{d}}y+\bmod(i,d),d)\}.
 		\end{equation}
 		To each set $\{\ell_{dx+y}\}_{(x,y)\in \mathcal{I}(i)}$ corresponds an accessible probability value $\{p_{dx+y}\}_{(x,y)\in \mathcal{I}(i)}$. Summing this accessible probability values one obtain
 		\begin{equation}
 			\sum_{(x,y)\in\mathcal{I}(i)} p_{dx+y} = dq_i+1,
 		\end{equation}
 		where $q_i=Q(s_i)$ is the inaccessible quasi-probability associated to the statement $s_i$.
 		Thus from the accessible probability distributions at level $d$ it is possible to reconstruct the vector $\vec{q}$. The opposite is clearly true, from the vector $\vec{q}$ one reconstruct the accessible probability distributions at level $d$.
	The fact that the number of constraint imposed is minimum is verified noticing that removing just one accessible probability distribution (i.e. selecting an $x$ and discarding all the probability values $\{p_{dx+y}\}_{y=0}^{d-1}$) would not allow to recover $\vec{q}$ from the accessible probability distributions anymore.
	For a generic MES${}_d$ model we have then an invertible mapping from the single quasi-probability distribution $\vec{q}$ to the set of $d+1$ accessibile probability distributions $\{\vec{p}^{[i]}\}_{i=1}^{d+1}$.
 \end{proof}

 \section{Statements and quasi-probabilities}

 
 \subsection{Algebra of Statements}
 \label{appendix:algebra-of-statements}
 
The language we use in everyday life heavily relies on the use of the connectives $\vee$ (read "or") and $\wedge$ (read "and") together with the negation unary operator $\neg$. These connectives encode in the algebra they generate over the set of statements part of the structure of the mechanics of how we understand and perceive information. Studying how statements can be combined and manipulated using the operations $\vee$ or $\wedge$ or $\neg$ can feel like shedding some light on this mechanism. 
\\
We propose an example where we need to consider just two statements $s_1$ and $s_2$. For sake of presentation we consider these two specific statements
\begin{align*}
s_1=&\text{"The apple is blue"},\\
s_2=&\text{"The dog is red"}.&
\end{align*}
An interesting question is how many new statements it is possible to obtain from $s_1$ and $s_2$ alone using $\vee$ and $\wedge$ and $\neg$ freely. 
We answer to this question quite easily. We can obtain the statements:
\begin{align*}
s_3:=\neg s_1=&\text{"The apple is not blue"},\\
s_4:=\neg s_2=&\text{"The dog is not red"},\\
s_5:=s_1 \vee s_2=&\text{"The apple is blue or the dog is red"},\\
s_6:=s_1 \vee \neg s_2=&\text{"The apple is blue or the dog is not red"},\\
s_7:=\neg s_1 \vee s_2=&\text{"The apple is not blue or the dog is  red"},\\
s_8:=\neg s_1 \vee \neg s_2=&\text{"The apple is not blue or the dog is not red"},\\
s_9:=s_1 \wedge s_2=&\text{"The apple is blue and the dog is red"},\\
s_{10}:=s_1 \wedge \neg s_2=&\text{"The apple is blue and the dog is not red"},\\
s_{11}:=\neg s_1 \wedge s_2=&\text{"The apple is not blue and the dog is  red"},\\
s_{12}:=\neg s_1 \wedge \neg s_2=&\text{"The apple is not blue and the dog is not red"},\\
s_{13}:=(s_1\wedge s_2) \vee (\neg s_1 \wedge \neg s_2)&=\text{"(The apple is blue and the dog is red) or (The apple is not blue or the dog is not red)"} ,\\
s_{14}:=(s_1\wedge \neg s_2)\vee (\neg s_1 \wedge s_2)&= \text{"(The apple is blue and the dog is not red) or (The apple is not blue or the dog is red)"},\\
s_{15}:=s_1\vee \neg s_1=&\text{"The apple is blue or the apple is not blue"},\\
s_{16}:= s_1 \wedge \neg s_1 = &\text{"The apple is blue and the apple is not blue"}.
\end{align*}
Can we write down any new combination of statements?
Yes of course, for example we can think of
\begin{align*}
s_{17}:=s_5\vee s_1&=\text{"The apple is blue or the dog is red or the dog is red"},
\end{align*}
but soon we realise that this statement is actually equivalent to statement of $s_5$. 
We can think of the statement
\begin{align*}
s_{18}:=s_2\vee \neg s_2=\text{"The dog is red or the dog is not red"}
\end{align*}
as a new statements, but soon we can convince ourselves  that the information this statement is conveying is the same as the one conveyed by statement $s_{15}$, or by the much more complex statement
\begin{align*}
\tilde{s}_{18}=\left((s_1\wedge s_2) \vee (\neg s_1 \wedge \neg s_2)\right)\vee \left( (s_1\wedge \neg s_2)\vee (\neg s_1 \wedge s_2) \right).
\end{align*}
We can try to play this game with any other statement, just to recognise that the $16$ statements $s_1,\dots,s_{16}$ are all the statements we can form.

In order to not rely simply on our intuition for the process of identifying equivalent statements, we  now formalise the rules we just used for combining statements.

We start by defining a partially artificial playground that will serve us for the scope of simplifying the treatment. In all our work we are going to consider the case where the total number of possible statements is finite. We consider this a quite artificial restriction, because when thinking intuitively one considers the set of statements as infinite. Every set of statements we are going to consider is \textit{closed} with respect to the operations $\wedge$, $\vee$ and $\neg$. 
We consider statements that form the structure of a Boolean lattice. A Boolean lattice is a distributive complemented lattice.
A lattice of statements is an algebraic structure $(\mathcal{S},\wedge,\vee)$ consisting of a set of statements $\mathcal{S}$ and the two binary operations $\wedge$, $\vee$ satisfying the following rules
\begin{subequations}
\label{eq:axioms-lattice}
\begin{align}
&x \vee x=x, \quad x \wedge x=x, \quad& \text{\textit{(Idempotency)}} \label{eq:idempotency}\\
&x \vee y=y \vee x, \quad x \wedge y=y \wedge x, \quad & \text{\textit{(Commutativity)}} \label{eq:commutativity-statements}\\
&x \vee(y \vee z)=(x \vee y) \vee z,\quad& \text{\textit{("or" associativity)}} \label{eq:or-associativity}\\
& x \wedge(y \wedge z)=(x \wedge y) \wedge z,\quad& \text{\textit{("and" associativity)}} \label{eq:and-associativity}\\
&x \vee(x \wedge y)=x \wedge(x \vee y)=x.&\text{\textit{(Absorption)}}\label{eq:absorbtion}
\end{align}
\end{subequations}
The symbol $"="$ is used meaning equivalence between statements. 
A lattice is a complete lattice $(\mathcal{S}, \wedge, \vee, \top, \perp)$ if it has a "top" element $\top$ and a "bottom" element $\perp$ such that for all statements  $s\in\mathcal{S}$
\begin{subequations}
\label{eq:axioms-complete}
\begin{align}
s \vee \top =\top,\\
s\wedge \perp = \perp.
\end{align}
\end{subequations}
The top and bottom element are somehow special elements, they are not included in the set of statements and they are present in every complete lattice.
In a complete lattice  $(\mathcal{S}, \wedge, \vee, \top, \perp)$ there can be statements $s_{i}\in\mathcal{S}$ such that there do not exist any $s_{j},s_{k}\in\mathcal{S}$, with $s_{j}\neq s_{k}$, such that $s_{i}=s_{j}\vee s_{k}$. These are called \textit{atomic statements}. We denote the set of atomic statements of $\mathcal{S}$ as $\mathcal{A}(\mathcal{S})$.
The top element can be thought as an equivalent statement to the combination of all the statements in $\mathcal{S}$ through the binary operator $\vee$, $\top=\bigvee_{s\in\mathcal{S}}s$, while the bottom element can be thought as equivalent to the combination of any two different atomic statements through the binary operator $\wedge$, $\perp=s_{i}\wedge s_{j}$ where $s_{i}\neq s_{j}$ are atomic statements.
The nomenclature "top" and "bottom" for $\top$ and $\perp$ comes directly from the graphical representation of lattice introduced in the next section.
A complete lattice of statements $(\mathcal{S}, \wedge, \vee, \top, \perp)$ is associative if the following rules apply
\begin{subequations}
\label{eq:axioms-distributivity}
\begin{align}
&x \wedge(y \vee z)=(x \wedge y) \vee(x \wedge z), \quad &\text{\textit{(Distributivity of $\wedge$ over $\vee$)}} \label{eq:distributivity-and}\\
&x \vee(y \wedge z)=(x \vee y) \wedge(x \vee z). \quad &\text{\textit{(Distributivity of $\vee$ over $\wedge$)}}\label{eq:distributivity-or}
\end{align}
\end{subequations}
A lattice of statements is complemented if for every statement $s\in \mathcal{S}$ there exists a unique statement $\neg s \in \mathcal{S}$ (the negation of $s$) such that
\begin{subequations}
\label{eq:axioms-complemented}
\begin{align}
&x\vee \neg x = \top,\quad&\text{\textit{("or" complementation)}}\\
&x\wedge \neg x = \perp. \quad&\text{\textit{("and" complementation)}}\label{eq:and-complementation}
 \end{align}
 \end{subequations}

These rules define logical equivalence between statements without relying on the concept of truth value. 

Often lattices are defined in the context of partially ordered sets as partially ordered sets in which each element has a unique supremum and infimum. 
One can pass from the lattice structure of the algebra of the statements to the lattice structure of partially ordered set defining the following partial order relation between statements: for any two statements $s_{i},s_{j}\in \mathcal{S}$
\begin{equation}
s_{i}\leq s_{j} \iff s_{i} \wedge s_{j} =s_{i} \text{ and } s_{i}\vee s_{j}=s_{j}.
\end{equation}
Furthermore, it is possible \cite{knuth2006}, and it will result convenient, to consider statements as sets. In this picture the binary operators $\wedge$ and $\vee$ corresponds respectively to the intersection $\cap$ and union $\cup$ of sets. Distributive lattices are lattices of sets ordered by set inclusion $\subseteq$.

One can represent graphically the structure of the algebra via an Hasse diagram. Because of their simplicity and intuitive representation, we make heavy use of the graphical language provided by the Hasse diagrams, and in Appendix \ref{appendix:graphical-notation} we provide a quick guide on how to use them.

Often, with an abuse of notation we will write $\mathcal{S}$ instead of $\mathcal{A}(\mathcal{S})$, for example, when writing $\mathcal{S}_3=\{s_1,s_2,s_3\}$ we are implicitly saying that $s_i\wedge s_j=\perp \; \forall i,j\in[1,3]\; ,i\neq j$ and that we are are considering all elements $\{\perp,s_1,s_2,s_3,s_1\wedge s_2,s_1\wedge s_3,s_2\wedge s_3,\top\}$.

We conclude this section observing that the statements $\{s_1,\dots,s_{16}\}$ introduced in the initial example form indeed a boolean lattice.

\subsection{Graphical notation for lattices}
\label{appendix:graphical-notation}
In this work we use the graphical notation of the Hasse diagrams.
Referring to Figure \ref{fig:first-Hasse}, statements are represented as filled circles with the label of the name of the statement inside (panel $(a)$). For any set of circles, that do not include the statements $\top$ or $\perp$, we have the following graphical rules. If the circle of a statement $s$ is connected with a line from below to the set of statements $\{s_1,s_2,\dots,s_N\}$ it means that $s=s_1\vee s_2\vee \dots \vee s_N$ (panel $(b)$). If the circle of a statement $s$ is connected with a line from above to the set of statements $\{s_1,s_2,\dots,s_n\}$ it means that $s=s_1\vee s_2 \vee \dots s_n$ (panel $(c)$). 

 \begin{figure}[h!]
\centering
\includegraphics[width=1\textwidth]{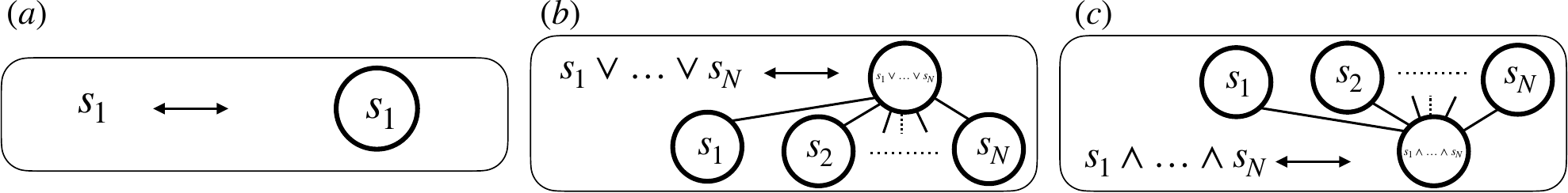}
\caption{Statements are represented as filled circles with the label of the name of the statement inside (panel $(a)$). For any set of circles, that do not include the statements $\top$ or $\perp$, we have the following graphical rules. If the circle of a statement $s$ is connected with a line from below to the set of statements $\{s_1,s_2,\dots,s_N\}$ it means that $s=s_1\vee s_2\vee \dots \vee s_N$ (panel $(b)$). If the circle of a statement $s$ is connected with a line from above to the set of statements $\{s_1,s_2,\dots,s_n\}$ it means that $s=s_1\vee s_2 \vee \dots s_n$ (panel $(c)$). \label{fig:first-Hasse}}
\end{figure}

Furthermore for any finite set of statements $\mathcal{S}$ with atomic elements $\{s_1,s_2,\dots,s_D\}$ one can introduce in the graphical representation the $\perp$ and the $\top$  elements as in Figure \ref{fig:top-bottom-levels}
\begin{figure}
\centering
\includegraphics[width=0.5\textwidth]{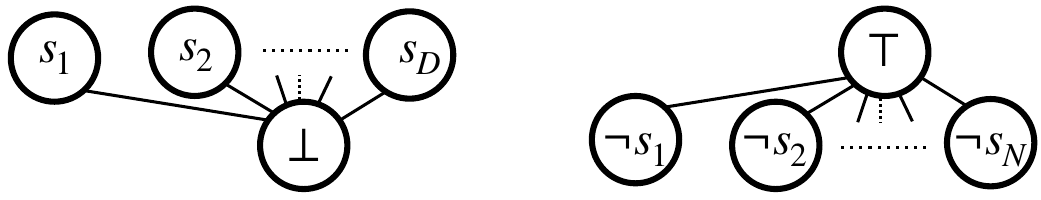}
\caption{Graphical representation of the top and bottom elements.\label{fig:top-bottom-levels}}
\end{figure}
where we used the notation $\neg s_i$ to indicate the union of all the atomic statements except $s_i$ (for example $\neg s_D=s_1\vee s_2\vee \dots \vee s_{D-1}$).

Using this graphical notation we can represent all statements of the finite set of statements $\mathcal{S}_3=\{a,b,c\}$ as in Figure \ref{fig:S_3}.

\begin{figure}
\centering
\includegraphics[width=0.25\textwidth]{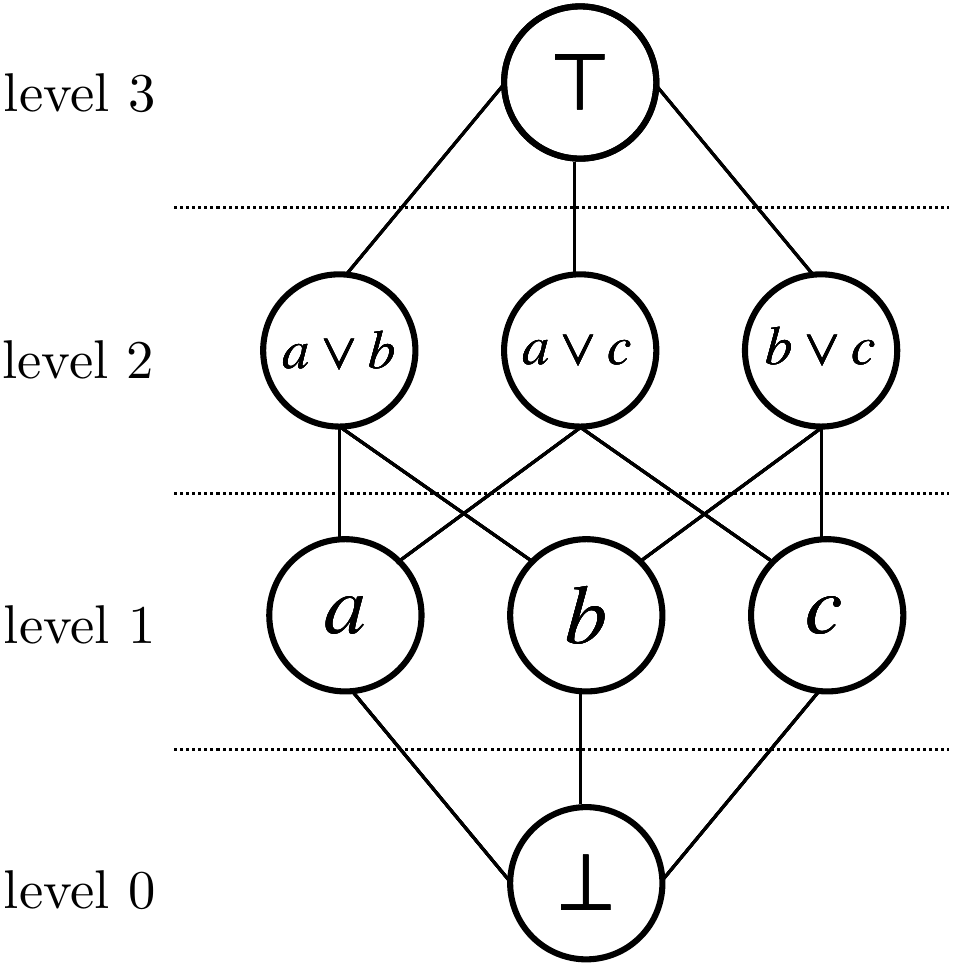}
\caption{Graphical representation of the finite set of statements $\mathcal{S}_3=\{a,b,c\}$. The atomic statements are collected on the first line above the line with the $\perp$ element. We denote each line of the lattice starting from the bottom with an integer number starting from zero, we call this number the level of the lattice. At level $0$ we have just the $\perp$ element. At level $1$ we have all the atomic statements, at level $1<l<3$ we have the statements obtained as the union of $l$ different atomic statements. At level $3$ we have the $\top$ element. Note how $s_1\vee s_2$ has two lines entering from below connected to $s_1$ and $s_2$ meaning that it is the union of $s_1$ and $s_2$. Note also how $s_2$ has two lines entering from above connected to $s_1 \vee s_2$ and $s_2 \vee s_3$, meaning that $s_2$ is the intersection of $s_1 \vee s_2$ and $s_2 \vee s_3$. In general going up corresponds to taking unions, going down corresponds to taking intersections. \label{fig:S_3}}
\end{figure}

For a generic set of statements obtained starting from $D$ atomic statements $\mathcal{S}_D=\{s_1,s_2,\dots,s_D\}$ the graphical representation looks as follows. The atomic statements are collected on the first line above the line with the $\perp$ element. Each line of the lattice starting from the bottom is labelled by an integer number starting from zero, we call this number the level of the lattice. For example at level $0$ we have just the $\perp$ element. At level $1$ we have all the atomic statements, at level $1<l<D$ we have the statements obtained as the union of $l$ different atomic statements. At level $D$ we have the $\top$ element.

We conclude this section with a graphical representation of the boolean lattice of the statements $\{s_1,\dots,s_{16}\}$ introduced in section \ref{appendix:algebra-of-statements}.

\begin{figure}[h!]
\centering
\includegraphics[width=0.35\textwidth]{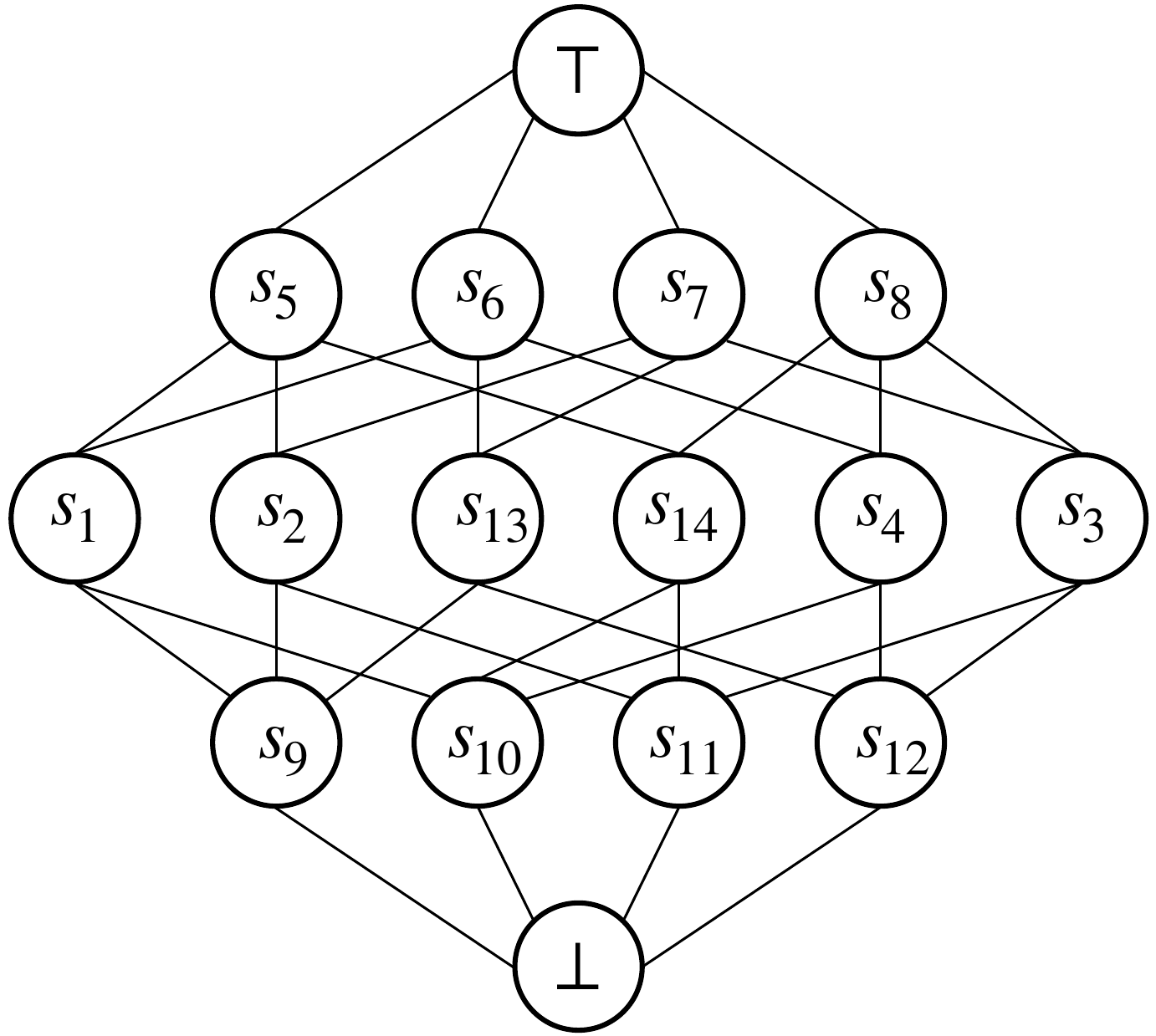}
\caption{Graphical representation of the statements of the example in section \ref{appendix:algebra-of-statements}.}
\end{figure}

 \subsection{Quasi-probabilities from the lattice of statements}
 \label{appendix:quasi-probabilities-from-lattice}
 
 \subsubsection{Probabilities from the lattice of statements}
 \label{appendix:probabilities-from-statements-intro}
Crucial to the derivation of probability theory from the lattice of statements is the concept of \textit{implication}. The idea of implication is usually understood in the common language as a concept connected with the truth value of statements. For example one would say that the statement $a$ implies the statements $b$ if the condition of $a$ being true is sufficient for $b$ to be true. 
One can abstract the concept of implication from the truth value and simply connect it to the concept of information contained in a statement, or even better to the kind of relation subsisting between two statements. 
 We say that a statements $s_i$ implies another statement $s_j$, written $s_i \to s_j$, if $s_i \vee s_j=s_j$ and $s_i \wedge s_j = s_i$.  One can play with this definition and see how it makes sense. In a poetic sense, one can say that a statement $s_i$ implies a statement $s_j$ whenever $s_i$ is a synecdoche for $s_j$ within the limited dictionary of the set of statement $\mathcal{S}$ considered.

In his works \cite{cox1946b,cox1961} R.T.Cox realised that by generalising the concept of implication to the concept of \textit{degree of implication} between statements one can obtain probability theory. The real numbers associated to these degrees of implication, when considered as representing the degree to which we believe one statement implies another statement, are equivalent to probabilities.

The methodologies and concepts introduced firstly by R. T. Cox.\cite{cox1961} for deriving probability theory from logic have been subsequently revisited, clarified and developed further by many other authors. A beautiful exposition of the topic  due to  E. T. Jaynes can be found in the short work \cite{jaynes1988} or in the first chapters of the book \cite{jaynes2003b} where probability is derived as the unique method for inference that a robot equipped with common sense (a somehow anachronistic concept) would use. Other expositions can be found in the many works of  A. Caticha (here I cite the unpublished \cite{caticha}  that collects many published works of him and \cite{caticha2009}), as well as in K. H. Knuth \cite{knuth2004,knuth2005a} and H. Reichenbach \cite{reichenbach1971} to cite some.
 In the next section, we are going to relax some requests made by these authors, ending up obtaining a derivation of quasi-probabilities instead of probabilities.
The derivation of quasi-probabilities we are going to present is by no means complete and exhaustively explained as the one of probability in the aforementioned works, nevertheless a complete treatment of the topic, where the differences with the derivation of probability are highlighted, is possible. 
 We reserve this analysis and the relative consideration for a future work, presenting here the bare minimum set of tools and consistency rules that we need in the current work.

  \subsubsection{Quasi-probabilities from the algebra of statements}
  \label{appendix:quasi-prob-derivation}
Similarly to the definition of the plausibility of Jaynes \cite{jaynes2003b} and the definition of degree of inclusion of Knuth \cite{knuth2005a}, let us define the "\AR" function $\mathcal{R}$ as a function that takes as argument a pair of statements (including $\top$) and returns a real number. The idea is for the function $\mathcal{R}$ to quantify in some way the relationship between two statements. Furthermore we ask for $\mathcal{R}(a,b)$ to return $1$ whenever $b$ implies $a$ and to return $0$ whenever $a\wedge b = \perp$ (i.e. $b$ implies $\neg a$),
 the function so defined is quite generic, but proceeding with further requests one is able to characterise $\mathcal{R}$ by some specific properties that would identify it as a "conditional quasi-probability function".
 
 These further requests divide in two groups.
The first important thing to keep in mind is that the arguments of $\mathcal{R}$ form and algebra with some specific properties. We want $\mathcal{R}$ to be consistent with these properties.
The second kind of assumptions on $\mathcal{R}$ are the one that we are going to call "\textit{common sense assumptions}", borrowing from the similar terminology used by Jaynes. Common sense assumptions are some qualitative properties that we would expect for $\mathcal{R}$ to satisfy in order to encode the idea we have about "plausibility" and that are used in the derivations of probability theory\footnote{Cox's theorem and deriving probability from propositional logic have long been debated. Different authors use various sets of additional assumptions to achieve the same result (the rules of probability theory). Common sense assumptions refer to any set of additional assumptions sufficient to derive these rules.}. Our derivation is an effort to use as less common sense assumptions as possible when talking about $\mathcal{R}$ allowing us to characterise a more general form for $\mathcal{R}$ that is constrained almost solely by the request to be consistent with the lattice structure.

\paragraph{The sum rule for atomic statements}
\label{appendix:sum-rule}
For example, we might ask how $\mathcal{R}$ behaves when one of its argument is a composition of statements.
Consider the atomic statements $a,b$ such that $a\wedge b=\perp$ and a third statement $t$ such that $a\vee t = t$ and $b\vee t=t$. 
We consider $\mathcal{R}(a\vee b,t)$  to be a function of the following arguments:
\begin{equation}
\mathcal{R}(a \vee b, t)=\tilde{F}\left[ \mathcal{R}(a,t),\mathcal{R}(b,t),\mathcal{R}(a,b\wedge t),\mathcal{R}(b,a\wedge t),\mathcal{R}(a\wedge b,t) \right],
\end{equation}
for a not yet specified function $\tilde{F}$ with domain $\mathbb{R}^5$ and codomain in $\mathbb{R}$.
The choice of these specific arguments is an additional assumption that we are making, this is an assumption that appears often in the literature, to some authors this assumption seems so natural that it is not even considered as an assumption.\footnote{For example, Knuth in \cite{knuth2004} considers as possible arguments only $\mathcal{R}(a,t),\mathcal{R}(b,t)$, justifying this choice before his equation (2) as follows "\textit{As $a\wedge b=\perp$, this degree of inclusion can only be a function of $\phi(a,t)$ and $\phi(b,t)$}", where his $\phi$, that he calls degree of inclusion, is our $\mathcal{R}$. In chapter 2 of \cite{caticha} it is highlighted how the choice of a smaller set of arguments with respect to ours is general enough with the following sentence "\textit{the arguments of $F$ [...] include all four possible degrees of belief in $a$ and $b$ in the context of $c$ and not any potentially questionable subset.}", where $c$ of Caticha corresponds to our $t$.}.In this work, we do not assert that this assumption is so self-evident as to require no justification. Indeed, for example, we do not consider the possibility of a dependence of $\mathcal{R}(a \vee b,t)$ on $\mathcal{R}(\neg a,t)$. However, this is a broad assumption that grants us considerable freedom and aligns with standard methodologies in the literature.\footnote{The derivations of probability theory that closely adheres to the principles of Cox's original work typically begin with the derivation of the product rule, rather than the sum rule. In this framework, a comparable assumption is made regarding the argument of the function analogous to our function 
$F$. However, in this context, the assumption has been regarded as far from obvious and has sparked extensive debate \cite{tribus1969,smith1990,caticha,jaynes2003b}.}

 In our case we have that $a\wedge b=\perp$ so 
 \begin{equation}
 \mathcal{R}(a \vee b, t)=\tilde{F}\left[ \mathcal{R}(a,t),\mathcal{R}(b,t),0,0,0 \right], 
 \end{equation}
thus effectively reducing the above functional relation to
 \begin{equation}
 \label{eq:sum-rul-phi}
 \mathcal{R}(a \vee b, t)= F\left[ \mathcal{R}(a,t),\mathcal{R}(b,t) \right], 
 \end{equation}
 where $F$ is a not yet specified function with domain in $\mathbb{R}^2$ and codomain in $\mathbb{R}$.
 
 Now we consider a fourth statement $c$ such that $a\wedge b=a\wedge c=b\wedge c=\perp$. Because of associativity of the statements (property \eqref{eq:or-associativity}) we have:
 \begin{equation}
 a\vee (b \vee c)= (a \vee b) \vee c,
 \end{equation}
so that we should expect the following:
 \begin{equation}
 \mathcal{R}( a\vee (b \vee c),t)= \mathcal{R}((a \vee b) \vee c,t).
 \end{equation}
Using \eqref{eq:sum-rul-phi} we obtain the equality
\begin{equation}
\label{eq:transitivity-with-phi}
F\left[\mathcal{R}(a,t), F\left[\mathcal{R}(b,t),\mathcal{R}(c,t) \right] \right]=F\left[F\left[\mathcal{R}(a,t),\mathcal{R}(b,t) \right] ,\mathcal{R}(c,t) \right].
\end{equation}
Renaming $x=\mathcal{R}(a,t)$, $y=\mathcal{R}(b,t)$, and $z=\mathcal{R}(c,t)$ we can rewrite equation \eqref{eq:transitivity-with-phi} as
\begin{equation}
\label{eq:associativity-equation}
F[x,F[y,z]]=F[F[x,y],z].
\end{equation}
This functional equation is known as the associativity equation. Introduced by Abel in \cite{abel1826} and subsequently thoroughly studied by Aczel \cite{aczel1966,aczel2004}, the associativity equation lies at the core of the derivation of probability theory. In fact, if some common sense assumptions on $F$ are made \cite{aczel1966,aczel2004,alsina2006}(see section \ref{appendix:additional-assumptions}), the functional equation \eqref{eq:associativity-equation} is known to have a unique solution (up to multiplications of the arguments with a scalar), which has the form
\begin{equation}
\label{eq:sol-associativity-equation}
F(x,y)=f^{(-1)}\left(f(x)+f(y)\right),
\end{equation}
where $f$ is an invertible function with domain $\mathbb{R}$ and codomain $\mathbb{R}_{+}$ and $f^{(-1)}(x)$ is the inverse of $f(x)$.
This unique solution lies at the core of the characterisation of probability theory. 
In our derivation we are not asking for any common sense assumption on $F$ in general. Nevertheless one can check, by simply substituting \eqref{eq:sol-associativity-equation} in \eqref{eq:associativity-equation}, that, for any invertible  $f:\mathbb{R}\to \mathbb{R}$, equation \eqref{eq:sol-associativity-equation}  is a solution of $\eqref{eq:associativity-equation}$.   In this case, there is no uniqueness theorem for the functional form of $F$.
It will make sense to impose the common sense assumptions when the function $\mathcal{R}(x,y)$ is restricted to the domain of elements within an accessible sublattice, meaning that $x$ and $y$ are accessible statements. Consequently, within the accessible sublattice,  the form of the solution to the functional equation  \eqref{eq:associativity-equation} is indeed unique, taking the form \eqref{eq:sol-associativity-equation} and we anticipate to recover the principles of probability theory.

In the remaining parts of the lattice, the solution for $F$ is not unique. 
We shall assume the same functional form for  $F$ as in \eqref{eq:sol-associativity-equation}, with the justification for this choice deferred to future work. This assumption is not too restrictive and it  allows the general solution to include an $f$ that is an arbitrary invertible function from $\mathbb{R}$ to $\mathbb{R}$. 

In summary, we find that $F$ has a solution of the form \eqref{eq:sol-associativity-equation}, where $f$ is an invertible function with codomain restricted to $\mathbb{R}_{+}$ when evaluated on $\mathcal{R}(a,t)$ with $a$ and $t$ accessible statements. Additionaly, we fix $f$ such that $f(1)=1$ and $f(0)=0$.


We consider the re-parametrisation  $\tilde{\mathcal{R}}(x,t)=f(\mathcal{R}(x,t))$ as the new "\AR" function, one obtains for generic statements $a$ and $b$ such that $a\wedge b = \perp$ that equation \eqref{eq:sum-rul-phi} can be recasted in the form:
\begin{equation}
\label{eq:sum-rule}
\tilde{\mathcal{R}}(a\vee b,t)=\tilde{\mathcal{R}}(a,t)+\tilde{\mathcal{R}}(b,t)
\end{equation}
with $\tilde{\mathcal{R}}$ with codomain $\mathbb{R}$.
Given that the function $f$ is invertible and has been defined such that $f(1)=1$ and $f(0)=0$, either $\mathcal{R}$ or $\tilde{\mathcal{R}}$  can be used interchangeably as they can both be interpreted as "amount of relation", one is simply a reparametrisation of the other. We choose to use $\tilde{\mathcal{R}}$ and the rules associated with this choice. 

The choice of $\tilde{\mathcal{R}}$ in place of $\mathcal{R}$ is, as it currently stands, arbitrary. Ideally, this choice should be justified by an additional, more "intuitive" assumption, deferred for future work. Such an approach would enable a \textit{fully} satisfactory derivation of quasi-probability rules based on a set of plausible assumptions.

From equation \eqref{eq:sum-rule} we see that the "\AR" obeys a sum rule analogous to the one of probability theory.

\paragraph{The sum rule}
We now extend the sum rule to non-atomic statements. We do this following \cite{knuth2004}.
Consider two statements $x$ and $y$ and write them as the composition of atomic statements
\begin{align}
x&=(p_1\vee p_2 \vee \dots \vee p_n)\vee (r_1\vee r_2 \vee \dots \vee r_k), \nonumber\\
y&=(r_1\vee r_2 \vee \dots \vee r_k) \vee (w_1\vee w_2 \vee \dots \vee w_m).
\end{align}
Clearly 
\begin{align}
x \wedge y & =(r_1\vee r_2 \vee \dots \vee r_k),\nonumber \\
x \vee y&=(p_1\vee p_2 \vee \dots \vee p_n)\vee(r_1\vee r_2 \vee \dots \vee r_k) \vee (w_1\vee w_2 \vee \dots \vee w_m).
\end{align}
Using the sum rule  \eqref{eq:sum-rule} repeatedly on non intersecting terms we can write
\begin{equation}
\REL(x \vee y, t)=\sum_{i=1}^{n} \REL(p_i,t)+\sum_{i=1}^{k} \REL(r_i,t)+\sum_{i=1}^{m} \REL(w_i,t).
\end{equation}
Adding and removing $\sum_{i=1}^{k} \REL(r_i,t)$ and grouping terms properly, one obtains
\begin{equation}
\REL(x\vee y,t)=\REL(x,t)+\REL(y,t)-\REL(x \wedge y, t).
\end{equation}
We obtained that $\REL$ obeys the general sum rule of probability theory.

\paragraph{Product rule}
In deriving the product rule for $\REL$ we again follow \cite{knuth2004}.
We want to find a function $G$, consistent with the structure of the algebra of statements, such that it allows us to write $\REL(x\wedge y, t)$ without referring to $\REL(x \vee y, t)$. Following the analysis of \cite{tribus1969,smith1990}  (for a short summary see section 2.3.1 of \cite{caticha}) we know that this function has to have as arguments
\begin{equation}
\label{eq:product-associativity-equation}
\REL(x \wedge y, t) = G\left[\REL(x,t), \REL(y,x\wedge t) \right].
\end{equation}
Similarly to the case of the sum rule, we consider two statements $a$ and $b$ such that $a \wedge b =\perp$ and two statements $r$ and $s$ such that $r \wedge s = \perp$.
Using the sum rule and the distributivity rule \eqref{eq:distributivity-and} we obtain
\begin{align}
\REL\left(a\wedge (r\vee s),t\right)&=G\left(\REL(a,t),\REL(r\vee s,a\wedge t) \right)=G\left(\REL(a,t),\REL(r,a\wedge t)+\REL(s,a\wedge t) \right),\label{eq:R-single-G}\\
\REL\left(a\wedge (r\vee s),t\right)&=G\left(\REL(a,t), \REL(a,r\wedge t) \right)+G\left(\REL(a,t),\REL(a,s\wedge t) \right)\label{eq:R-double-G}.
\end{align}
Equating \eqref{eq:R-single-G} with \eqref{eq:R-double-G} and  substituting $u=\REL(a,t)$, $v=\REL(r,a\wedge t)$, and $w=\REL(s,a\wedge t)$ we obtain the functional equation
\begin{equation}
\label{eq:functional-equation-product}
G(u,v+w)=G(u,v)+G(u,w).
\end{equation}
Now we define $z=v+w$ and we have
\begin{align}
\frac{\partial}{\partial v}&=\frac{\partial z}{\partial v}\frac{\partial }{\partial z}=\frac{\partial}{\partial z},\\
\frac{\partial}{\partial w}&=\frac{\partial z}{\partial w}\frac{\partial }{\partial z}=\frac{\partial}{\partial z},
\end{align}
thus $\frac{\partial^2}{\partial z^2}=\frac{\partial}{\partial v}\frac{\partial}{\partial w}$.
Applying this second derivative to equation \eqref{eq:functional-equation-product} we find that
\begin{equation}
\frac{\partial^2}{\partial z^2}G(u,z)=\frac{\partial}{\partial v}\frac{\partial}{\partial w}(G(u,v)+G(u,w))=0,
\end{equation}
thus $G$ is linear in its second argument so that the most general form is given by:
\begin{equation}
\label{eq:product-linear-form}
G(u,v)=A(u)v+B(u).
\end{equation}
Plugging equation \eqref{eq:product-linear-form} into equation \eqref{eq:functional-equation-product} we see that $B(u)=0$ for every $u$.
Proceeding with a similar reasoning starting from $\REL((a\vee b)\wedge r)$ we find out that $G$ is also linear in its first argument.
We finally obtain 
\begin{equation}
G(u,v)=Cuv,
\end{equation}
where $C$ is a constant that we are going to set to $1$.
We obtained that $\REL$ obeys the product rule of probability theory.
From the product rule, together with the commutativity property \eqref{eq:commutativity-statements}, it is possible to obtain Bayes' theorem.

\paragraph{Quasi-probabilities}
We are finally in the condition of defining the function $Q$ that maps the elements of a generic set of statements $\mathcal{S}$ to real numbers. For a generic statement $s\in \mathcal{S}$ the action of $Q$ is defined as
\begin{equation}
Q(s)=\REL(s,\top).
\end{equation}
Furthermore defining $Q(s_i\mid s_j)=\REL(s_i,s_j\wedge \top)=\REL(s_i,s_j)$, the following properties of $Q$ can be easily derived:
\begin{itemize}
\item $Q(\top)=1$,
\item $Q(\perp)=0$,
\item $Q(s_i)\in \mathbb{R}$,
\item $\sum_{s\in \mathcal{A}(\mathcal{S})}Q(s)=1$,
\item $Q(s_i \vee s_j)=Q(s_j \vee s_i)=Q(s_i)+Q(s_j)-Q(s_i\wedge s_j)$,
\item $Q(s_i \wedge s_j)= Q(s_j\wedge s_i)$,
\item $Q(s_i \wedge s_j)= Q(s_i)Q(s_i\mid s_j)$.
\end{itemize}
The function $Q$ is a \textit{valuation} defined on a distributive lattice \cite{klain1997}, and its value on every statement is uniquely defined by the set of values $\{Q(s)\}_{s\in\mathcal{A}(\mathcal{S})}$.

The function $Q$ associates a quasi-probability distribution value to each statement in $\mathcal{S}$. 
We see that from the algebra of statements there is a natural assignment of $Q$ of real numbers to each statement such that this assignment respects the sum rule, the product rule and Bayes theorem and thus behaves as a probability assignment. We remark here that we never assumed probability theory, we derived these rules from the request of consistency with the algebra of statements. 
Furthermore, for every lattice of statements we have that the assignments of $Q$ are completely specified by the assignment of $Q$ on the set of atomic statements. Thus the assignments of $Q$ are completely specified by the choice of a vector $\vec{q}=\{Q(s_1),Q(s_2),\dots, Q(s_D)\}\in\mathbb{R}^D$, where $\mathcal{A}(\mathcal{S})=(s_1,s_2,\dots,s_D)$ are the atomic statements. We have complete freedom in the choice of the elements of the vector $\vec{q}$, with the only constraint they sum to $1$, $\sum_{i=1}^{D}q_i=1$.
If, as lobbied by Jaynes, probability is a generalisation of Aristotelean logic, then language of quasi-probability is a \textit{further} generalisation of Aristotelean logic.

\subsubsection{Probability from the algebra of statements}
\label{appendix:additional-assumptions}
 In appendix \ref{appendix:probabilities-from-statements-intro} we introduced the idea of deriving probability theory as a relaxation of classical logic. In classical logic one associates to each statement a binary  truth value. A statement is either true or false. In probability theory one can assign to each statement a real number from the continuous interval $[0,1]$. 

We interpret the assignment of a real number to each element of the lattice as a relaxation of the binary truth assignment.  
Formally, for any lattice $L$, we consider the set $S^{L}_V$ of all functions from elements of the lattice, or statements, to binary truth values. 
We call  $V \in S^{L}_V$ a logical-valuation. Each logical-valuation corresponds to a different assignment of truth values to all the elements of the lattice. We denote $\tilde{S}^{L}_V$ the set of all the logical-valuations corresponding to admissible truth values' assignments (see section \ref{sec:section-II}), that is one has 
\begin{align}
\label{eq:logical-requests-1}
V(\top)=\TTT, \qquad V(\perp)=\FFF,
\end{align}
and for any $a,b\in L$
\begin{align}
\label{eq:logical-requests-2}
&(V(a)=\TTT \mbox{ and } ~V(b)=\FFF)  ~\mbox{ or }~ (V(a)=\FFF \mbox{ and }  ~V(b)=\TTT) & \implies & V(a\vee b)=\TTT \mbox{ and }  V(a \wedge b)=\FFF, \nonumber \\
&(V(a)=\FFF \mbox{ and } ~V(b)=\FFF) & \implies & V(a\vee b)=\FFF \mbox{ and } ~V(a \wedge b)=\FFF,  \nonumber \\
&(V(a)=\TTT \mbox{ and } ~V(b)=\TTT) & \implies & V(a\vee b)=\TTT  \mbox{ and } ~V(a \wedge b)=\TTT, \nonumber \\
&V(a)=\FFF & \implies & V(\neg a)=\TTT, \nonumber \\
&V(a)=\TTT & \implies & V(\neg a)=\FFF.
\end{align}
Furthermore, for any lattice $L$, we consider the set $S^{L}_{\mathcal{V}}$ of all functions $\mathcal{V}$  from elements of the lattice, or statements, to real values. We call $\mathcal{V} \in S^{L}_{\mathcal{V}}$ a real-valuation. We further define  $\tilde{S}^{L}_{\mathcal{V}}$, the subset of real-valuation that we consider relaxations of a logical-valuations $\tilde{S}^{L}_V$, as the set of real-valuations that satisfy compatibility with truth tables' rules and compatibility with the structure of the lattice.
A real valuation $\mathcal{V}$ is compatible with the truth tables if
\begin{align}
\label{eq:real-requests-1}
\mathcal{V}(\top)=\beta, \qquad \mathcal{V}(\perp)=\alpha,
\end{align}
for two fixed constants $\alpha,\beta \in \mathbb{R}$ (we assume $\alpha<\beta$, but all the following works for $\alpha>\beta$, while the case $\alpha=\beta$ is discarded if we want the analogy with the logical-valuation to hold),
and for any two statements $a,b$ in the lattice $L$:
\begin{align}
\label{eq:real-requests-2}
&(\mathcal{V}(a)=\alpha \mbox{ and } ~\mathcal{V}(b)=\beta)  ~\mbox{ or }~ (\mathcal{V}(a)=\beta \mbox{ and } ~\mathcal{V}(b)=\alpha) & \implies & \mathcal{V}(a\vee b)=\beta \mbox{ and } \mathcal{V}(a \wedge b)=\alpha, \nonumber \\
&(\mathcal{V}(a)=\alpha \mbox{ and } ~\mathcal{V}(b)=\alpha) & \implies & \mathcal{V}(a\vee b)=\alpha \mbox{ and } ~\mathcal{V}(a \wedge b)=\alpha,  \nonumber \\
&(\mathcal{V}(a)=\beta \mbox{ and } ~\mathcal{V}(b)=\beta) & \implies & \mathcal{V}(a\vee b)=\beta \mbox{ and } ~ \mathcal{V}(a \wedge b)=\beta, \nonumber \\
&\mathcal{V}(a)=\alpha & \implies & \mathcal{V}(\neg a)=\beta, \nonumber \\
&\mathcal{V}(a)=\beta & \implies & \mathcal{V}(\neg a)=\alpha.
\end{align}
The concept of relaxation of logical-valuations comes from the intuitive correspondence between $\alpha$, $\beta$ and  $\FFF$,$\TTT$.  The requests of equations \eqref{eq:real-requests-2} correspond to the requirements for the real-valuation to be consistent with the truth tables when it can be reduced to a logical-valuation. 

Logical-valuations are automatically compatible with the structure of the lattice just by demanding the properties we listed in equations \eqref{eq:logical-requests-1} and \eqref{eq:logical-requests-2}, but real-valuations are not because of their extended codomain.

Imposing the compatibility with the structure of the lattice on real-valuations we are able to pinpoint one of the differences between our derivation of quasi-probability  and the derivation of probability. 
We can do so restricting the discussion to real-valuations instead of the bi-valuation $\mathcal{R}$ introduced in the previous section, replacing $\mathcal{R}(x,t)$ with $\mathcal{V}(x)$.

Following the same methods as in section \ref{appendix:quasi-prob-derivation}, we aim to determine the general form of a function $\mathcal{F}$ such that for two statements $a,b$ such that $a\wedge b=\perp$:
\begin{equation}
\mathcal{V}(a\vee b)=\mathcal{F}\left[\mathcal{V}(a),\mathcal{V}(b) \right],
\end{equation}
analogously to equation \eqref{eq:sum-rul-phi}.
Proceeding as in the previous section we find out that $\mathcal{F}$ is such that
\begin{equation}
\label{eq:associativity-equation-univalue}
\mathcal{F}\left[x,\mathcal{F}[y,z] \right]=\mathcal{F}\left[\mathcal{F}[x,y],z \right],
\end{equation}
as in \eqref{eq:associativity-equation}.
As in the previous section, our goal is to find the solution to the functional equation \eqref{eq:associativity-equation-univalue}. We assume that $\mathcal{F}$ is a generic function from $\mathbb{R}^2$ to $\mathbb{R}$ and is jointly continuous.
The solution to this problem reveals how real-valuations operate on composite statements when compatibility with the rules of the lattice is demanded. The general solution to this problem does not demonstrate behaviour typical of probability.
Therefore, simply requiring compatibility with the structure of the lattice does not uniquely identify probability theory as a relaxed form of truth value assignments on a lattice.
We now explicitly impose some additional constraints on $\mathcal{V}$ and consequently on $\mathcal{F}$, demonstrating how these constraints lead to probability assignments. The justification for these additional constraints, or common sense assumptions, stems from interpreting $\mathcal{V}$ as a \textit{degree of plausibility} and the the qualitative intuition of its appropriate behaviour. Giving to $\mathcal{V}$ an interpretation involves imposing additional restrictions, hereby moving away from the most general notion of relaxing logical-valuations. 
It can be argued that the intuition on how a degree of plausibility should behave generates from our experience with experimental data and their treatment within a finite-frequentist interpretation of probability\footnote{This explanation introduces a flavour of frequentist interpretation in the derivation of probability theory. This is why some authors like Jaynes try to justify the intuition on how a degree of plausibility should behave introducing a more sophisticated philosophical framework. It is outside the scope of this appendix to analyse where these intuition comes from.}.
Since for us, non accessible statements are hidden from the experimentalist and so from any frequentist interpretation, we  do not require for valuations on inaccessible statements to satisfy these additional requests. However, for experimentally accessible statements, the concept of a degree of plausibility exhibits certain clear qualitative behaviours. The additional constraints or common-sense assumptions are as follows:

\paragraph{\textbf{Request 1:}}
If $\mathcal{V}$ must be interpreted as a "\textit{degree of plausibility}",  it is clear that it makes no sense to assign to a statement more plausibility than that of a true statement, nor less plausibility than that of a false statement.  Thus if $\alpha$ is the value associated by $\mathcal{V}$ to a false statement and $\beta$ is the value associated by $\mathcal{V}$ to a true statement, we want for the codomain of $\mathcal{V}$ to be restricted to the closed interval $[\alpha,\beta]$. This means that $\mathcal{F}$ must be a function from $[\alpha,\beta]^2$ to $[\alpha,\beta]$.
\paragraph{\textbf{Request 2:}}
If $\mathcal{V}$ must be interpreted as a "degree of plausibility", than if  $\mathcal{V}(a)=\beta$, that is the statement $a$ can be considered true, than also $\mathcal{V}(a\vee b)=\mathcal{F}\left[\mathcal{V}(a),\mathcal{V}(b)\right]=\beta$ must be considered true, independently of the value of $\mathcal{V}(b)$, even  if $\mathcal{V}(b)\neq \alpha$ and $\mathcal{V}(b)\neq \beta$.
\paragraph{\textbf{Request 3:}}
If $\mathcal{V}$ must be interpreted as a "degree of plausibility", than we expect that in general $\mathcal{V}(\alpha\vee \beta)\geq \max\left\{\mathcal{V}(a),\mathcal{V}(b)\right\}$, with equality only in the case $\mathcal{V}(a)=\alpha$ or $\mathcal{V}(b)=\alpha$, since the plausibility should grow composing statements.

The request for $\mathcal{F}$ to be a jointly continuous function that satisfy the associativity equation \eqref{eq:associativity-equation-univalue}, together with requests 1,2,3 corresponds to the hypothesis of the Representation Theorem 2.3.4.(d)\footnote{If we had chosen $\alpha > \beta$ than we would have applied the Representation Theorem 2.3.4.(a) to obtain analogous results with the difference that plausibility closer to the truth would correspond to smaller values.}~\cite{alsina2006} (In particular our Request 1 corresponds to the specification of the domain and codomain of $\mathcal{F}$, our Request 2 together with requests \eqref{eq:real-requests-2} corresponds to the requests $G(a,a)=a$ and $G(u,b)=G(b,u)=b$ in \cite{alsina2006} and our Request 3 implies that $\mathcal{F}$ do not have an interior idempotent that because of Lemma 1.3.9 in \cite{alsina2006} it is an equivalent condition to asking for $\mathcal{F}$ to be Archimedean). 

Thus function $\mathcal{F}$ admits a representation of the form 
\begin{equation}
\mathcal{F}(x,y)=f^{-1}\left(f(x)+f(y) \right),
\end{equation}
where $f$ is continuous and strictly increasing from $[\alpha,\beta]$ to $\mathbb{R^{+}}$, with $f(\alpha)=0$. Importantly, given $\mathcal{
V}$, the function $f$ is unique up to a multiplicative constant (corollary 2.2.6 in \cite{alsina2006}).
This allows for a regrading of $\mathcal{V}$ as $\hat{\mathcal{V}}=f\circ \mathcal{V}$ for which we have that for every statement $a$, $\hat{\mathcal{V}}(a)\geq 0$ and for every two statements $a,b$ such that $a\wedge b=\perp$, $\hat{\mathcal{V}}(a\vee b)=\hat{\mathcal{V}}(a)+\hat{\mathcal{V}}(b)$.
Thus we see that the positivity of the real-valuation comes from additional assumptions justified by the desire for the real-valuation to behave as a "degree of plausibility" a concept qualitatively similar to the concept of probability, while considering a real-valuation as a mere relaxation of a logical-valuation is less restrictive and allows for negative valued real-valuations.

Furthermore, we remark that asking for additional assumptions only on accessible sub-lattices encodes in the definition of $\hat{\mathcal{V}}$ information on the accessibility assignments of the lattice tethering the relaxed truth label with the accessible label.

Up to this point we have been talking about logical-valuations and real-valuations on the lattice, but in the previous section and in the main text, in order to derive quasi-probability from the rule of the lattice we made use of the bi-valuation $\mathcal{R}$.
Since the bi-valuation $\mathcal{R}$ behaves as a real-valuation once its second argument is fixed to a constant statement $t\neq \perp$, the argument just presented can be directly mapped back to our derivation of quasi-probability presented in section \ref{appendix:quasi-prob-derivation}. With what we just presented, we highlighted the main point of departure of our derivation of quasi-probability from derivations of probability theory based on the methods of Knuth, Skilling and Caticha.

The task of highlighting the point of departure of our derivation of quasi-probability from the derivation of probability using the methods of Cox and Jaynes is more demanding.

Cox and Jaynes's methods begin with the derivation of the product rule. Both Jaynes and Cox, similar to our approach, introduce a bi-valuation, working with it without explicitly presenting a real-valuation. While introducing bi-valuations is not strictly necessary to reconstruct the product rule, it simplifies the process. This is why we also started with a bi-valuation in the main text.
Reconstructing the product rule using real-valuations is more complex and falls outside the scope of this work. However, because Cox's and Jaynes's derivations of probability from logic are the most well-known, we want to suggest that not adopting their common-sense assumptions means not adopting our common-sense assumptions either.

Following Cox and  Jaynes one starts by finding the function $G$ in equation \eqref{eq:product-associativity-equation} without knowing already the sum rule. 
Following the same kind of line of reasoning we used in the derivation of the sum rule, one can find that $G$ satisfies the associativity functional equation. 
Furthermore one can ask additional requests, or common sense assumption of the kind we asked in the case of the sum rule ending up with $G$ characterised as a jointly continuous function from $[0,1]^2$ to $[0,1]$, associative, archimedean and such that for every $x\in[0,1]$, $G(x,0)=G(x,0)=0$ and $G(1,1)=1$. Using theorem 2.3.4.(a) and lemma 2.2.1 in \cite{alsina2006} one obtains the product rule.
Jaynes, in his derivation, ask for different common sense assumption, in particular he asks for the function $G$ to be increasing in both arguments. Via lemma 2.1.1 and lemma  2.2.1 it can be shown that our common sense conditions imply that $G$ must be increasing in both arguments. Consequently, if we do not assume Jaynes' common sense condition, that is $G$ is increasing in both arguments, then this implies not assuming our own common sense assumption. 

We conclude by demonstrating that it is indeed possible to provide an example of a system with mixed accessibility values where $G$ is not increasing in both arguments.
Consider a system with $4$ atomic statements ${s_1,s_2,s_3,s_4}$, with associated quasi-probability vector $\vec{q}=\{Q(s_1),Q(s_2),Q(s_3),Q(s_4)\}=\{\frac{x(x-1)}{4},\frac{2-x}{4},-\frac{x-1^2}{4},\frac{3}{4}\}$ with  ${x\in[1-\sqrt{3},1+\sqrt{3}]}$. We consider $A=s_1\vee s_2$ accessible, $B=s1\vee s_3$ non accessible and ${C=A\vee B = s_1 \vee s_2 \vee s_3}$ non accessible.
We can compute ${Q(A\vee B|C)=x(x-1)}$, ${Q(B|C)=x-1}$, and ${Q(A|BC)=x}$ and see that ${Q(A\vee B|C)=G[Q(B|C),Q(A|BC)]}$. In the interval ${x\in[1-\sqrt{3},\frac{1}{2}]}$ the function $G$ is decreasing while both of his arguments are increasing.

\section{Scheme for the relaxation of the logical lattice}
\label{appendix:riassunto}
\begin{table}[h!]
\begin{tabularx}{\linewidth}{||c || L | L||} 
 \hline
 &Logical lattice & Quasi-probability lattice \\ [0.5ex] 
 \hline
Truth Value & Binary value assignment to each statement. & Real number assignment to each statment.\\
\hline
Accessibility & Binary value assignment to each statement. & Restriction to probabilities for accessible statements.\\
\hline
Assumption 1 & The number of inaccessible atomic statements  necessary to obtain an accessible statement is lower bounded by the accessibility-depth $d$. & The inaccessibility of the state $\vec{q}$ of the model is lower bounded by the accessibility-depth $d$.\\
\hline
\end{tabularx}
\caption{Comparison between logical lattice and the quasi-probability lattice.\label{tab:riassunto}}
\end{table}

 \section{Inaccessibility measure}
 
 \subsection{The inaccessibility measure}
 \label{appendix:distinguishability}
 
The main assumption on information inaccessibility tells us that for a given accessibility-depth $d$,  we have complete ignorance on at least $d$ elements, or better said, $d$ elements are inaccessible to our knowledge.
In this section we find a measure quantifying this amount of inaccessibility. We call this measure \textit{inaccessibility measure}. 
Having a well defined inaccessibility measure will allow us to formalise assumption 1 as a constraint on the allowed quasi-probability vectors $\vec{q}$. 

Let us denote the measure of inaccessibility with the greek letter $\uncert$. In the following we are going to characterise $\uncert$ as a function with domain over the probability vectors of the symplex $\Delta_D$, where $D$ is the dimension of the symplex\footnote{We are somehow forced to work on the space of probability vectors to not be tricked by the lack of intuition one has when reasoning on quasi-probabilities.  On the restricted domain of probability vectors  the concept of inaccessibility seems to not make sense since talking about inaccessibility requires the language of quasi-probabilities. Restricting to the space of probability vectors a better name for this information measure would be \textit{indstinguishability measure}. Nevertheless, since we are ultimately aiming at extending the domain of this information measure to the space of quasi-probability vectors, and in that context the name \textit{inaccessibility measure} is justified, we decide to already give to this measure the name of \textit{inaccessibility measure}.}. 
Clearly on $\Delta_D$ the maximum amount of inaccessible elements will be $D$ (all the elements are inaccessible, we know nothing) while the minimum amount will be $1$ (we can access every element, that is we have maximum knowledge). \\
We want to find a function $\inacc:\Delta_D\to [1,D]$ satisfying the following requests.

\paragraph{\textbf{Request (1): Counting.}}
For every integer $d\geq1$ one defines for any $\Delta_D$ the standard state with complete uncertainty on $d\leq D$ elements as 
\begin{equation}
	\mus{d} = (\underbrace{\frac{1}{d},\dots,\frac{1}{d}}_{d},0,\dots,0).
\end{equation}
We ask for
\begin{align}
	\uncert(\mus{d})=d,
\end{align}
the amount of inaccessible elements in $\mus{d}$ is $d$.

\paragraph{\textbf{Request (2): Monotonicity.}}
We ask $\mus{d}\in\Delta_D$ to be the vector with maximal inaccessibility in  $\Delta_D$
\begin{align}
	\forall \vec{p}\neq\mus{D} \in \Delta_D, \quad \uncert(\mus{D})>\uncert(\vec{p}).
\end{align}

\paragraph{\textbf{Request (3): Symmetry.}}
Relabelling the elements should not change the number of elements that are inaccessible. Thus we ask that the inaccessibility measure is invariant for any permutation of the probability vector.
Indicating with $P_{\pi}$ the permutation matrix associated to the permutation $\pi$ we ask for
\begin{align}
	\uncert(P_{\pi}\vec{p})=\uncert(\vec{p}),\qquad & \forall \text{ permutation }\pi.
\end{align}

\paragraph{\textbf{Request (4): Multiplicativity.}}
Consider the two configurations with complete uncertainty $\mus{d_1}\in \Delta_{D_1}$ and $\musq{d_2}\in \Delta_{D_2}$. If for $\mus{d_1}$ we have complete uncertainty over $d_1$ statements, while for $\musq{d_2}$ we have complete uncertainty over $d_2$ statements, we expect for the composite configuration $\mus{d_1}\otimes\musq{d_2}=\musr{d_1d_2}\in \Delta_{D_1D_2}$ to have complete uncertainty over $d_1\cdot d_2$ statements.
In general we expect for the number of inaccessible elements to multiply when composing two system.
For this reason we request multiplicativity of the function $\uncert$:
\begin{align}
	\uncert(\vec{p}\otimes \vec{s})=\uncert(\vec{p})\uncert(\vec{s})
\end{align}
for all $\vec{p}$ and $\vec{s}$ belonging to any two finite probability simplicia.

It would be great to be able to characterise all the function obeying our requests. Unfortunately we are not able to provide a general proof and we need to rely on the choice of an ansatz. Nonetheless the chosen ansatz is quite general. \\


\paragraph{\textbf{Ansatz}}
We request for the function $\uncert$ to be of the form 
\begin{align}
	\uncert=(\sum_{i=1}^{D}f_i(p_i))^z,
\end{align}
for a specified set of  measurable functions $f_i$ and a parameter $z\in \mathbb{R}$.


\begin{lemma}
	For every $\vec{p}\in \Delta_{D\geq3}$, the most general $\uncert$ characterised by the requests (1,2,3) and written in the form of the Ansatz, form the $1$-parameter family $\uncert_{c}$ of uncertainty measure defined as
	\begin{align}
		\uncert_{c}(\vec{p})=\frac{1}{\sqrt[c-1]{\sum_{i=1}^{D}p_i^c}},
	\end{align}
	with $c\in \mathbb{R}_{+}$ and $c\neq 1$.
\end{lemma}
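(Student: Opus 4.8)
The plan is to feed the four requests into the Ansatz $\uncert(\vec p)=\big(\sum_{i=1}^{D}f_i(p_i)\big)^{z}$ and squeeze out the functional form. First I would exploit Request~(3) (symmetry): invariance of $\uncert$ under every permutation of the coordinates forces $\sum_i f_i(p_{\pi(i)})=\sum_i f_i(p_i)$ for all $\vec p$ and all $\pi$, and varying one coordinate at a time shows that the $f_i$ differ pairwise by additive constants; absorbing those constants one may rewrite $\uncert(\vec p)=\big(\sum_{i=1}^{D}f(p_i)\big)^{z}$ for a single measurable $f\colon[0,1]\to\mathbb{R}$. Since $\uncert$ is nonconstant (Request~(1) gives $\uncert(\mus{2})=2\neq1$), we have $z\neq0$.

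Next I would read off $f$ on the ``counting points''. Evaluating Request~(1) on $\mus{d}\in\Delta_D$ gives $\big(d\,f(\tfrac{1}{d})+(D-d)f(0)\big)^{z}=d$; since the same state $\mus{d}$ lives in $\Delta_D$ for every $D\ge\max(d,3)$ while the bracket is affine in $D$ with slope $f(0)$, and $t\mapsto t^{z}$ is injective on $(0,\infty)$, consistency forces $f(0)=0$. Substituting back and taking the positive branch (legitimate because $\uncert\ge1>0$) yields $f(\tfrac{1}{d})=d^{-c}$ with the abbreviation $c:=1-\tfrac{1}{z}$; the case $d=1$ normalises $f(1)=1$. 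Finiteness of $z$ forces $c\neq1$, and $c>0$ will follow from the concavity/convexity established below together with $f(0)=0$ (and the strict uniqueness in Request~(2), which rules out $c=0$).

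The core of the argument is to promote ``$f(x)=x^{c}$ on $\{0\}\cup\{1/d\}$'' to ``$f(x)=x^{c}$ on all of $[0,1]$''. I would first use Request~(4): the product state $\mus{k}\otimes\vec s$ has, up to zeros, the entries $\{s_j/k\}$ each repeated $k$ times, so the Ansatz together with $\uncert(\mus{k})=k$ gives the scaling identity $\sum_j f(s_j/k)=k^{-c}\sum_j f(s_j)$ for every probability vector $\vec s$. Applying this with $\vec s=\big(\tfrac{k}{N},\tfrac{1}{N},\dots,\tfrac{1}{N}\big)$ (with $N-k$ copies of $\tfrac{1}{N}$) and using the known values $f(1/m)=m^{-c}$, every term except $f(k/N)$ is known, and solving gives $f(k/N)=(k/N)^{c}$; hence $f=x^{c}$ on all rationals of $[0,1]$, a dense set. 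I would then invoke Request~(2) once more: since $\mus{D}$ is the strict maximiser of $\uncert$ on each $\Delta_D$, the symmetric function $\vec p\mapsto\sum_i f(p_i)$ is strictly Schur-concave if $z>0$ and strictly Schur-convex if $z<0$, so $f$ is strictly concave, resp.\ convex, on $[0,1]$ and therefore continuous on $(0,1)$. Continuity plus agreement on a dense set gives $f(x)=x^{c}$ on $[0,1]$, whence $z=\tfrac{1}{1-c}$ and $\uncert(\vec p)=\big(\sum_i p_i^{c}\big)^{1/(1-c)}=\uncert_{c}(\vec p)$; conversely each $\uncert_{c}$ with $c\in\mathbb{R}_{>0}\setminus\{1\}$ plainly has the Ansatz form and satisfies the requests.

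I expect this last step to be the delicate part: the grid $\{1/d\}$ only accumulates at $0$, so the whole force of the extension rests on choosing the product vector $\vec s$ cleverly enough that the scaling identity closes the induction onto \emph{all} rationals, and then on the measurability hypothesis together with the concavity/convexity extracted from Request~(2) to upgrade dense agreement to pointwise equality. This is also where the hypothesis $D\ge3$ enters: for $D=2$ the simplex is one–dimensional, Request~(2) pins down $f$ only at the single midpoint $\tfrac{1}{2}$, no forced continuity is available, and a measurable $f$ agreeing with $x^{c}$ only on the rationals would survive.
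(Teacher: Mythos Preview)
Your approach differs substantively from the paper's. The paper writes down the full multiplicative functional equation $\sum_{i,j}f(p_iq_j)=\big(\sum_if(p_i)\big)\big(\sum_jf(q_j)\big)$ coming from Request~(4) and then cites the functional-equations literature (Losonczi, Kannappan) for the classification of its measurable solutions on $[0,1]$; two of the three families are killed by Request~(1) and by dimension-independence of the Ansatz, leaving $f(p)=p^c$, after which Requests~(1) and~(2) fix $z=\tfrac{1}{1-c}$ and $c>0$. Your route is more hands-on: you read off $f(0)=0$ and $f(1/d)=d^{-c}$ directly from Request~(1), and the trick with $\vec s=(k/N,1/N,\dots,1/N)$ in the scaling instance $\sum_j f(s_j/k)=k^{-c}\sum_j f(s_j)$ cleanly pushes $f(x)=x^{c}$ onto all rationals of $[0,1]$. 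That part is correct and pleasant.

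The step that does not go through as written is the passage from rationals to all of $[0,1]$. You assert that Request~(2) --- $\mus{D}$ is the strict maximiser of $\uncert$ on each $\Delta_D$ --- forces $\vec p\mapsto\sum_i f(p_i)$ to be strictly Schur-concave (resp.\ Schur-convex), hence $f$ strictly concave (resp.\ convex), hence continuous. But ``unique extremum at the centroid of every $\Delta_D$'' is strictly weaker than Schur-monotonicity: it compares an arbitrary $\vec p$ only to the single point $(1/D,\dots,1/D)$, not to every vector that $\vec p$ majorises. Concretely, freezing all but two coordinates at $1/D$ you obtain only $f(x)+f(y)\le 2f(1/D)$ whenever $x+y=2/D$, i.e.\ a midpoint inequality \emph{anchored at the special points $1/D$}, not midpoint concavity at arbitrary centres. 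That does not yield concavity of $f$, and without concavity (or some other source of continuity) the bridge from the dense rational set to $[0,1]$ collapses. This is exactly where the measurability hypothesis has to do nontrivial work --- and it is the work the paper outsources to the cited references on the Cauchy-type equation. To close your argument self-containedly you would need either the \emph{full} multiplicative identity (not just the $\mus{k}\otimes\vec s$ instance) together with measurability, or an independent proof that Request~(2) across all $D$ genuinely forces convexity/concavity of $f$.
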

\begin{proof}
	From the symmetry under permutations (request $(3)$)  we have that
	\begin{align}
	\label{eq:symmetric-ansatz}
	\uncert(\vec{p})=(\sum_{i=1}^{D}f(p_i))^z,
	\end{align}
	that is, there is one single function $f$.
	
	Now consider two probability vectors $\vec{p}\in \Delta_{D_1}$ and $\vec{q}\in \Delta_{D_2}$. From the request of multiplicativity (requests $(4)$) we have that
	\begin{align}
	\label{eq:Cauchy-mult-funct}
		 \sum_{(i,j)=(1,1)}^{(D_1,D_2)} f(p_iq_j) = \sum_{i=1}^{D_1} f(p_i) \sum_{j=1}^{D_2}f(q_j).
	\end{align}
	
	This is a functional equations that can be solved exploiting the solutions of the  Cauchy functional equation \cite{losonczi1981,kannappan1986,kannappan2009}. We have that, for $\vec{p}\in\Delta_{D_1\geq3}$ and $\vec{q}\geq \Delta_{D_2\geq 2}$ the most general measurable $f:[0,1]\to \mathbb{R}$ that solves the functional equation \eqref{eq:Cauchy-mult-funct} has one of the following forms \cite{losonczi1981,kannappan1986,kannappan2009}:
\begin{align}
\label{eq:fun-sol-1}&f(p)=a p+b \quad \text { if } p \in[0,1]\\
\label{eq:fun-sol-2}&f(p)= \begin{cases}0 & \text { if } p \in[0,1) \\ 1 & \text { if } p=1\end{cases}\\
\label{eq:fun-sol-3}&f(p)= \begin{cases}p^c & \text { if } p \in(0,1] \\ 0 & \text { if } p=0\end{cases},
\end{align}
where $c\in \mathbb{R}$ and $a,b$ are two constants such that $a+D_1 D_2 b= (a+D_1 b)(a+D_2 b)$. 
Solution \eqref{eq:fun-sol-1} cannot be accepted since the constants $a$ and $b$ depend on $D_1$ and $D_2$ and we do not allow for a dependence of $\indisti$ on the dimension of the simplicia. Solution \eqref{eq:fun-sol-2} has to be discarded too, since it cannot satisfy request $(1)$.
Thus we are left with solution \eqref{eq:fun-sol-3}. 

Plugging \eqref{eq:fun-sol-3} into \eqref{eq:symmetric-ansatz} we have a continuous set of solutions $\indisti_c$ parametrised by the real parameter $c$. Imposing the counting property (request $(1)$) corresponds to finding a combination of values $z$ and $c$ such that
\begin{align}
\uncert_c(\mus{d})=\left(\sum_{i=1}^{d}\frac{1}{d^c}\right)^z=\frac{1}{d^{z(c-1)}}\coloneqq d.
\end{align}
We find that $z=\frac{1}{1-c}$ and $c\neq 1$.
Finally for the monotonicity of the solution (request $(2)$) we have that $c>0$ and this completes the proof.
\end{proof}

We define the inaccessibility measure for probability vectors as follows
	\begin{definition}[Inaccessibility measure]
	For any probability simplex $\Delta_D$, the inaccessibility measure is the function $\indisti_c$ with domain $\Delta_D$ and codomain $\mathbb{R}_+$ defined as
		\begin{equation}
			\indisti_{c}(\vec{p})=\frac{1}{\sqrt[c-1]{\sum_{i=1}^{D}p_i^c}},
		\end{equation}
		where $c\in \mathbb{R}_+$ and $c\neq 1$.
	\end{definition}

\subsection{Inaccessibility measure extension to quasi-probability models of inaccessible information}
\label{appendix:extension-to-quasi-prob}

The family of inaccessibility measures $\uncert_{c}$ has been derived assuming we are interested in measuring the amount of inaccessibility of a probability vector. Nevertheless, each model of our theory of inaccessible information is described by its associated quasi-probability vector $\vec{q}$ and not by a probability vector. Furthermore, assumption $1$ refers to the amount of inaccessibility encoded in $\vec{q}$. \\
It is necessary to  find a continuation of $\indisti_c$  over the space of quasi-probability vectors of our inaccessible information theory. To do so we are going to exploit the property of the $\vec{q}$ of a MES models to be in one-to-one correspondence with the accessible probability distributions at level $d$ (see Lemma \ref{lem:MES-1-to-1}).\\



%

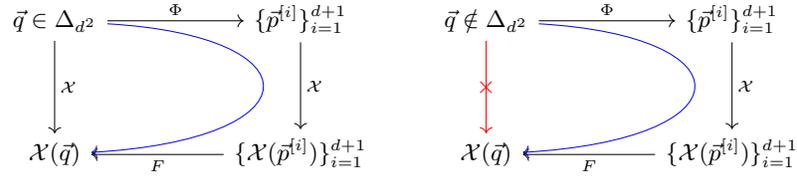
\begin{figure}[h!]
\begin{tikzcd}[row sep=huge, column sep=huge]
\vec{q} \in \Delta_{d^2} \arrow[d, blue, controls={+(3.5,-0.2) and +(3.5,+0.2)}]  \arrow[r, "\Phi"] \arrow[d, black, "\inacc"]
& \{\vec{p}^{[i]}\}_{i=1}^{d+1} \arrow[d, phantom, ""{coordinate, name=Z}]  \arrow[d, black,"\inacc" black] \\
\inacc(\vec{q}) 
&  \{\inacc(\vec{p}^{[i]})\}_{i=1}^{d+1}\arrow[l, black, "F" black]
\end{tikzcd}
 \qquad 
\begin{tikzcd}[row sep=huge, column sep=huge]
\vec{q} \notin \Delta_{d^2} \arrow[d, blue, controls={+(3.5,-0.2) and +(3.5,+0.2)}]  \arrow[r, "\Phi"] \arrow[d, red, "\times" marking]
& \{\vec{p}^{[i]}\}_{i=1}^{d+1} \arrow[d, black,"\inacc" black] \\
\inacc(\vec{q})
&  \{\inacc(\vec{p}^{[i]})\}_{i=1}^{d+1}\arrow[l, black, "F" black]
\end{tikzcd}

\caption{How to compute the inaccessibility of the state of a MES model of accessibility-depth $d$. When the vector $\vec{q}$ is a probability vector, then one can directly compute the inaccessibility $\inacc(\vec{q})$. Alternatively (following the blue arrow), one can use the one-to-one mapping $\Phi$ to map $\vec{q}$ to the collection of $d+1$ associated accessible probability vectors $\{\vec{p}^{[i]}\}_{i=1}^{d+1}$, compute the collection of the inaccessibility of each of these vectors and through the map $F$ retrieve the value of the inaccessibility computed on the original space of $\vec{q}$. For a MES model of accessibility-depth $d$ the vector $\vec{q}$ is not always an element of $\Delta_{d^2}$, in these cases the inaccessibility of $\vec{q}$ is not clearly defined (red arrow). However the blue arrow is still a viable path, as $\Phi$ always maps every $\vec{q}$ of a MES model of accessibility-depth $d$ to a set of legitimate probability vectors $\{\vec{p}^{[i]}\}_{i=1}^{d+1}$ for which the inaccessibility is well defined. The continuation of the inaccessibility measure over all the quasi-probability vectors allowed by a MES model of accessibility-depth $d$ is thus defined as the composition $F \circ \inacc \circ \Phi$ (the blue arrow).\label{fig:inacc-con-giro}}
\end{figure}

Restricting $\vec{q}$ to be a probability distribution we find that, for some values of $c$, we can express the inaccessibility $\indisc_c(\vec{q})$ as a function of the inaccessibilities $\indisc_c$ of $\{\vec{p}^{[i]}\}_{i=1,\dots,d+1}$, the associated accessible probability distribution at level $d$ of the MES model. The procedure is depicted in Figure \ref{fig:inacc-con-giro}. We call $\Phi$ the invertible map sending the quasi-probability vector $\vec{q}$ of a MES model of accessibility-depth $d$ to the collection of its associated accessible probability distributions $\{ \vec{p}^{[i]} \}_{i=1}^{d+1}$.  Since $\{\vec{p}^{[i]}\}_{i=1,\dots,d+1}$ are always probability distributions by definition, even when $\vec{q}$ assumes negative values, we can always compute their inaccessibility. Then we just need to find the function $F$ that maps the collection of inaccessibilities computed for  $\{\vec{p}^{[i]}\}_{i=1,\dots,d+1}$ to the inaccessibility of $\vec{q}$, in the case when $\vec{q}\in \Delta_{D}$. We use this method for computing the inaccessibility of quasi-probability vectors of MES models. We find that extending the inaccessibility measure to quasi-probability vectors corresponds to constraining the values of the parameter $c$ as expressed in the following definition

	\begin{definition}[Inaccessibility measure for MES models]
	For any MES model of accessibility-depth $d$, the measure of inaccessibility is defined by the function $\indisti$ mapping $\vec{q}$ to $\indisti(\vec{q})\in \mathbb{R}_+$ explicitly written as
		\begin{equation}
			\indisti(\vec{p})={\sum_{i=1}^{D}\frac{1}{p_i^2}},
		\end{equation}
	\end{definition}

In the following we are going to see in detail how the inaccessibility for MES models is derived from the general measure of inaccessibility for probability vectors.\\

For MES model the vector $\vec{q}$ is in one-to-one correspondence with the maximal set of accessible probability distributions $\{\vec{p}^{[i]}_{\vec{q}}\}_{i=1,\dots,d+1}$ at level $d$ (see Lemma \ref{lem:MES-1-to-1}).
We want to connect the amount of inaccessibility encoded in $\vec{q}$ to the amount of inaccessibility of every distribution $\{\vec{p}^{[i]}_{\vec{q}}\}_{i=1,\dots,d+1}$. 
	That is, we want to find a family of functions $F_c:\mathbb{R}^{d+1}\to[1,d^2]$, such that 
	\begin{equation}
		F_c\left(\{\inacc_c(\vec{p}^{[i]})\}_{i=1}^{d+1}\right)=\inacc_c(\vec{q}).
	\end{equation}

		\begin{lemma}
		For a MES model of accessibility-depth $d$, for $\vec{q}$ restricted to be a probability vector we have that, for $c\in\{2,3,\dots\}$, $\uncert_c(\vec{q})$ is computed from $\{\vec{p}^{[i]}_{\vec{q}}\}_{i=1,\dots,d+1}$ using the following recursive equations
		\begin{align}
		\label{eq:indisti-SMUB}
		\uncert_c(\vec{q})&=\sqrt[1-c]{ \frac{\sum_{i=1}^{d+1}\left(\uncert_c(\vec{p}^{[i]})\right)^{1-c}-G_c(\vec{q}) }{d+2-2^{c-1}}},
		\end{align}
		where
		\begin{align}
				G_c(\vec{q})&=\begin{cases}
		g_c(\vec{q}) &  \text { if } \mod(c,2)=1 \\
			g_c(\vec{q}) +\frac{1}{2}\binom{c}{\frac{c}{2}}\sqrt[1-\frac{c}{2}]{\uncert_{\frac{c}{2}}(\vec{q})} & \text { if } \mod(c,2)=0 
		\end{cases},\\	
		g_c(\vec{q})&=\sum_{i=1}^{\lerifloor{\frac{c-1}{2}}}\binom{c}{i}\sqrt[1-i]{\uncert_i(\vec{q})}\sqrt[1-c+i]{\uncert_{c-i}(\vec{q})},
		\end{align}
		and  $\sqrt[0]{\uncert_1(\vec{q})} \coloneqq 1$.
	\end{lemma}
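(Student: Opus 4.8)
The plan is to turn the recursion into a polynomial identity among the power sums $m_r(\vec q):=\sum_{i=1}^{d^2}q_i^r$. Since $\uncert_c(\vec v)^{1-c}=\sum_i v_i^c$ by definition of the inaccessibility measure, the claimed formula is equivalent to
\[
\sum_{k=1}^{d+1} m_c(\vec p^{[k]})=(d+2-2^{c-1})\,m_c(\vec q)+G_c(\vec q),
\]
where $G_c(\vec q)$ is to be read through $m_r(\vec q)=\uncert_r(\vec q)^{1-r}$ for $r<c$ (in particular $m_1(\vec q)=\sum_i q_i=1$, which is exactly the content of the convention $\sqrt[0]{\uncert_1(\vec q)}:=1$). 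Everything then reduces to evaluating the left-hand side explicitly.

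The key structural input is Lemma~\ref{lem:MES-1-to-1}: for a MES model of prime accessibility-depth $d$, the $d^2$ atomic indices carry the structure of a resolvable $2$-$(d^2,d,1)$ design — an affine plane $AG(2,d)$, available because $d$ is prime — whose $d+1$ parallel classes are precisely the partitions defining $\vec p^{[1]},\dots,\vec p^{[d+1]}$, with each component $p^{[k]}_j=\sum_{i\in B}q_i$ for a block $B$ of the $k$-th class. Since the $d+1$ classes together list all $d(d+1)$ blocks, I would write $\sum_{k}m_c(\vec p^{[k]})=\sum_{\text{blocks }B}\bigl(\sum_{i\in B}q_i\bigr)^{c}$, expand each block sum by the multinomial theorem, and reorganise by the "type" of the exponent vector $\alpha$. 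The coefficient of a monomial $\prod_i q_i^{\alpha_i}$ is $\binom{c}{\alpha}$ times the number of blocks containing $\mathrm{supp}(\alpha)$; by the design parameters this multiplicity is $d+1$ for $|\mathrm{supp}(\alpha)|=1$ and $1$ for $|\mathrm{supp}(\alpha)|=2$. Collecting: the singleton terms give $(d+1)\,m_c(\vec q)$; the two-point terms, using $\sum_{i\ne i'}q_i^{a}q_{i'}^{c-a}=m_a(\vec q)m_{c-a}(\vec q)-m_c(\vec q)$ and $\sum_{a=1}^{c-1}\binom{c}{a}=2^{c}-2$, contribute an extra $(1-2^{c-1})\,m_c(\vec q)$ — producing the announced factor $d+2-2^{c-1}$ — together with $\tfrac12\sum_{a=1}^{c-1}\binom{c}{a}m_a(\vec q)m_{c-a}(\vec q)$; pairing $a\leftrightarrow c-a$ via $\binom{c}{a}=\binom{c}{c-a}$ turns the latter into $g_c(\vec q)$ plus, when $c$ is even, the middle term $\tfrac12\binom{c}{c/2}m_{c/2}(\vec q)^2$, i.e. exactly the two pieces of $G_c(\vec q)$. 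Substituting $m_r(\vec q)=\uncert_r(\vec q)^{1-r}$, isolating $m_c(\vec q)=\uncert_c(\vec q)^{1-c}$ and taking $(1-c)$-th powers then yields the stated recursion; a short induction on $c$ (base case $c=2$, where $G_2(\vec q)=1$ and $d+2-2^{1}=d$) shows the right side depends only on $\uncert_c(\vec p^{[k]})$ and on $\uncert_r(\vec q)$ with $r<c$, so the recursion is well posed.

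The step I expect to be the main obstacle is the multiplicity bookkeeping for exponent vectors whose support has size $\ge 3$: such monomials already appear in $\bigl(\sum_{i\in B}q_i\bigr)^{c}$ once $c\ge 3$ and $d\ge 3$, and a block containing three or more of the chosen indices contributes a term that genuinely depends on the incidence geometry of the design and is not a symmetric function of the power sums alone. The argument must therefore either restrict to the regime in which the lemma is actually used — $c=2$ (no exponent of size $\ge 3$) and $d=2$ (blocks have size two, so no such configurations occur at all), where the reorganisation above closes with no residual term — or else show that, for the specific designs produced by the construction in Lemma~\ref{lem:MESdplusone}, these higher-support contributions cancel or can be folded back into lower-order power sums; pinning this down is the delicate point of the proof.
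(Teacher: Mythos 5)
Your proposal follows essentially the same route as the paper's proof: rewrite $\uncert_c(\vec v)^{1-c}$ as the power sum $S_c(\vec v)=\sum_i v_i^c$, expand $\sum_{k=1}^{d+1}S_c(\vec p^{[k]})$ over the $d(d+1)$ blocks of the resolvable design underlying Lemma \ref{lem:MESdplusone}, and collect monomials by the size of their support, using the design multiplicities ($d+1$ blocks through a point, exactly one through a pair) to produce the coefficient $d+2-2^{c-1}$ and the convolution terms $g_c$ and $f_c$. This is exactly the computation the paper performs, stated tersely (``from the binomial theorem \dots computing explicitly the expression of $\tilde S_c$'') and followed by the list $\tilde S_2,\dots,\tilde S_6$ and the general formula \eqref{eq:Sc-recursive}.

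The obstacle you single out at the end is real, and the paper's proof does not address it. For $c\ge 3$ and $d\ge 3$ the expansion of $\bigl(\sum_{i\in B}q_i\bigr)^c$ contains monomials supported on three or more points of a block; their total contribution is a sum over collinear triples (and larger collinear configurations) weighted by multinomial coefficients, and this is not a symmetric function of the power sums $S_r(\vec q)$. A concrete check at $c=3$, $d=3$ with the uniform vector $q_i=1/9$ gives $\sum_{k=1}^{4}S_3(\vec p^{[k]})=4\cdot\tfrac19=\tfrac{36}{81}$, while the claimed identity $(d-2)S_3+3S_2S_1$ evaluates to $\tfrac{1}{81}+\tfrac{27}{81}=\tfrac{28}{81}$; the missing $\tfrac{8}{81}$ is exactly $6\sum_{\mathrm{collinear\ triples}}q_xq_yq_z$. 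So the recursion \eqref{eq:indisti-SMUB}, as stated for all $c\ge2$ and all $d$, fails; it is correct precisely in the two regimes you isolate, namely $c=2$ for arbitrary $d$ (only supports of size at most two occur) and $d=2$ for arbitrary $c$ (blocks have two points), which happen to be the only cases the paper actually relies on, since the main text works exclusively with $\inacc_2$ and the $c>2$ recursion is invoked only to argue that higher inaccessibilities are not independent of the lower ones. Restricted to those regimes your plan closes with no residual term; for general $(c,d)$ it is the lemma itself, not just the proof, that needs a correction term depending on the incidence geometry.
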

	\begin{proof}
	It is useful to define the quantity 
	\begin{equation}
		S_c(\vec{p})\coloneqq\sum_{i}p_i^c=\left(\uncert_c(\vec{p})\right)^{1-c}.
	\end{equation}
	We can rewrite
	\begin{align}
		\indisti_c(\vec{q})=\frac{1}{\sqrt[c-1]{S_c(\vec{q})}}, \label{eq:Chic-recursive}\\
		\SMUB_c(\vec{q})=\sum_{i=1}^{d+1}S_c(\vec{p}^{[i]}).
	\end{align}
	In order to prove the lemma we express $S_c(\vec{q})$ in terms of $\SMUB(\vec{q})$.
	Since every $p^{[i]}_j$ is a sum of elements of $\vec{q}$, we have from the binomial theorem, that, for \textit{integer} $c$, $\SMUB$ can be expanded as a \textit{finite} sum of products of elements of $\vec{q}$ each at powers of integers between $0$ and $c$. 
	This fact allow us, by restricting to $c\in\{0,2,3,4,\dots\}$ and computing explicitly the expression of $\tilde{S}_c$, to obtain for $\vec{q}\in \Delta_d$:
	\begin{align}
		S_1&:=1,\nonumber \\
		\SMUB_2&=dS_2+S_1, \nonumber \\
		\SMUB_3&=(d-2)S_3+3S_2S_1, \nonumber \\
		\SMUB_4&=(d-6)S_4+4S_3S_1+3S_2S_2,\nonumber\\
		\SMUB_5&= (d-14)S_5+5S_4S_1+10S_2S_3,\nonumber\\
	\SMUB_6&=(d-30)S_6+6S_5S_6+15S_4S_2+10S_3S_3,\nonumber\\
	\SMUB_c&=(d-(2^{c-1}-2))S_c+\sum_{i=1}^{\lerifloor{\frac{c-1}{2}}}\binom{c}{i}S_i S_{c-i}+\frac{f_c}{2},\label{eq:Sc-recursive}
	\end{align}
	where 
	\begin{align}
		f_c=\begin{cases}
		0 &  \text { if } \mod(c,2)=1 \\
		\binom{c}{c/2}S_{c/2} S_{c/2} & \text { if } \mod(c,2)=0 
		\end{cases},
	\end{align}

	and where $S_i,\SMUB_i$ and $f_c$ are all functions of $\vec{q}$.  
	The formulas in the lemma are obtained plugging equation \eqref{eq:Sc-recursive} into equation \eqref{eq:Chic-recursive}.
	\end{proof}	
	
	\begin{figure}[h!]
	
\begin{tikzcd}[row sep=huge, column sep=huge]
\vec{q} \notin \Delta_{d^2} \arrow[d, blue, controls={+(3.5,-0.2) and +(3.5,+0.2)},"\inacc_2"]  \arrow[r, "\Phi"] \arrow[d, red, "\times" marking]
& \{\vec{p}^{[i]}\}_{i=1}^{d+1}  \arrow[d, black,"\inacc_2" black] \\
\inacc_2(\vec{q}) 
&  \{\inacc_2(\vec{p}^{[i]})\}_{i=1}^{d+1}\arrow[l, black, "F_2" black]
\end{tikzcd}
\qquad
	\begin{tikzcd}[row sep=huge, column sep=huge]
\vec{q} \notin \Delta_{d^2} \arrow[d, blue, controls={+(3.5,-0.2) and +(3.5,+0.2)},"\inacc_3?"]  \arrow[r, "\Phi"] \arrow[d, red, "\times" marking]
& \{\vec{p}^{[i]}\}_{i=1}^{d+1} \arrow[d, black,"\inacc_2" black, xshift=-4ex]\arrow[d, black,"\inacc_3" black, xshift=+3ex] \\
\inacc_3(\vec{q}) 
&  \{\inacc_2(\vec{p}^{[i]}),\inacc_3(\vec{p}^{[i]})\}_{i=1}^{d+1}\arrow[l, black, "F_3" black]
\end{tikzcd}
\caption{To continue $\inacc_2$ over the admissible states of a MES model of accessibility-depth $d$ we just need the definition of $\inacc_2$ over the probability simplex. To continue $\inacc_3$ over the admissible states of a MES model of accessibility-depth $d$ we need the definition of $\inacc_3$ over the probability simplex, but also of $\inacc_2$.\label{fig:al-3-serve-il-2}}
\end{figure}

	So for example we have:
	\begin{align}
		\inacc_2(\vec{q})= \frac{d}{\sum_{i=1}^{d+1}\frac{1}{\inacc_2(\vec{p}^{[i]})}-1},\\
		\inacc_3(\vec{q})=\sqrt{\frac{d-2}{\sum_{i=1}^{d+1}\frac{1}{\inacc_3(\vec{p}^{[i]})^2}-	\frac{3}{\uncert_2(\vec{q})}}} \label{eq:inacc-3-recursive}.
	\end{align}
	We see that for $c=2$ the function $F_2$ that we were looking for exists, thus $\inacc_2$ can be continued over all the admissible quasi-probability vectors admitted in a MES model of accessibility-depth $d$.
	For $c>2$ we have that the function $F_c$ has to take as argument the accessibility of the associated accessible probability vectors $\{\inacc_{\tilde{c}}(\vec{p}^{[i]})\}_{i=1}^{d+1}$ computed for all  the $\tilde{c}\leq c$. 
	
	This means that for $c>2$ the inaccessibility measure $\indisti_c$ cannot be defined on quasi-probabilities without previously defining the inaccessibility measure $\indisti_{c-1}$, that in turn cannot be defined without previously defining the inaccessibility measure $\indisti_{c-2}$ and so on until one reach the definition of the inaccessibility measure $\indisti_2$.
	
	So for example, $\inacc_3(q)=F_3(\left\{\{\inacc_{2}(\vec{p}^{[i]})\}_{i=1}^{d+1},\{\inacc_{3}(\vec{p}^{[i]})\}_{i=1}^{d+1}\right\})$ as it is evident from equation \eqref{eq:inacc-3-recursive} (see Figure \ref{fig:al-3-serve-il-2}).
	
	This fact suggests a hierarchy of inaccessibilities for MES models. In fact $\inacc_{c}$ is defined for all the admissible quasi-probability vectors of a MES models only if all the $\inacc_{\tilde{c}}$ with $\tilde{c}<c$ are considered. A pictorial representation can be found in Figure \ref{fig:al-3-serve-il-2} In this work we focus on the inaccessibility $\inacc_2$ since it is the only one that can exists by its own, and in this sense it is unique\footnote{It is well known that 2-norms are "special" and somehow ubiquitous as there are many good reasons to prefer them. A beautifully written exposition of some of these reasons can be found in \cite{aaronson2004}.}.

%

\subsection{Properties of the inaccessibility}
\label{appendix:properties-inaccessibility}
 As requested, the inaccessibility measure is always lower bounded by $1$ and it is upper bounded by the dimension of the symplex. In fact, the minimum number of elements between which one can be uncertain is $1$ and the maximum number is the maximum number of elements supported by the dimension of the symplex.

The inaccessibility is neither convex nor concave, but it is instead quasi-concave over the set of states of a MES model, as proven in the next lemma. 


\begin{lemma}
\label{lem:quasi-concave}
	Given two states $\vec{p}$,$\vec{q}$  belonging to $\text{MES}_d$ and such that $\indisti_c(\vec{p})\geq d$ and $\indisti_c(\vec{q})\geq d$ with $c \in\{2,4,6,\dots\}$, then one has that for $\lambda \in [0,1]$
	\begin{equation}
		\chi_c\left(\lambda\vec{p}+(1-\lambda)\vec{q}\right)\geq \min \left(\chi_{c}(\vec{p}),\chi_{c}(\vec{q})\right).
	\end{equation}
	The inaccessibility is a quasi-concave function over the set of states of a MES model.
\end{lemma}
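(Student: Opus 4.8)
The plan is to reduce the quasi-concavity of $\indisti_c$ to the ordinary convexity of the auxiliary map $S_c(\vec{p}):=\sum_{i=1}^{d^2}p_i^{\,c}$. First I would rewrite the inaccessibility constraint in a convenient form: since $c>1$, the map $x\mapsto x^{-1/(c-1)}$ is strictly decreasing on $(0,\infty)$, so for any $t>0$ one has
\begin{equation}
\indisti_c(\vec{p})=\bigl(S_c(\vec{p})\bigr)^{-1/(c-1)}\geq t \iff S_c(\vec{p})\leq t^{\,1-c}.
\end{equation}
(For even $c$ every term $p_i^{\,c}$ is nonnegative and a state cannot be identically zero since its entries sum to $1$, so $S_c(\vec{p})>0$ and $\indisti_c$ is well defined and positive on all of $\mathrm{MES}_d$.) So the superlevel sets of $\indisti_c$ are exactly the sublevel sets of $S_c$, and quasi-concavity of $\indisti_c$ is equivalent to convexity of those sublevel sets.

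The key structural input, and the point where the hypothesis $c\in\{2,4,6,\dots\}$ is genuinely used, is that for even $c$ the single-variable function $t\mapsto t^{\,c}$ is convex on all of $\mathbb{R}$ — its second derivative $c(c-1)t^{\,c-2}$ is nonnegative everywhere because $c-2$ is even — not merely on $[0,\infty)$. Hence $S_c$, being a finite sum of convex functions of the individual coordinates, is convex on $\mathbb{R}^{d^2}$. I would flag explicitly that this is exactly why one cannot get away with convexity of $t\mapsto t^c$ on the nonnegative axis alone: the states $\vec{q}$ of a MES model honestly take negative values, so the parity of $c$ is doing real work; for odd $c$ the argument collapses on the negative orthant.

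The inequality then follows by a one-line computation. Set $m:=\min\bigl(\indisti_c(\vec{p}),\indisti_c(\vec{q})\bigr)$; by the equivalence above, $S_c(\vec{p})\leq m^{1-c}$ and $S_c(\vec{q})\leq m^{1-c}$. For $\lambda\in[0,1]$, convexity of $S_c$ gives
\begin{equation}
S_c\bigl(\lambda\vec{p}+(1-\lambda)\vec{q}\bigr)\leq \lambda S_c(\vec{p})+(1-\lambda)S_c(\vec{q})\leq m^{1-c},
\end{equation}
and applying the decreasing map $x\mapsto x^{-1/(c-1)}$ yields $\indisti_c\bigl(\lambda\vec{p}+(1-\lambda)\vec{q}\bigr)\geq m$, which is the claimed inequality. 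Finally, to justify the closing sentence that $\indisti_c$ is quasi-concave \emph{over the set of states of a MES model}, I would note that $\mathrm{MES}_d$ is itself convex — it is the intersection of the hyperplane $\sum_i q_i=1$ with finitely many slabs $p^{[k]}_j\in[0,1]$, each cut out by a linear functional of $\vec{q}$ — so $\lambda\vec{p}+(1-\lambda)\vec{q}\in\mathrm{MES}_d$ whenever $\vec{p},\vec{q}\in\mathrm{MES}_d$, and the displayed inequality is precisely quasi-concavity on that domain. I do not expect any serious obstacle here: the whole argument is routine convex analysis, and the only thing requiring care is the bookkeeping around negative entries and the (strictly decreasing) reparametrisation $x\mapsto x^{-1/(c-1)}$.
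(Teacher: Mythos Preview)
Your proof is correct and follows essentially the same route as the paper's: show that $\indisti_c$ is a strictly decreasing function of a convex quantity, and read off quasi-concavity. The paper phrases the convexity step via the triangle inequality for the $c$-norm, writing $\indisti_c(\vec{q})=\bigl(\|\vec{q}\|_c\bigr)^{c/(1-c)}$ and invoking $\|\lambda\vec{p}+(1-\lambda)\vec{q}\|_c\le\lambda\|\vec{p}\|_c+(1-\lambda)\|\vec{q}\|_c$, whereas you use convexity of $S_c(\vec{q})=\sum_i q_i^{\,c}$ directly; for even $c$ these coincide since $q_i^{\,c}=|q_i|^{\,c}$ and hence $S_c=\|\cdot\|_c^{\,c}$, so the two arguments are minor reparametrisations of one another. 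Your version is a touch more self-contained in that it makes explicit why the parity of $c$ is needed (convexity of $t\mapsto t^c$ on all of $\mathbb{R}$, not just $[0,\infty)$, to accommodate the negative entries of MES states) and checks that $\mathrm{MES}_d$ is convex, points the paper leaves implicit.
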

\begin{proof}
	Without loss of generality consider $\indisti_c(\vec{p})\leq \indisti_c(\vec{q})$.
	Using the fact that $\mathcal{X}_c(\vec{q})=\left(\|\vec{q}\|_c\right)^{\frac{c}{1-c}}$ we have that $\|\vec{q}\|_{c}\leq\|\vec{p}\|_{c}$ and
	\begin{align}
		\chi_c\left(\lambda\vec{p}+(1-\lambda)\vec{q}\right)&=\left(\|\lambda\vec{p}+(1-\lambda)\vec{q}\|_c\right)^{\frac{c}{1-c}}\geq \nonumber\\
		&\geq \left(\lambda\|\vec{p}\|_c+(1-\lambda)\|\vec{q}\|_c\right)^{\frac{c}{1-c}}\geq \nonumber\\
		 & \geq \left(\|\vec{p}\|_c\right)^{\frac{c}{1-c}}=\indisti_c(\vec{p}).
	\end{align}
	
\end{proof}

Furthermore, there is always a neighbourhood of $\mus{d}$ where $\uncert_c$ is concave for every admitted $c$.

\subsection{Relation between the inaccessibility measure and other information measures}
\label{appendix:comparisons-info-measure}
\begin{figure}[h!]
	\centering
	\includegraphics[width=0.45\linewidth]{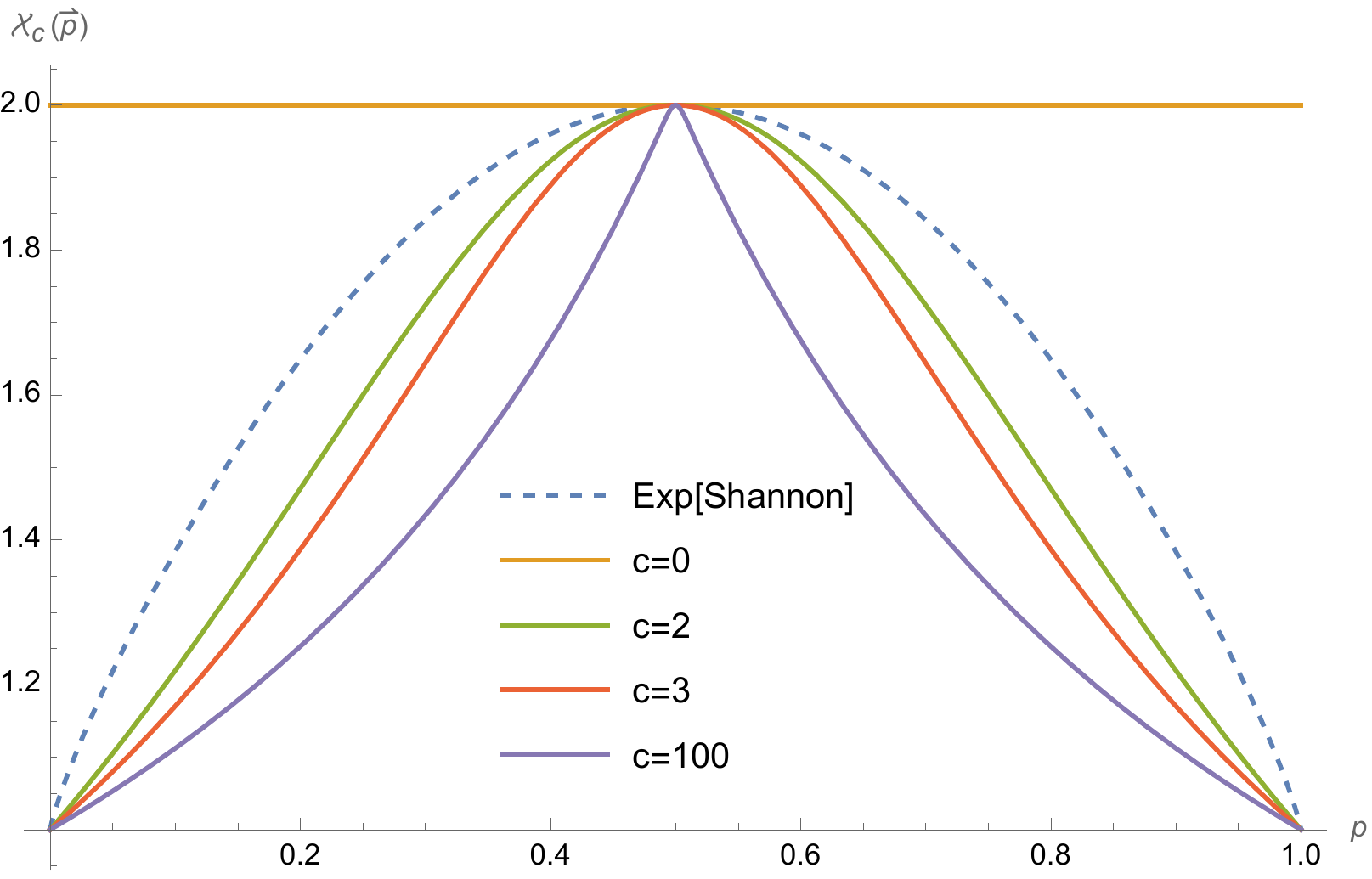}
	\caption{Inaccessibility $\uncert_c(\vec{p})$, where $\vec{p}=(p,1-p)$.}
	\label{fig:Distinguishability}
\end{figure}

In figure \ref{fig:Distinguishability} we plot the behaviour of the inaccessibility measure for different values of the parameter $c$.

The main difference between $\uncert$ and measures of entropy relies in request $(4)$. In fact measures of entropy are usually requested to be additive. 
We can relate $\uncert$ to Shannon Entropy through the relation $\log(\lim_{c\to 1} \uncert_c)$, to Reny Entropies through the relation $\log(S_{\alpha})$, and  to Tsallis Entropies through the relation $1-(S_{\alpha})^{1-c}$.


In the literature the inaccessibility measure $\uncert$ has been used in different contexts, being called Hill's number or diversity index \cite{hill1973,pellens2016}. In the quantum literature a concept analogous to $\uncert_2$ is the effective dimension \cite{popescu2006,dunlop2021} or the operationally invariant information \cite{brukner1999}.  To the best of the author's knowledge a previous derivation of $\uncert$ from logical requests as the one of appendix \ref{appendix:distinguishability} has not been known and its use as Hill's number or diversity index has been justified only heuristically.

\section{Curiosities}


\paragraph{Why just ideal configurations?}{}

Of course one can study any possible configuration, but we decided to focus on ideal configurations. We did so for two main reasons
\begin{enumerate}
\item We want to maximise the amount of experimentally relevant information;
\item Consider the configuration of the model $(4,1)$ of Figure \ref{fig:imbalance}. In this model there is imbalance between the “absolute” knowledge contained in the atomic statements of the classical analog. In fact, the knowledge of $c$ and $d$ is more fundamental of the knowledge of $a\vee b$ in the inaccessible model. This differentiation is lost once moving to the classical model.
To maintain this “symmetry” between classical models and their inflations to inaccessible information models we use the accessibility-depth $d$ of a system. Thus when inflating a classical model to a model of inaccessible information we do not assign different “absolute” knowledge to different statements.  
\end{enumerate}
\begin{figure}[h!]
	\centering
	\includegraphics[width=0.5\linewidth]{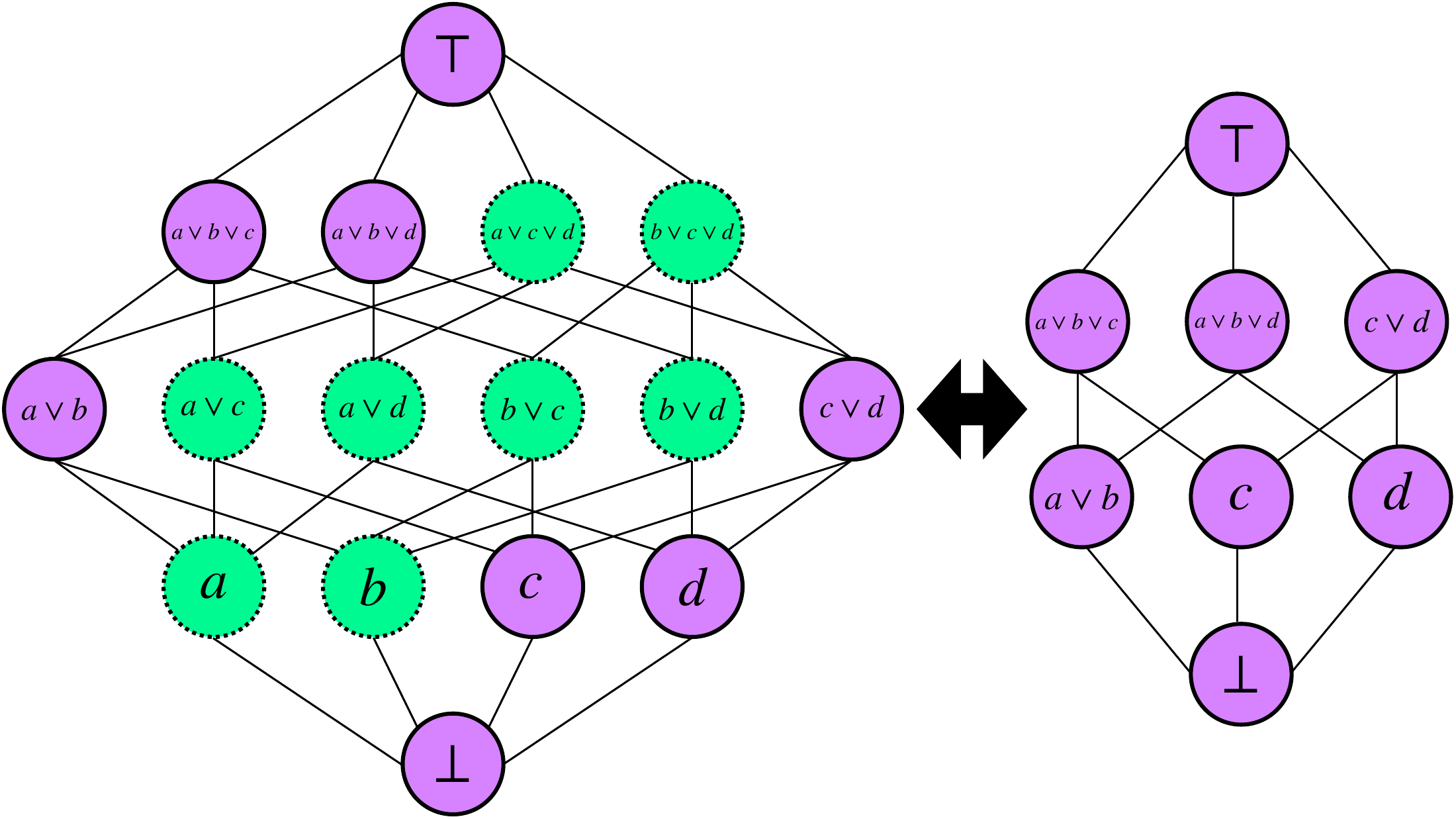}
	\caption{Asymmetric inflation with a non-ideal configuration of a $(4,1)$ model.}
	\label{fig:imbalance}
\end{figure}	

\paragraph{Why not complex quasi-probabilities?}
In the derivation of quasi-probabilities of appendix \ref{appendix:quasi-prob-derivation}, there are no apparent reasons to restrict the codomain of function $\mathcal{R}$  to the real numbers, one could instead consider a function $\mathcal{R}$ with codomain in $\mathbb{C}$. Doing so, one would obtain a function $Q$ that behaves exactly as the one we derived in appendix \ref{appendix:quasi-prob-derivation}, but that associates to each statement of the lattice a complex number. In this case the state $\vec{q}$ of a MES model would be a complex vector. Why we ruled out this possibility? For a MES model with accessibility-depth $d$ we want for the $d(d+1)$ accessible statements at level $d$ to be real numbers in the interval $[0,1]$. This means that the imaginary part of the atomic statements must sum to \textit{exactly} zero for each accessible statement. If we call $\mathfrak{q}_i=\mbox{Im}(q_i)$ the imaginary component of each element of the state vector $\vec{q}$, we obtain, considering the $d(d+1)$ accessible statements at level $d$, $d(d+1)$ constraints of the form $\mathfrak{q}_{i_1}+\mathfrak{q}_{i_2}+\dots+\mathfrak{q}_{i_d}=0$, where $\{i_1,\dots,i_d\}$ is a set of indices forming an accessible statement. Thus the imaginary part has to vanish.

 \end{appendix}

\end{document}